\pgfplotsset{compat=1.15}
\numberwithin{equation}{section}
\providecommand{\mr}[1]{\href{https://www.ams.org/mathscinet-getitem?mr=#1}{MR~#1}}
\providecommand{\arxiv}[1]{\href{https://arxiv.org/abs/#1}{arXiv:#1}}
\newcommand{\RR}{\mathbb{R}}
\renewcommand{\P}{\mathbb{P}}
\DeclareMathOperator{\E}{\mathbb{E}}
\DeclareMathOperator{\R}{\mathbb{R}}
\DeclareMathOperator{\T}{\mathbb{T}}
\DeclareMathOperator{\N}{\mathbb{N}}
\DeclareMathOperator{\Z}{\mathbb{Z}}
\theoremstyle{plain}
\newtheorem{theorem}{Theorem}[section]
\newtheorem{proposition}[theorem]{Proposition}
\newtheorem{lemma}[theorem]{Lemma}
\theoremstyle{definition}
\newtheorem{definition}[theorem]{Definition}
\newtheorem{remark}[theorem]{Remark}
\begin{document}
\title[Large scale regularity of random curves in the plane]{Large scale regularity and correlation length for almost length-minimizing random curves in the plane}

\author[T.~Ried]{Tobias Ried}
\address[T.~Ried]{School of Mathematics, Georgia Institute of Technology, Atlanta, GA 30332-0160, United States of America} 
\email{tobias.ried@gatech.edu}

\author[C.~Wagner]{Christian Wagner}
\address[C.~Wagner]{Max-Planck-Institut für Mathematik in den Naturwissenschaften, Inselstraße 22, 04103 Leipzig, Germany}
\curraddr{Institute for Science and Technology Austria (ISTA), Am Campus 1, 3400 Klosterneuburg, Austria}
\email{christian.wagner@ist.ac.at}

\date{Jun 24, 2026.}
\subjclass[2020]{82B44; 49Q05, 60K35}
\keywords{random curves, regularity theory, Random Field Ising Model, chaining method}
\thanks{\textcopyright 2025 by the authors. Faithful reproduction of this article, in its entirety, by any means is permitted for noncommercial purposes.}
\begin{abstract}
We consider a model of random curves in the plane related to the large-scale behavior of the Random Field Ising Model (RFIM) at temperature zero in two space dimensions. 
Our work is motivated by attempts to quantify the Imry--Ma phenomenon concerning the rounding of the phase transition by quenched disorder, and connects to recent advances regarding the decay of correlations in the RFIM. 

We study a continuum model of minimal surfaces in two space dimensions subject to an external, quenched random field, and restrict ourselves to isotropic surface integrands. The random fields we consider behave like white noise on large scales with an ultra-violet regularization reminiscent of the lattice structure of the RFIM.

We give a finer description of the minimizer below the length scale $ L_* $ starting from which the influence of boundary conditions is suppressed with a given probability, which has recently been shown to satisfy $ \log L_* \sim \varepsilon^{-\frac{4}{3}} $ in the amplitude $\varepsilon>0$ of the noise. 
More precisely, we prove flatness of the phase boundaries on scales $ L $ up to $ \log L \lesssim \varepsilon^{-\frac{4}{13}} $.
\end{abstract}

\maketitle

\section{Introduction}

It is a classical prediction by Imry and Ma \cite{IM75} that the Random Field Ising Model (RFIM) in the critical case of two space dimensions, where the boundary effects and fluctuations of the random field are of the same order, does not exhibit long-range order. This effect is due to the quenched disorder and it persists even without thermal fluctuations. The absence of a phase transition was rigorously established in \cite{AW90} for all temperatures, including the case of zero temperature. We will focus on the temperature zero case from now on and take a variational point of view: The (random) Gibbs measures are concentrated on local minimizers of the (random) Hamiltonian, and the results of \cite{AW90} show that there is an almost surely unique infinite volume minimizer.

The quantification of the Imry-Ma phenomenon has attracted the attention of the probability and statistical mechanics communities in recent years: it needs the collective influence from disorder on a large set to shield off the influence of the boundary data. This raises the question of how fast the boundary influence decays, which was first addressed in \cite{Chatterjee18}, and culminated in the works \cite{DX19,APH}. In the latter they establish exponential decay of correlations for the RFIM with independent Gaussian disorder. Following up on this, \cite{DW} addressed the correlation length for small amplitude noise. More precisely, they establish that the length scale $ L_* $ starting from which the influence of boundary conditions on the the spin at the origin is suppressed with a given probability is of order\footnote{Here and in the following we use the notation $ A \lesssim B $ whenever there exists a universal constant $ 0 < c < \infty $ such that $ A \leq c B $. If $ c = c ( \theta ) $ depends on some parameter $ \theta $, we write $ A \lesssim_{ \theta } B $. The symbol $ \gtrsim $ is accordingly used for lower bounds. We write $ A \sim B $ provided $ A \lesssim B $ and $ B \lesssim A $ hold.}
 $ \log L_* \sim \varepsilon^{ - \frac{4}{3} } $ in the amplitude $ \varepsilon > 0 $ of the noise, see \eqref{eqn:in-dw-bound} for the precise statement.\footnote{For a more detailed account of the literature we refer to Section \ref{sec:literature} below.}

In this article we aim to give a finer description of the minimizer below this length scale based on a continuum description of the model, giving us access to the regularity theory for minimal surfaces. For the moment we study this problem for an \emph{isotropic} energy rather than the RFIM itself.\footnote{See the discussion around equation \eqref{eqn:anisotropic} for the connection with the RFIM and potential generalizations of our main result to (slightly) anisotropic integrands.} More precisely, we are interested in the large-scale behavior of local, i.e.~subject to their own boundary conditions, minimizers of the energy functional
\begin{align}\label{eqn:in-energy}
E(m,U) = \int_{ \overline{U} } | \nabla m | - \varepsilon \int_{U} \xi m,
\end{align}
where $ m : \R^2 \rightarrow \{ -1, +1 \} $ is a function of (locally) bounded variation and $ U \subset \R^2 $. We normalize the noise such that $ \E \xi(x)^2 = 1 $, so that $ \varepsilon $ acts as the amplitude or variance of the noise. Furthermore, we assume that the noise $ \xi $ is a centered Gaussian field subject to a suitable ultra-violet (UV) cut-off. In view of the underlying unit size lattice in the RFIM, we choose to impose this regularization on scales of order one, c.f.~Section~\ref{sec:assumptions-on-noise} below for the precise assumptions. In particular, the energy \eqref{eqn:in-energy} is almost surely well-defined and admits well-defined local minimizers, see Definition~\ref{defn:in-energy-minimizer} for details. Note that due to the UV cut-off, we do not expect a universal behavior of minimizers on  scales smaller than the cut-off scale. Nevertheless, it is convenient to have access to small-scale regularity, which in particular implies that the jump set, i.e.\ the curve along which $m$ changes from $-1$ to $+1$, is regular, see Remark~\ref{rem:gmt} below.

Let us stress again that the scaling of the first term (the boundary integral) and the fluctuations of the second term (the random field) are of the same order exactly in dimension two, which is the focus of this work. From a statistical physics point of view, the addition of the random field is a substantial perturbation of the model with $ \varepsilon = 0 $: while not exactly in the framework of \cite{AW90}, a similar argument\footnote{We decided to not include the detailed proof in this article because our main result here is independent of almost sure uniqueness of the infinite-volume minimizer.} shows the almost sure uniqueness of minimizers for arbitrary $ \varepsilon > 0 $. Nonetheless, in the regime we are looking at, the functional \eqref{eqn:in-energy} may also be viewed as a (random) perturbation of the perimeter problem. In fact, for such length scales minimizers of \eqref{eqn:in-energy} are \emph{almost minimizers} of the perimeter problem, see Definition \ref{def:eta-minimizer} and Proposition \ref{proposition:eta-minimality}. 
From an analytic point of view, this suggests that minimizers inherit regularity, i.e. flatness, from the perimeter problem. 
A natural way of measuring this flatness is by looking at the variation of the normal $\nu(x)$ as $x$ varies over the jump set of a minimizer $m$. Due to the UV regularization, the normal is well-defined at every point in the jump set of $m$. However, let us emphasize that this does not capture the effective large-scale behavior of the model \eqref{eqn:in-energy}, since it relies on the non-universal (i.e., depending on the UV regularization) small-scale behavior. 

In order to obtain a universal large-scale regularity result, we introduce the notion of an average (on scales of order one) normal $ \bar \nu_{1} $ to the jump set of the minimizer: to this end, let $x$ be in the jump set of $m$ and define for $\int_{ B_1(x) } \nabla m \neq 0$, 
\begin{align*}
	\bar \nu_1 ( x ) \coloneq \left| \int_{ B_1(x) } \nabla m \, \right|^{-1} \, \int_{ B_1(x) } \nabla m .
\end{align*}
The integral has the geometric interpretation as integral average of the interior normal (pointing from $\{m=-1\}$ to $\{m=1\}$) along the jump set.
More importantly, the average normal is the best approximation of the classical normal on scales of order one, that is, a minimizer of 
	\begin{align}\label{eqn:average-normal}
	\min_{ \bar \nu \in \mathbb{S}^1 } \frac{1}{2} \int_{ B_1(x) } | \nu - \bar \nu |^2 | \nabla m | = \min_{ \bar \nu \in \mathbb{S}^1 } \bigg( \int_{ B_1(x) } | \nabla m | - \bar \nu \cdot  \int_{ B_1(x) } \nabla m \bigg),
	\end{align}
which is a natural quantity in regularity theory, called the \emph{excess}. 
While minimizers of this variational problem may not be unique (a typical example is provided by bubbles on scales less than one), it turns out that in the situation considered here, this is always the case. 
We prefer to work with the latter characterization of the average normal, because it provides a robust definition that is independent of the small-scale regularization.

In this language our main theorem is:

\begin{theorem}\label{thm:mr}
There exists a (universal) constant $ 0 < \delta < 1 $ such that for any $ 0 < p_0 < 1 $ there exist constants $ 0 < c_{  \rm exp } < 1 $ and $ C_{  \rm cont } < \infty $ such that with probability at least $ p_0 $ the following holds:

Let $ m $ be a local minimizer of the energy \eqref{eqn:in-energy}. If $ R \leq \exp ( c_{  \rm exp } \varepsilon^{ - \frac{4}{13} } ) $, and $ x, y \in B_{ \delta R } $ are contained in the jump set of $ m $, then the average normals $\bar\nu_1(x)$ and $\bar\nu_1(y)$ are well-defined and we have
\begin{align}\label{eqn:main}
| \bar \nu_{ 1 }( x ) - \bar \nu_{ 1 } ( y )  |
\leq C_{ \rm cont } \left( \varepsilon ( \log ( | x |_{+} + | y |_{ + } ) )^{ \frac{1}{2} } ( \log | x - y |_{+} )^{ \frac{11}{4} } \right)^{ \frac{1}{2} },
\end{align}
where\footnote{Of course, there some arbitrariness in the choice of $ 2 $; every number strictly larger than $ 1 $ would work.} $ | x |_{+} \coloneq \max \{ | x |, 2 \} $.
\end{theorem}

\medskip 

	Our main theorem can be interpreted in the following way: with overwhelming probability, the phase boundaries in a minimizing configuration are flat on scales with logarithm of order $ \varepsilon^{-\frac{4}{13}} $. The flatness is measured by comparing the average normals (in a version that does not depend on the underlying regularization of the noise) at two points $x,y$ in the jump set of the minimizer that have a distance $ \log |x-y| \ll \varepsilon^{-\frac{4}{13}} $, and shows that 
	\begin{align*}
		|\bar\nu_1(x) - \bar\nu_1(y)| \ll 1 ,
	\end{align*}
	i.e.~the oscillation of the average normal along the jump set of the minimizer $m$ is small. This extends the classical (but not universal) small scale regularity, which shows that the normal does not oscillate much on scales of order $ \ll 1 $.
	
	Our main result corresponds to a regularity theory under minimal assumptions and connects to the geometric notion of Reifenberg flatness of the phase boundary, see also \cite[Corollary 1.5]{GNR}. The idea of Reifenberg flatness, see \cite{Toro} for an overview of this concept and its connection to the geometry of measures, is also reflected in our proof strategy, which approximates the jump set of $m$ locally by half-spaces.

	Moreover, our work draws inspiration from classic large-scale regularity in stochastic homogenization, see e.g.~\cite{ArmstrongSmart,gloria_neukamm_otto}. Let us remark that there is an important conceptual difference to these works: In stochastic homogenization regularity emerges down from infinity to some critical length scale, since on large scales the problem is close to the homogenized problem, which admits a good regularity theory.\footnote{For a more in-depth discussion, we refer to \cite{gloria_neukamm_otto} and the monographs \cite{ArmstrongKuusiMourrat,armstrong_kuusi_book}.} In the work at hand, we understand large-scale regularity in the sense that regularity beyond scales of order one persists up to a critical length scale $ \log | x - y | \ll \varepsilon^{ - \frac{ 4 }{ 13 } } $.

	Note that the scale up to which our result holds is much smaller than the correlation length scale determined by Ding--Wirth \cite[Theorem 1.1]{DW}. We believe that the exponent $\frac{4}{13}$ in Theorem~\ref{thm:mr} is not optimal: 

	The factor $(\log R)^{\frac{1}{2}}$ in \eqref{eqn:main} is an artefact of our proof strategy\footnote{More specifically, the spatial pinning in Proposition~\ref{prop:concentration-in-space}.}. Moreover, it is not clear whether the power $\frac{11}{4}$ of the second logarithm is sharp: the core of our current proof strategy is an iteration argument over dyadic scales in combination with a height bound, which is sharp from a deterministic point of view, but may not be optimal for a (random) minimizer due to stochastic cancellations. This is also hinted at in the more recent works\footnote{Both articles were posted on the arXiv while this paper was under review} \cite[Corollary 2]{OPW26} and \cite[Theorem 1.2]{BEPx+1} on the geometrically linearized version of \eqref{eqn:in-energy}. Since our main result implies flatness on scales $ \log | x - y | \ll \varepsilon^{ - \frac{ 4 }{ 13 } } $, a natural step in the analysis of regularity properties of the jump set is to incorporate this refined analysis of the geometrically linearized problem. However, the state-of-the-art exponent $ \frac{ 4 }{ 13 } $ obtained here is already enough to yield interesting connections to the random matching problem, as recently announced in \cite{OP26}.

	At this point, the reader may wonder how the result in Theorem~\ref{thm:mr} depends on the probability $ p_0 $. In view of the above discussion, we believe that the question of finding an optimal exponent for our flatness length-scale is the bigger challenge at the moment; therefore, we did not attempt to optimize our method of proof in that regard. We give a hint at this dependence in Remark~\ref{rmk:dependence-on-p0} after the proof of Theorem~\ref{thm:mr}.

\medskip

As alluded to earlier, our work is  motivated by recent developments in the RFIM. Its Hamiltonian on a box $ \Lambda_L \subset \Z^2 $ of size $ L $ with boundary conditions $ \tau : \Lambda_L^c \to \{ -1, + 1\} $ is given by
\begin{align}\label{eqn:in-rfim}
H( \sigma, \Lambda_L ) = \sum_{ \substack{ x,y \in \Lambda_L \\ x \sim y } } | \sigma_x - \sigma_y |^2 + \sum_{ \substack{ x \in \Lambda_L, y \in \Lambda_L^c \\ x \sim y } } | \sigma_x - \tau_y |^2- \varepsilon \sum_{ x \in \Lambda_L } \xi_x \sigma_x
~~ { \rm for } ~~ \sigma : \Lambda_L \to \{ -1, + 1 \}.
\end{align}
To put our work into this context, let us recall the definition of the order parameter
\begin{align}\label{eqn:in-magnetization-rfim}
m( L ) = \frac{1}{2} \E[ \sigma^{ L, + }_0 - \sigma^{ L, - }_0 ] = \P \{ \sigma^{ L, + }_0 \neq \sigma^{ L, - }_0 \},
\end{align}
where $ \sigma^{ L, \pm } $ denotes the value of the ground state of \eqref{eqn:in-rfim} with boundary conditions $ \tau \equiv \pm 1 $ and $ \E $ is the quenched expectation, i.e.~w.r.t.~to the noise $ \{ \xi_x \}_{ x \in \Z^2 } $. Throughout this introduction, we will assume that $ \{ \xi_x \}_{ x \in \Z^2 } $ are i.i.d.~Gaussian random variables (discrete Gaussian white noise).

Uniqueness of the infinite volume ground state is implied by $ m(L) \rightarrow 0 $ as $ L \rightarrow \infty $, see \cite{AW90}. In fact, precise decay estimates are known: It was shown in \cite{APH,DX19} that $ m(L) $ decays exponentially,
\begin{align*}
m(L) \lesssim \exp( - c L ) \quad { \rm for ~ some ~ constant } ~ 0 < c < \infty.
\end{align*}
A more detailed account on the literature is given in Section \ref{sec:literature}. A natural question, which has not been fully understood yet, is at which length scale $ L $ one starts to observe the exponential decay. First bounds were proven in \cite[see $ \zeta_1 $ therein]{Bar-Nir}. However, the exact scaling remains to be understood, see \cite[Remark 1.5]{DW}.

A second notion of correlation length, which is better understood, asks for the order parameter $ m(L) $ to drop below a given threshold. Let us denote this length scale by $ L_* $. It has the natural interpretation that with overwhelming probability the value at the origin of any ground state configuration of \eqref{eqn:in-rfim} for $ L \geq L_* $ does not depend on the boundary condition $ \tau $.\footnote{This is a well-known consequence of the classical FKG inequality, which ensures that checking  $ \sigma^{ L, + }_0 = \sigma^{ L, - }_0  $ for the extremal boundary conditions $ \tau \equiv \pm 1 $ suffices to conclude $ \sigma^{ L, \tau } = \sigma^{ L, + }_0 = \sigma^{ L, - }_0 $ for any boundary condition $ \tau $, see e.g.~\cite[Section 2.2]{AP} for a recent account on this.} First bounds on that quantity were obtained in \cite[see $ \zeta_2 $ therein]{Bar-Nir}, while the optimal asymptotic behavior was obtained in \cite{DW}, where they prove that 
\begin{align}\label{eqn:in-dw-bound}
\exp ( c^{-1}  \varepsilon^{ -\frac{4}{3} } ) \leq L_* \leq \exp ( c  \varepsilon^{ -\frac{4}{3} } ) 
\quad { \rm for ~ some ~ constant } ~ 0 < c < \infty ~ { \rm independent ~ of } ~ \varepsilon.
\end{align}
Our motivation is to give a finer description of the infinite volume ground state on scales below $ L_* $. More precisely, rather than monitoring the spin at a single point, we are interested in describing the surface that is defined by the boundary between the $ +1 $ and $ -1 $ phase.

\medskip

Right now, we can only answer this question for the model \eqref{eqn:in-energy}. The crucial difference to the RFIM \eqref{eqn:in-rfim} lies in the isotropic surface integrand. Indeed, on length scales $ L \leq L_* $, we can view the energy \eqref{eqn:in-energy} as a perturbation of the perimeter. This suggests that minimizers of \eqref{eqn:in-energy} inherit some regularity from perimeter minimizers corresponding to $ \varepsilon = 0 $ in \eqref{eqn:in-energy}. In two dimensions these are known to be line arrangements.\footnote{The crux is to establish this under minimal regularity assumptions, see e.g.~\cite[Theorem 10.1]{FrankMorgan}.} Therefore, it is natural to measure regularity in terms of a modulus of continuity for the normal to the phase boundary. As explained above, the large-scale regularity statement in Theorem \ref{thm:mr} is phrased in terms of the average normal.

Apart from the difference in surface integrands, our Theorem \ref{thm:mr} is phrased in the same spirit as the result of \cite{DW}: With overwhelming probability, the behavior of the model is well-described up to a given length that grows exponentially as the noise strength becomes small. 
While the result of \cite{DW} is valid up to the larger length scale with logarithm of order $ \varepsilon^{-\frac{4}{3}} $, our Theorem~\ref{thm:mr} provides a finer description of the minimizer on smaller length scales with logarithm of order $ \varepsilon^{-\frac{4}{13}} $ as $\varepsilon \downarrow 0$.
The theorem also reveals one of the main benefits of working in a continuum setting: it allows us to study a finer notion of correlation length in the system related to geometric properties of the minimizer. This correlation length is given by the scale at which the phase interfaces cease to be nearly flat. Indeed, given two points $x, y$ in the jump set of a minimizing configuration $m$ with $|x-y| \gg 1$, then \eqref{eqn:main} implies that 
\begin{align*}
	\log \max \{ | x |_{ + } , | y |_{ + } \} \ll \varepsilon^{ - \frac{ 4 }{ 13 } } 
	\quad \Longrightarrow \quad
	| \bar\nu_1 ( x ) - \bar\nu_1 ( y ) | \ll 1,
\end{align*}
that is, the logarithm of this correlation length is at least of order $\varepsilon^{-\frac{4}{13}}$. 

It is natural to wonder whether our results can be generalized to anisotropic surface integrands, including the continuum limit (as the lattice spacing goes to zero) of the RFIM, corresponding to
\begin{align}\label{eqn:anisotropic}
	E_{RFIM}(m,U) = \int_{\overline{U}} \Phi(\nu) |\nabla m| - \varepsilon\int_{U} \xi m 
\end{align}
with $\Phi(\nu) = |\nu_1| + |\nu_2|$, see \cite[Theorem 4]{ABC06}. The treatment of anisotropic surface energies is a much more delicate question, see e.g.~\cite{TioneInvent} for recent advances and overview of the current state regarding the regularity theory of anisotropic minimal surfaces.
	
From a point of view of regularity theory for minimal surfaces and perturbations thereof, we profit from the fact that for the Imry-Ma phenomenon two dimensions are critical. For curves in two dimensional space, i.e.~surfaces of dimension and co-dimension one, the regularity theory is more explicit, which allows us to work with  ``logarithmic perturbations'' of minimal surfaces, see Proposition \ref{proposition:eta-minimality} for the precise statement. 
This turns out to be the right notion in view of the stochastic estimates on the noise (Proposition~\ref{thm:talagrand-bound}), which are very robust and originate from the discrete setting \cite{DW}.

Appealing to \cite{GNR}, we could have formulated our result for slightly anisotropic, but still uniformly elliptic surface integrands $\Phi$ in \eqref{eqn:anisotropic}. However, the anisotropy $\Phi(\nu) = |\nu_1| + |\nu_2|$ corresponding to the RFIM does not satisfy the strong uniform ellipticity and regularity assumptions of \cite[Definition 2.1]{GNR}. In fact, this makes it a very difficult problem from the point of view of (deterministic) minimal surface theory and currently seems to put an analogous result to our Theorem~\ref{thm:mr} for the energy \eqref{eqn:anisotropic} out of reach.

A detailed account on the  regularity theory for \eqref{eqn:in-energy} \emph{under minimal decay assumptions} is given in Section~\ref{sec:literature} below.

\subsection{Assumptions on the noise (UV cut-off)}\label{sec:assumptions-on-noise} 
Our main Theorem~\ref{thm:mr} covers the following two important model classes of random fields $\xi$, where for concreteness the correlation length is set to unity. 

\medskip

\paragraph{\textit{Regularized white noise}.} The first type of noise is regularized white noise, which we assume to be a stationary Gaussian noise with covariance $ c(x-y) = \E \xi(x) \xi(y) $. The regularization is conveniently imposed in Fourier-space via a UV cut-off, i.e.\
\begin{align*}
	\widehat{c} ( k ) = \frac{1}{2\pi} \boldsymbol{1}_{ \{ | k | \leq \sqrt{4\pi} \} } ( k ) .
\end{align*}
Note that since $ \E \xi(x)^2 = \frac{1}{2\pi} \int \widehat{c}( k ) dk $, we have $ \E \xi(x)^2 = 1 $.

\medskip

\paragraph{\textit{Discretized white noise.}} The second type of noise that we want to consider is a discretized white noise, which is a centered Gaussian noise that is stationary w.r.t.~$ \Z^2 $-shifts; more precisely,
\begin{align*}
	\xi = \sum_{ x \in \Z^2 } \xi_x \boldsymbol{1}_{ Q_{ 1 } ( x ) },
\end{align*}
where $ \{ \xi_x \}_{ x \in \Z^2 } $ are i.i.d.~Gaussians with mean zero and variance one, and, for $ R > 0 $, $ Q_{ R } ( x ) \subset \R^2 $ is an axis-parallel cube of side length $ R $ centered around $ x $. We will also use the notation $Q_R \coloneq Q_R(0)$.

\medskip

In Section \ref{sec:pf-concentration-arguments}, more precisely in Lemmas~\ref{lemma:cameron-martin-continuity} and \ref{lem:wn-sup-bound}, we will see that the common properties of these two centered Gaussian noises are
\begin{align}\label{eqn:noise-assumption}
	\sup_{ Q_1(x) } | \xi | < \infty ~ { \rm almost ~ surely, }
	\quad { \rm and } \quad
	\int \delta \xi(x) f(x) dx \lesssim | \delta \xi |_{ H } \Big( \int f^2 \Big)^{ \frac{1}{2} }
\end{align}
for every Cameron-Martin shift $ \delta \xi $ and Schwartz function $ f $, see Lemma \ref{lemma:cameron-martin-continuity} for details. Note that the former property guarantees that \eqref{eqn:in-energy} is well-defined in a pathwise sense. The latter is a convenient way to state the UV regularization in both models (encoded in their Cameron-Martin space $H$).

\subsection{Definition of local minimizers and their small-scale regularity.}

Our main result is concerned with the regularity of local minimizers to \eqref{eqn:in-energy}. 

\begin{definition}[Local minimizer]\label{defn:in-energy-minimizer}
	For a given realization $ \xi $ of the noise, we call a function of locally bounded variation $ m : \R^2 \rightarrow \{ -1, + 1 \} $ a \emph{local minimizer} of the energy \eqref{eqn:in-energy}, if
	\begin{align*}
		\int_{ \overline B } | \nabla m | - \varepsilon \int_{ B } \xi m \leq \int_{ \overline B } | \nabla m' | - \varepsilon \int_{ B } \xi m'
	\end{align*}
	for every competitor $ m' : \R^2 \rightarrow \{ -1, +1 \} $ such that $ m'=m $ in $\RR^2 \setminus B$ for some ball $ B $.
\end{definition}

At this point, a remark about the regularity of the objects that we consider is in order:
\begin{remark}\label{rem:gmt}
It is always possible to modify $m$ on a set of Lebesgue measure zero (which keeps the energy unchanged) so that the support of the measure $|\nabla m|$ is equal to $\partial\{m=+1\}$, see \cite[Proposition 12.19]{Maggi}.

Since we want to have access to the measure theoretic inner unit normal $\nu(x)$ of the set $\{m = +1\}$, we a priori have to work with the reduced boundary $ \partial^* \{ m = +1 \} \ni x$, see \cite[Definition 3.54]{AmbrosioFuscoPallara} or \cite[Section 15]{Maggi}. 
However, due to the UV regularization of the noise, in particular the $ L^{\infty} $-bound \eqref{eqn:noise-assumption}, we are in fact in a much smoother setting: 

Classical small-scale regularity implies that the reduced boundary $ \partial^* \{ m = +1 \} $ is a $ C^{\frac{3}{2}-} $-curve, see e.g.~\cite[Example 21.2 \& Theorem 21.8]{Maggi}. 
Moreover, since we are working in dimension $d=2$, a finer analysis shows that the singular set $\Sigma =\partial \{ m = +1 \} \setminus \partial^* \{ m = +1 \}$ is empty, that is,
\begin{align*}
	\partial^* \{ m = +1 \} = \partial \{ m = +1 \},
\end{align*}
see \cite[Theorem 28.1]{Maggi}.

In order to emphasize the symmetric role that the sets $\{m=+1\}$ and $\{m=-1\}$ play in our setting, we prefer to work with the jump set $J_m$ of $m$. In our regular setting, $J_m$ is equal to either of the boundaries considered above\footnote{while in general only $\partial^* \{m=+1\} \subset J_m$}. We refer to \cite[Example 3.68 \& Definition 3.67]{AmbrosioFuscoPallara} for details.
\end{remark}

\subsection{Related literature}\label{sec:literature}

In this subsection, we briefly review  important results related to our model. We start with an overview of the literature to the RFIM at zero temperature that motivates our analysis. Afterwards, we  review relevant literature from the regularity theory of perimeter minimizers, before we mention some related result on different models.

\medskip

\paragraph{\textit{The Imry-Ma phenomenon in the context of the Random Field Ising Model.}}

As explained in the introduction, the Imry-Ma phenomenon was first discussed in the physics paper \cite{IM75}. In case of the RFIM, their heuristic discussion was made rigorous in dimensions $ d \geq 3 $ by \cite{BricmontKupiainen88} and in dimension $ d \leq 2 $ by \cite{AW90}. We refer the interested reader to these two references for more links to the physics literature supporting the argument in \cite{IM75}.

The qualitative result \cite{AW90} was recently revisited with a particular focus on the zero temperature case with weak quenched disorder. First, double logarithmic decay of correlations was obtained in \cite{Chatterjee18}, which was then upgraded to algebraic decay with a small exponent in \cite{AP} and more recently culminated in the exponential decay of correlations, see \cite{DX19arxiv, DX19, APH}. For accounts of earlier results considering high temperature or strong disorder we refer to the literature review in \cite{AP}. Moreover, \cite{Chatterjee19} addressed the fluctuations of the free energy in the RFIM by proving a central limit theorem.

\medskip

\paragraph{\textit{Correlation length for the Random Field Ising Model.}}

In the context of the RFIM, multiple notions of correlation length were considered in \cite{Bar-Nir,DW}. Our main focus is on the length scale $ L_* $, cf.~\eqref{eqn:in-dw-bound} above, on which suboptimal bounds were obtained in \cite{Bar-Nir} and sharp bounds in \cite{DW}.

At zero temperature, the authors of \cite{DW} provide optimal upper bounds of order $ \exp( c \varepsilon^{ - \frac{4}{3} } ) $ for some $ c > 0 $ in the noise strength $ \varepsilon > 0 $. An extension to positive temperatures (including the critical temperature\footnote{Here, critical temperature refers to the critical temperature of the classical (deterministic) Ising model.}) is considered in \cite{DWextended,DWperturbative}.

\medskip

\paragraph{\textit{Regularity theory.}}

The classical regularity theory of minimal surfaces, see e.g.~\cite{Maggi} for an overview, is interested in regularity statements on small scales, which are non-universal in our setting. In the language of Definition \ref{def:eta-minimizer}, the work \cite{AP99} corresponds to the case where $ \eta \rightarrow 0 $ as $ R \rightarrow 0 $, while \cite{GNR} is concerned with the case that $ \eta $ has no decay properties. In the end, both works contain $ \varepsilon $-regularity results, meaning that they require initial smallness of some excess or flatness quantity to conclude. To bypass the initial smallness assumption, we rely on the fact that in our two-dimensional setting everything is more explicit, see Section \ref{sec:regularity-theory}.

\medskip

\paragraph{\textit{Related models.}}

Problems related to the Imry-Ma phenomenon on continuum models have already been considered in \cite{DirrOrlandi1,DirrOrlandi2}, where (non-)local Allen-Cahn energies with an external random field are studied. Their quenched random field is essentially what we call discretized white noise. In this setting, they obtain the same uniqueness result as \cite{AW90}.

The recent works \cite{RonDor,BEPx+1} are devoted to the study of geometrically linearization minimal surfaces in a random environment, which roughly speaking  amounts to replacing the area functional on graph configurations by its harmonic approximation. They establish scaling relations for height fluctuations and fluctuations of the ground state energy.

\section{Strategy of proof}\label{sec:strategy-of-proof}

In line with the idea of rough paths, we split our analysis into a purely probabilistic and a purely deterministic part. 
The main ingredient in our stochastic analysis is the \emph{chaining method}, which allows us to give precise bounds on the random field in weak norms. Once the field is controlled stochastically, we can do a path-wise analysis of typical realizations. This reduces the problem to the deterministic question of regularity of almost minimizers of the perimeter problem in two space dimensions.\footnote{This is also reminiscent of the proof of large-scale regularity in stochastic homogenization, see e.g.~\cite[Theorem 1]{gloria_neukamm_otto} or \cite[Theorem 6.1]{armstrong_kuusi_book}: the argument becomes entirely deterministic once a decomposition of the coefficient field with control of the sub-linearity of the corrector pair is assumed. Quantitative control on the latter requires a stochastic argument à la \cite[Theorem 2]{gloria_neukamm_otto} or \cite[Theorem 4.4]{armstrong_kuusi_book}.}

Simplifying the discussion a bit for the moment, we essentially need to control the norm
\begin{align}\label{eqn:weak-norm-to-control}
	\sup \left\{ \frac{1}{ | \partial M | } \int_M \xi ~\middle|~ M \subseteq B_R ~ { \rm Caccioppoli ~ set}  \right\}
\end{align}
of the random field $ \xi $. Recall that a set $ M $ is a Caccioppoli set if the indicator function $ \boldsymbol{1}_M $ is (locally) of bounded variation. Here we write $ | \partial M | $ for $ \int | \nabla \boldsymbol{1}_{ M } | $, which for smooth boundary curves $ \partial M $ reduces to their length, see e.g.~the comment after \cite[Definition 3.5]{AmbrosioFuscoPallara}. It is one of the main insights of \cite{DW} that the discrete analogue of this supremum diverges precisely like $ ( \log R )^{ \frac{3}{4} } $ as $ R \rightarrow \infty $. This is in sharp contrast to the polynomial divergence of the $ L^p $-norms of $ \xi $, which is why we like to think of the above norm as a weak norm of $ \xi $.

Let us assume for the moment that we can control \eqref{eqn:weak-norm-to-control} on some scale $ R $. As alluded to earlier, we use this estimate to show that the behavior of the minimizer of \eqref{eqn:in-energy} on that scale $R$ is dominated by the perimeter term. 
In that sense, the quenched random field is of lower order, which is captured in the following definition (it is instructive to think of $ \eta $ as an upper bound on \eqref{eqn:weak-norm-to-control}).

\begin{definition}[$ \eta $-minimizer]\label{def:eta-minimizer}
Let $ R > 0 $ and $ x \in \R^2 $. If there exists an $ \eta > 0 $, such that for every open ball $ B \subseteq B_R (x) $, and every competitor $ m' : \R^2 \rightarrow \{ -1 , +1 \} $ with $ m = m' $ in $ \R^2 \setminus B $, we have
\begin{align*}
\int_{ \overline{ B } } | \nabla m | \leq ( 1 + \eta ) \int_{ \overline{ B } } | \nabla m' | ,
\end{align*}
we call $ m : \R^2 \rightarrow \{ -1, +1 \} $ an \emph{$ \eta $-minimizer on $ B_R ( x ) $}.
\end{definition}

The main deterministic ingredient is a regularity theory for $ \eta $-minimizers, which has to be as precise as possible in its dependence on $ \eta $. 
Since Definition \ref{def:eta-minimizer} is translation invariant, we restrict the exposition in the following subsection to balls centered around the origin. 

\medskip

\subsection{Deterministic ingredients} Our deterministic regularity theory is based on a Campanato iteration, which we implement as follows: 
We inductively construct a sequence of average normals over dyadic scales by comparing our minimizer to a configuration that has a jump set consisting of a line (the latter having a natural normal associated to it). 
The key difficulty is to ensure that upon passing from one dyadic scale to a smaller, one retains a quantitative control of the error made in the approximation. 

This is a convenient strategy since we can directly relate the $ \eta $-minimizer to a minimizer of the perimeter problem. Here we profit from the two-dimensional setting of our problem: perimeter minimizers $ m_{ \rm line } $ are line configurations, i.e.~the sets $ \{ m_{ \rm line } = 1\} $ and $ \{ m_{ \rm line } = -1 \} $ are half-spaces so that $ J_{ m_{ \rm line } } $ is a line, see Lemma~\ref{lem:implicit-argument-postprocessing}.

Assuming this closeness to a line configuration, it is not difficult to see that on many smaller balls our $ \eta $-minimizer has exactly two jumps (like the line configuration). With this, it follows from $ \eta $-minimality that for a suitable choice of one of these balls the line connecting the two intersection points is a better approximation of our $ \eta $-minimizer on the smaller scale. We formalize this idea in the next proposition. Therein, for ease of notation, we call $ \nu $ the normal to the jump set $ J_{ m_{ \rm line } } $ of a line configuration $ m_{ \rm line } $, since the measure theoretic normal on $ J_{ m_{ \rm line } } $ is constant and agrees with the inwards pointing normal of $ \{ m_{ \rm line } = 1 \} $, see also Remark~\ref{rem:gmt}.

\begin{proposition}[One-step improvement]\label{prop:iteration-step}
There exist $ \varepsilon_0 > 0 $ and $ \eta_0 > 0 $ such that for every $ \varepsilon \leq \varepsilon_0 $ and $ \eta \leq \eta_0 $ the following holds. Let $ m $ be an $ \eta $-minimizer on $ B_R $ such that $ 0 \in J_m $ and
\begin{align}\label{eqn:prop:iteration-step-assumption}
\frac{1}{R^2} \int_{B_R} | m - m_{ \rm line } | \leq \varepsilon
\end{align}
for some $ m_{ \rm line } $ with jump set equal to a line. There exists a radius $ \frac{1}{16} R < r < \frac{15}{16} R $ and another configuration $ m_{ \rm line }' $ with jump set being a line such that
\begin{equation}\label{eqn:tilt-halfspace}
\frac{1}{ r^2 } \int_{ B_{ r } } | m - m_{ \rm line }' |
\lesssim \sqrt{ \eta }
\end{equation}
and
\begin{equation}\label{eqn:tilt-normal}
| \nu - \nu' | \lesssim \frac{1}{ R^2 } \int_{ B_R } | m - m_{ \rm line } |
\end{equation}
where $ \nu $, $ \nu' $ are the normals to $ J_{ m_{ \rm line } } $, resp.~ $ J_{ m_{ \rm line }' } $.
\end{proposition}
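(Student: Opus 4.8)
The plan is to exploit the explicit structure of perimeter minimizers in the plane (line arrangements) together with $\eta$-minimality to run a Campanato-type iteration. First I would record the standard consequences of $\eta$-minimality that will be used repeatedly: density estimates for $|\nabla m|$ on balls $B\subseteq B_R$ (both upper and lower bounds, as in \eqref{eqn:surface-density-estimate}), and the compactness/closeness to a perimeter minimizer that follows from \eqref{eqn:prop:iteration-step-assumption} — i.e.\ since $\frac1{R^2}\int_{B_R}|m-m_{\rm line}|\le\varepsilon$ with $\varepsilon$ small, $m$ is $L^1$-close on $B_R$ to a half-space configuration, hence (after rescaling to unit size) close to a perimeter minimizer whose jump set in $B_1$ is a chord. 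The point of smallness of $\varepsilon_0$ is to make this closeness quantitative and to rule out competing connected components (bubbles) inside most subballs.

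Next I would extract the geometric dichotomy on subballs. Rescale so $R=1$. Using the $L^1$-closeness and the density lower bound, on the concentric annulus of radii, say, $\tfrac1{16}$ to $\tfrac{15}{16}$ one shows that for \emph{most} radii $r$ in that range the slice $J_m\cap\partial B_r$ consists of exactly two points, close to the two points where the approximating line meets $\partial B_r$, and that $J_m\cap B_r$ is a connected $C^{3/2-}$ arc joining them with no other components — this is where the planar codimension-one structure and the absence of a singular set (Remark \ref{rem:gmt}) are essential. Having fixed such a good radius $r$, let $p_1,p_2\in\partial B_r$ be the two endpoints and let $m_{\rm line}'$ be the half-space whose boundary is the line through $p_1$ and $p_2$; define $\nu'$ accordingly with $\nu\cdot\nu'>0$. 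The tilt estimate \eqref{eqn:tilt-normal} then follows from elementary planar geometry: the direction of the chord $p_1p_2$ differs from $\nu$ by an amount controlled by how far $p_1,p_2$ are from $\partial\{m_{\rm line}=+1\}\cap\partial B_r$, and that displacement is in turn controlled by $\frac1{R^2}\int_{B_R}|m-m_{\rm line}|$ via the density bounds (a fixed fraction of the $L^1$-error must localize near $\partial B_r$ for a good radius, with the reverse implication coming from the coarea formula / averaging over $r$).

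For the main estimate \eqref{eqn:tilt-halfspace}, I would compare $m$ on $B_r$ against the competitor $m''$ which equals $m$ outside $B_r$ and whose jump set inside $B_r$ is precisely the straight segment $\overline{p_1p_2}$ (a valid competitor since it agrees with $m$ on $\partial B_r$, the two endpoints matching). By $\eta$-minimality, $\int_{\overline{B_r}}|\nabla m|\le(1+\eta)\int_{\overline{B_r}}|\nabla m''|=(1+\eta)\,|\overline{p_1p_2}|+(1+\eta)\cdot 0$, so the \emph{excess} of length of $J_m\cap B_r$ over the chord length is at most $\eta\cdot|\overline{p_1p_2}|\lesssim\eta$. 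Now invoke the elementary ``excess controls $L^2$-flatness'' estimate for a planar curve with fixed endpoints: if a $C^1$ arc from $p_1$ to $p_2$ has length at most $(1+\eta)|p_1-p_2|$, then its $L^\infty$ (hence $L^1$ and $L^2$) distance to the segment is $\lesssim\sqrt{\eta}\,|p_1-p_2|$ — this is a one-dimensional convexity/Pythagoras computation (write the arc as a graph over the chord once $\eta$ is small enough that it cannot fold back, which is guaranteed by the previous step). Translating this flatness of $J_m$ into an $L^1$ bound on $m-m_{\rm line}'$ via the density estimates and the layer-cake formula gives $\frac1{r^2}\int_{B_r}|m-m_{\rm line}'|\lesssim\sqrt{\eta}$, which is \eqref{eqn:tilt-halfspace}.

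The main obstacle is the step guaranteeing a \emph{good radius} $r\in(\tfrac1{16}R,\tfrac{15}{16}R)$ simultaneously achieving: (i) exactly two boundary intersection points with no extra components of $J_m\cap B_r$, (ii) those points close enough to the approximating line that the chord competitor is admissible and \eqref{eqn:tilt-normal} holds, and (iii) the arc cannot fold so that the graph representation over the chord is legitimate. Items (i) and (iii) rely on ruling out bubbles and near-tangential pieces, which is exactly where the smallness $\varepsilon_0$ of the initial $L^1$-excess and the density lower bound enter, combined with a Chebyshev/averaging argument over $r$ (only a small measure of bad radii can carry the anomalous behavior). Making (i)–(iii) hold on a common set of radii of positive measure — rather than separately — is the delicate bookkeeping; everything after choosing $r$ is soft planar geometry and the two comparison arguments above.
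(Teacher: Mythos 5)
Your proposal follows essentially the same route as the paper: find a good radius $r$ via an averaging/Chebyshev argument so that $J_m$ meets $\partial B_r$ in exactly two points close to $J_{m_{\rm line}}\cap\partial B_r$ (Lemma~\ref{lem:few-jumps} plus the density bounds), take $m_{\rm line}'$ to be the half-space bounded by the line through these two points, get \eqref{eqn:tilt-normal} from elementary planar geometry (Lemma~\ref{lem:normal-estimate}), and get \eqref{eqn:tilt-halfspace} by comparing to the chord competitor and converting the resulting length excess $\le\eta|p_1-p_2|$ into a Hausdorff-distance bound of order $\sqrt{\eta}$ (Lemma~\ref{lem:height-bound}).

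There is, however, one concrete gap. You assert that the good-radius choice yields that ``$J_m\cap B_r$ is a connected arc joining $p_1,p_2$ with no other components,'' and your final translation of curve-flatness into the $L^1$ bound on $m-m_{\rm line}'$ relies on this. But controlling the number of intersection points of $J_m$ with the circle $\partial B_r$ says nothing about closed components of $J_m$ entirely contained in $B_r$, and no averaging over $r$ can exclude a bubble sitting inside $B_{1/16}$. Moreover, for a general $\eta$-minimizer (which is what the proposition is about) bubbles genuinely can occur: removing a bubble of perimeter $\ell$ only shows $\ell\le\eta\cdot(\text{local perimeter})\lesssim\eta r$, not $\ell=0$. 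The correct fix --- and what the paper does --- is to decompose $J_m\cap\overline{B_r}=\gamma_1\cup\bigcup_{k\ge2}\gamma_k$ with $\gamma_1$ the arc and $\gamma_k$ closed curves, note that the chord comparison bounds $\sum_{k\ge2}|\gamma_k|\lesssim\eta r$, and then use the isoperimetric inequality to bound the extra $L^1$ contribution by $\sum_{k\ge2}|\mathrm{int}(\gamma_k)|\lesssim\eta^2r^2$, which is lower order than $\sqrt{\eta}\,r^2$. A second, minor point: you claim the arc can be written as a graph over the chord because it ``cannot fold back''; the length bound $\mathrm{Length}(\gamma_1)\le(1+\eta)|p_1-p_2|$ does not prevent local folding, so graphicality is not justified --- but it is also not needed, since the height bound only requires the Hausdorff distance to the chord, which follows directly from the length constraint (the extremal configuration being two line segments).
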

Since the statement is conditional on \eqref{eqn:prop:iteration-step-assumption}, iterating Proposition~\ref{prop:iteration-step} leads to $\varepsilon$-regularity of $\eta$-minimizers, i.e.~regularity conditional on the initial closeness to a line configuration. 
Since we are interested in an \emph{unconditional} statement, we need to establish this initial closeness to a line configuration. 

At this point it is convenient to directly compare the $ \eta $-minimizer to a minimizer of the perimeter problem. Using a compactness argument, we show that provided $ \eta \ll 1 $, it cannot be too far from perimeter minimizers in a sufficiently weak norm compared to the natural energy norm:

\begin{proposition}\label{prop:approx-halfspace-l1}
There exists a parameter $ 0 < \theta < 1 $ such that for every $ \varepsilon > 0 $ there exists an $ \eta_0 = \eta_0(\varepsilon) $ with the following property: if $ m $ is an $ \eta $-minimizer on $ B_R $ for some $ \eta \leq \eta_0 $, then there exists a configuration $ m_{ \rm line } $ with jump set given by a line such that
\begin{align*}
\frac{1}{(\theta R)^2} \int_{ B_{ \theta R } } | m - m_{ \rm line } | \leq \varepsilon.
\end{align*}
\end{proposition}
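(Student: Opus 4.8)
\textbf{Proof proposal for Proposition \ref{prop:approx-halfspace-l1}.}

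The plan is to argue by contradiction and compactness, in the classical spirit of De Giorgi's excess-improvement scheme but adapted to the weak $ L^1 $-type distance used here. Suppose the statement fails for some fixed $ \varepsilon > 0 $. Then for every $ n \in \N $ there is a scale $ R_n > 0 $ and an $ \frac{1}{n} $-minimizer $ m_n $ on $ B_{ R_n } $ such that
\begin{align*}
\frac{1}{ ( \theta R_n )^2 } \int_{ B_{ \theta R_n } } | m_n - m_{ \rm line } | > \varepsilon
\quad \text{for every line configuration } m_{ \rm line } .
\end{align*}
By the scaling invariance of Definition \ref{def:eta-minimizer} (rescaling $ x \mapsto x / R_n $ leaves the class of $ \eta $-minimizers invariant, since the perimeter is $ 1 $-homogeneous), we may assume $ R_n = 1 $. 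We thus have a sequence $ ( m_n ) $ of $ \frac{1}{n} $-minimizers of the perimeter on $ B_1 $, all failing the conclusion with the \emph{same} $ \varepsilon $ and $ \theta $ (the latter still to be chosen). The density estimates \eqref{eqn:surface-density-estimate} — valid for $ \eta $-minimizers with $ \eta $ small — give uniform perimeter bounds $ \int_{ B_{ 1/2 } } | \nabla m_n | \lesssim 1 $, so by BV-compactness (after passing to a subsequence) $ m_n \to m_\infty $ in $ L^1_{ \loc }( B_1 ) $ for some $ m_\infty : \R^2 \to \{ -1, +1 \} $.

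Next I would show that the limit $ m_\infty $ is a genuine local minimizer of the perimeter in $ B_{ 1/2 } $, say. This is the standard lower-semicontinuity-plus-competitor argument: given a competitor $ m' $ agreeing with $ m_\infty $ outside some ball $ B \subseteq B_{ 1/2 } $, one uses $ \Gamma $-liminf of the total variation together with the $ \eta_n $-minimality of $ m_n $ (whose defect $ \eta_n = \frac{1}{n} \to 0 $) and a slicing/trace adjustment near $ \partial B $ to upgrade $ L^1 $-convergence to convergence of the competitors' energies; the surplus factor $ ( 1 + \eta_n ) \to 1 $ disappears in the limit. Hence $ m_\infty $ is a perimeter minimizer in $ B_{ 1/2 } $. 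In two dimensions, perimeter minimizers have jump set a finite union of line segments meeting $ \partial B_{ 1/2 } $, and since $ 0 \in J_{ m_n } $ for all $ n $ the density lower bound forces $ 0 \in J_{ m_\infty } $; by the two-dimensional classification (the analogue of \cite[Theorem 10.1]{FrankMorgan}, cf.~Section \ref{sec:regularity-theory}) the minimizer $ m_\infty $ is, inside a sufficiently small ball $ B_{ 2 \theta } $ around the origin, exactly a \emph{half-plane}, i.e.~its jump set there is a single line through the origin. Fix such a $ \theta \in ( 0, \tfrac14 ) $ depending only on universal constants. Then
\begin{align*}
\frac{1}{ ( \theta )^2 } \int_{ B_{ \theta } } | m_\infty - m_{ \rm line }^\infty | = 0
\end{align*}
for the corresponding line configuration $ m_{ \rm line }^\infty $, while $ L^1( B_\theta ) $-convergence gives $ \frac{1}{ \theta^2 } \int_{ B_\theta } | m_n - m_{ \rm line }^\infty | \to 0 $, contradicting the lower bound $ > \varepsilon $ for $ n $ large. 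This proves the proposition with this $ \theta $ and $ \eta_0 = \eta_0( \varepsilon ) = \frac{1}{n} $ for $ n $ large enough.

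The main obstacle I anticipate is the passage to the limit in the minimality inequality — specifically, making sure that competitors for $ m_n $ can be built from a competitor for $ m_\infty $ without losing control of the boundary term on $ \overline{B} $. Because the energy in Definition \ref{def:eta-minimizer} includes the perimeter \emph{on the closed ball} $ \overline{B} $, one must rule out mass of $ | \nabla m_n | $ concentrating on $ \partial B $ in the limit; the usual fix is to choose the comparison radius from a good set of radii (Fubini/coarea, so that $ \hm^1( J_{ m_\infty } \cap \partial B ) = 0 $ and the traces converge) and to interpolate between $ m_n $ and $ m' $ in a thin annulus, paying a perimeter cost that vanishes as the annulus shrinks. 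A secondary technical point is the two-dimensional structure theorem under the \emph{minimal} regularity available here: one needs that a perimeter minimizer in $ B_{1/2} \subset \R^2 $ passing through the origin coincides with a half-plane near $ 0 $, which is exactly the explicit two-dimensional regularity recalled later in the paper, so I would simply invoke that. Everything else — BV compactness, lower semicontinuity, the density estimates \eqref{eqn:surface-density-estimate}, and the scaling reduction — is routine.
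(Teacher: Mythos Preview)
Your approach---compactness/contradiction together with the two-dimensional structure of perimeter minimizers---is essentially the paper's, which packages it as Lemma~\ref{lem:implicit-argument} (an $\eta$-minimizer is $L^1(B_R)$-close to \emph{some} perimeter minimizer, via the same compactness you run) plus Lemma~\ref{lem:implicit-argument-postprocessing} (any perimeter minimizer in $B_R$ has at most one line segment in $B_{\theta R}$ for a universal $\theta$); this separation fixes $\theta$ \emph{before} the compactness argument, which resolves the slight order-of-quantifiers ambiguity in your write-up (you choose $\theta$ only after extracting the limit, but the bad sequence in the contradiction already depends on $\theta$). One genuine correction: you invoke ``$0 \in J_{m_n}$ for all $n$'', but the proposition does \emph{not} assume this. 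Fortunately nothing in your argument actually needs it---Lemma~\ref{lem:implicit-argument-postprocessing} yields at most one line in $B_{\theta}$, not a line through the origin, and the uniform perimeter bound follows more simply from comparing $m_n$ to the constant configuration on $B_1$ (as the paper does) rather than from the density estimates \eqref{eqn:surface-density-estimate}, whose lower bound does require a point of the jump set at the center. Your concerns about passing minimality to the limit are legitimate but standard; the paper treats this step just as briskly, appealing to lower semicontinuity of the perimeter.
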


For notational convenience we formulated the above two propositions for $ \eta $-minimizers in balls centered around the origin. Recall that by translation invariance both results also apply to $ \eta $-minimizers centered around any point in $\RR^2$, which will be important in setting up the Campanato iteration in the proof of Theorem \ref{thm:mr}.

It is built into the strategy that, when comparing two neighboring dyadic scales the closeness to the line configuration needs to improve. Of course, this can only happen if on smaller scales, our problem gets closer to the perimeter problem. In particular, Proposition \ref{prop:iteration-step} will only be useful when it is applied to a sequence of $ \eta_R ( x ) $-minimizers, where $ \eta_R ( x ) $ depends on the scale $ R $ and base point $ x $, in such a way that it improves as $ R $ becomes smaller.

\medskip

\subsection{Stochastic ingredients} As motivated in the last paragraph, we will need control over the supremum in \eqref{eqn:weak-norm-to-control} on dyadic scales $ R $ and varying base points $ x $. To this end, we combine moment bounds reminiscent of \cite{DW} with concentration properties of Gaussian random variables. 

As starting point for our estimates we will need the continuum version of Proposition 2.2 in \cite{DW}, see also \cite{Talagrand-GC,Talagrand-ULB}. In our setting their proof needs some technical adjustments.

\begin{proposition}[Continuum version of Proposition 2.2 in \cite{DW}]\label{thm:talagrand-bound}
Let
\begin{align*}
S_R \coloneq \sup \left\{ \Big| \frac{1}{ | \partial M | } \int_M \xi \, \Big| ~\middle|~ M \subseteq B_R ~ { \rm Caccioppoli~set } \right\}.
\end{align*}
Then
\begin{equation}\label{eqn:talagrand-expectation}
\E S_R \lesssim ( \log R )^{ \frac{3}{4} }
\end{equation}
for $ R \geq 2 $, and the variable $ S_R $ satisfies the Sub-Gaussian tail estimate
\begin{equation}\label{eqn:talagrand-concentration}
\P \{ | S_R - \E S_R | \ge t \} \leq 2 \exp ( - \frac{ t^2 }{ 8 \pi } )
\end{equation}
for all $ R > 0 $ and $ t > 0 $.
\end{proposition}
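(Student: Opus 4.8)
The plan is to separate the two claims: the tail estimate \eqref{eqn:talagrand-concentration} is soft and follows from Gaussian concentration, whereas the expectation bound \eqref{eqn:talagrand-expectation} is the substantive part and requires a chaining argument. For the concentration bound I would view $S_R$ as a function of the Gaussian noise $\xi$ (equivalently, of the underlying Cameron--Martin Hilbert space $H$). The map is a supremum of functionals $\xi \mapsto \frac{1}{|\partial M|}\int_M \xi$, each of which is linear in $\xi$ with Cameron--Martin norm controlled via the second property in \eqref{eqn:noise-assumption}: indeed $\big|\int \delta\xi \, \boldsymbol{1}_M\big| \lesssim |\delta\xi|_H \, |M|^{1/2} \lesssim |\delta\xi|_H \, |\partial M|$ by the isoperimetric inequality in $B_R$, so after normalization each functional is Lipschitz on $H$ with a universal constant (the $\frac{1}{4\pi}$-type constant that produces the $8\pi$ in the exponent). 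A supremum of such functionals is itself Lipschitz with the same constant, hence the Borell--TIS / Gaussian concentration inequality gives \eqref{eqn:talagrand-concentration} for all $R>0$. I should be slightly careful that $S_R$ is a.s.\ finite and measurable — finiteness is exactly the first property in \eqref{eqn:noise-assumption} combined with the isoperimetric bound, and measurability follows by restricting to a countable dense family of Caccioppoli sets (e.g.\ polygonal ones).

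For the expectation bound \eqref{eqn:talagrand-expectation}, the strategy is Dudley-type chaining over the index set of Caccioppoli sets $M \subseteq B_R$, metrized by the canonical Gaussian (pseudo-)distance $d(M,M') = \big(\E\, |\,\frac{1}{|\partial M|}\int_M \xi - \frac{1}{|\partial M'|}\int_{M'}\xi\,|^2\big)^{1/2}$. The key inputs are: (i) a variance bound $\E\,|\frac{1}{|\partial M|}\int_M\xi|^2 \lesssim 1$ uniformly in $M$, again from the isoperimetric inequality together with the UV cut-off (the covariance $c$ is bounded, so $\E(\int_M\xi)^2 = \int_M\int_M c(x-y) \lesssim |M| \lesssim |\partial M|^2$); and (ii) an entropy estimate: the number of Caccioppoli subsets of $B_R$ needed to approximate all others to Gaussian-distance $\delta$ grows like $\exp(C (R/\delta)^{?})$ in the relevant range, which after plugging into the Dudley integral $\int_0^{\mathrm{diam}} \sqrt{\log N(\delta)}\,\mathrm{d}\delta$ produces the $(\log R)^{3/4}$ scaling. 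The exponent $3/4$ is exactly the arithmetic miracle of \cite{DW}: it comes from balancing the perimeter normalization (length $\sim \ell$) against the UV-regularized white-noise fluctuations (order $\ell$ over a region of linear size $\ell$, i.e.\ the critical $d=2$ balance) across dyadic scales $\ell \in [1,R]$, so that contributions from scale $\ell$ are roughly constant and their number is $\log R$, while the geometry of admissible interfaces at each scale contributes the fractional power. Concretely I would mimic \cite[Proposition 2.2]{DW}: decompose a general interface into its pieces living at dyadic length scales, bound the contribution of each scale by a supremum over "local" interfaces, control each of those by a union bound over a net whose cardinality is estimated by a counting/combinatorial argument for bounded-length curves in a box, and sum the resulting $\sqrt{\log(\#\text{net})}$ over the $\log R$ scales.

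The technical adjustments relative to the discrete setting of \cite{DW} — and the step I expect to be the main obstacle — are in making the chaining rigorous for \emph{continuum} Caccioppoli sets: the index set is infinite-dimensional and its "size" is only controlled through the perimeter functional, so the net must be built at the level of curves of bounded length (using compactness of BV and a quantitative covering of such curves, e.g.\ by polygonal approximants on a fine grid), and one must verify that the normalization by $|\partial M|$ interacts well with this approximation (a curve of length $\ell$ perturbed within Hausdorff distance $h$ changes $\int_M\xi$ by $\lesssim \ell h \sup|\xi|$, using the $L^\infty$ bound on $\xi$ locally). There is also the nuisance that $|\partial M|$ can be small, so one should split off the regime $|\partial M| \lesssim 1$, where the UV cut-off directly bounds $|\frac{1}{|\partial M|}\int_M \xi|$ by a constant (convolving $\xi$ against the bounded kernel), and run the chaining only for $|\partial M| \gtrsim 1$. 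Once these continuum technicalities are handled, the chaining computation is a faithful transcription of \cite{DW}, and combining the resulting $\E S_R \lesssim (\log R)^{3/4}$ with the concentration bound \eqref{eqn:talagrand-concentration} completes the proof.
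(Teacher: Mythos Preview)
Your treatment of the concentration estimate \eqref{eqn:talagrand-concentration} is correct and matches the paper: the map $\xi \mapsto \frac{1}{|\partial M|}\int_M \xi$ is Lipschitz under Cameron--Martin shifts (via the isoperimetric inequality), and Gaussian concentration for Lipschitz functionals (Borell--TIS type) gives the tail bound. Likewise, your continuum-to-discrete reduction for the expectation bound---splitting off $|\partial M|\lesssim 1$ and $|\partial M|\gtrsim R^2$, approximating smooth interfaces by lattice curves at a cost controlled by $\sup_{B_R}|\xi|$, and invoking Lemma~\ref{lem:wn-sup-bound} for the latter---is essentially what the paper does in its Step~1 and Lemma~\ref{lem:deformation}.

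The gap is in how the exponent $\tfrac{3}{4}$ actually emerges. Your description (``union bound over a net \ldots\ sum the resulting $\sqrt{\log(\#\text{net})}$ over the $\log R$ scales'') would fail: at a fixed length scale $L$ the number of lattice curves is of order $C^L$, so a union bound or a Dudley integral over such a net gives a supremum of order $\sqrt{L}$, not $(\log L)^{3/4}$; summing constant-order contributions over $\log R$ dyadic scales would give $\log R$, not $(\log R)^{3/4}$. The paper (following \cite{DW} and Talagrand) does \emph{not} use a Dudley bound here. Instead, the key is that lattice curves in $\Gamma_L$ are parametrized by the unit ball of $H^1(\mathbb{T})$ equipped with the $L^2$ metric (Lemma~\ref{lem:paramterization-through-h1}), and the canonical Gaussian distance picks up a square root, i.e.\ is controlled by $\|\cdot\|_{L^2}^{1/2}$, see \eqref{eqn:upper-bound-gaussian-metric}. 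Talagrand's generic chaining then gives, via Lemma~\ref{lem:emergernce-gamma_1-2},
\[
\gamma_2\big(\Gamma_L,d\big)\ \lesssim\ (\log\log|\Gamma_L|)^{3/4}\,\gamma_{1,2}\big(H^1(\mathbb{T})\hookrightarrow L^2(\mathbb{T})\big)^{1/2},
\]
and the $\gamma_{1,2}$-functional on the right is \emph{finite} (Lemma~\ref{chain-key-lemma}). Since $|\Gamma_L|\le 4^{cL}$, this produces exactly $(\log L)^{3/4}$ at each scale, and the passage to $(\log R)^{3/4}$ then goes through a further concentration/union-bound over dyadic $L\le R^2$ and pinning points. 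So the ``fractional power'' is not a byproduct of counting interfaces but of the interplay between the $H^1\hookrightarrow L^2$ ellipsoid geometry and the square-root in the metric; without this ingredient your outline cannot close.
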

Let us remark that for our purposes the upper bound \eqref{eqn:talagrand-expectation} is enough to establish Theorem~\ref{thm:mr}. In fact, \cite{DW} also prove a lower bound of the same order on the expectation for the discrete model, which could be extended to our setting as well.

The proof of the crucial estimate \eqref{eqn:talagrand-expectation} in the above proposition relies on Talagrand's generic chaining method, see the textbook \cite{Talagrand-GC} for a nice exposition. The main idea, which we elaborate more in Section \ref{sec:pf-talagrand-bound}, is to capitalize on the equivalence of bounding the supremum of Gaussian random variables and appropriate partitions of their associated metric space. In our setting this may be rephrased as a question on quantifying the compactness of embeddings of certain function spaces.  

Starting from Proposition \ref{thm:talagrand-bound}, we now specify what we mean by controlling \eqref{eqn:weak-norm-to-control} over dyadic scales. While the expectation in \eqref{eqn:talagrand-expectation} blows up in $ R $, \eqref{eqn:talagrand-concentration} shows that its variance stays of order one. In particular, upon normalizing \eqref{eqn:weak-norm-to-control} by its expectation, its variance decays. This leads to good concentration properties as made precise in the next proposition.

\begin{proposition}\label{prop:concentration-scales}
We have
\begin{equation}\label{eqn:expectation-bound-sup-x-r}
\E \sup_{ R \ge 2 } ( \log R )^{ - \frac{3}{4} } S_R \lesssim 1
\end{equation}
and
\begin{equation}\label{eqn:tail-bound-sup-x-r}
\P \{ | \sup_{ R \ge 2 } ( \log R )^{ - \frac{3}{4} } S_R - \E \sup_{ R \ge 2 } ( \log R )^{ - \frac{3}{4} } S_R | \ge t  \}
\leq 2 \exp( - \frac{t^2}{8\pi} )
\end{equation}
for every $ t > 0 $.
\end{proposition}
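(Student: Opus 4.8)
The plan is to deduce both statements from the single-scale bounds in Proposition \ref{thm:talagrand-bound} by a chaining-over-scales argument, using the dyadic structure of $\{S_R\}$ and the monotonicity $R \mapsto S_R$. First I would reduce to dyadic scales: since $R \mapsto S_R$ is non-decreasing (a Caccioppoli set $M \subseteq B_R$ is also admissible in $B_{R'}$ for $R' \ge R$) and $R \mapsto (\log R)^{-3/4}$ is non-increasing for $R \ge 2$ (actually for $R \ge e$, but the range $2 \le R \le e$ is a compact harmless piece one absorbs into the constant), one has, writing $R_k = 2^k$,
\begin{align*}
\sup_{R \ge 2} (\log R)^{-\frac{3}{4}} S_R \lesssim \sup_{k \ge 1} (\log R_k)^{-\frac{3}{4}} S_{R_{k+1}} \lesssim \sup_{k \ge 1} k^{-\frac{3}{4}} S_{2^k},
\end{align*}
up to a universal constant coming from $(\log R_{k+1})^{3/4}/(\log R_k)^{3/4} \le (2)^{3/4}$-type comparisons. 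So it suffices to control $\sup_{k \ge 1} k^{-3/4} S_{2^k}$.

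For the expectation bound \eqref{eqn:expectation-bound-sup-x-r}, I would center each term: write $S_{2^k} = \E S_{2^k} + (S_{2^k} - \E S_{2^k})$. By \eqref{eqn:talagrand-expectation}, $k^{-3/4}\E S_{2^k} \lesssim k^{-3/4}(\log 2^k)^{3/4} \lesssim 1$ uniformly in $k$, so that part contributes $O(1)$ to the sup. For the fluctuation part, a standard Gaussian-supremum estimate applies: the bound \eqref{eqn:talagrand-concentration} gives a uniform-in-$k$ sub-Gaussian tail for $S_{2^k} - \E S_{2^k}$ with variance proxy $4\pi$, hence $\E \sup_{k \ge 1} k^{-3/4}|S_{2^k} - \E S_{2^k}|$ is finite; indeed, by a union bound over $k$ together with $\int_0^\infty \P\{\sup_k k^{-3/4}|S_{2^k}-\E S_{2^k}| > t\}\, dt$ and the summability of $\sum_k \exp(-c\,t^2 k^{3/2})$, one gets an $O(1)$ bound. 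Combining, $\E \sup_{k\ge1} k^{-3/4} S_{2^k} \lesssim 1$, which gives \eqref{eqn:expectation-bound-sup-x-r}.

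For the tail bound \eqref{eqn:tail-bound-sup-x-r}, the cleanest route is to view $X := \sup_{R \ge 2}(\log R)^{-3/4} S_R$ as a $\tfrac{1}{2}$-Lipschitz-type functional of the underlying Gaussian noise and invoke Gaussian concentration. Precisely, from the second property in \eqref{eqn:noise-assumption}, for any Cameron–Martin shift $\delta\xi$ one has $\big|\tfrac{1}{|\partial M|}\int_M (\xi + \delta\xi) - \tfrac{1}{|\partial M|}\int_M \xi\big| \le \tfrac{1}{|\partial M|}\int_M \delta\xi \le \tfrac{1}{|\partial M|}|\delta\xi|_H \big(\int_M 1\big)^{1/2}$, and since the isoperimetric inequality controls $|M|^{1/2} \lesssim |\partial M|$ for $M\subseteq B_R$ with the crucial feature that the relevant constant is $\le 1$ after the normalization fixed in Section \ref{sec:assumptions-on-noise} (this is exactly what produces the $8\pi$ in \eqref{eqn:talagrand-concentration}), one obtains that $S_R$ — and hence $X$, as an infimum-stable-under-the-same-shift supremum of such quantities — changes by at most $|\delta\xi|_H$ under the shift $\xi \mapsto \xi + \delta\xi$. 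Thus $X$ is $1$-Lipschitz on the Cameron–Martin space, and the Borell–Tsirelson–Ibragimov–Sudakov Gaussian concentration inequality yields $\P\{|X - \E X| \ge t\} \le 2\exp(-t^2/(8\pi))$ with the same variance proxy as in \eqref{eqn:talagrand-concentration}.

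The main obstacle I anticipate is bookkeeping the constants so that the variance proxy in \eqref{eqn:tail-bound-sup-x-r} is exactly $8\pi$ rather than merely some universal constant: one must track that the $(\log R)^{-3/4}$ weight never amplifies the Lipschitz constant (it is $\le 1$ for $R\ge e$, and for $2\le R < e$ one checks directly that $S_R \le S_e$ and $(\log R)^{-3/4} \le$ a value still producing the correct constant, or one simply notes the supremum over $R\ge 2$ coincides with that over $R \ge e$ up to the monotonicity already used), and that the isoperimetric constant entering the Lipschitz estimate for each $S_R$ is the same one that gave $8\pi$ in Proposition \ref{thm:talagrand-bound}. Once the Lipschitz-in-Cameron–Martin property of $X$ is established with constant $1$, both \eqref{eqn:expectation-bound-sup-x-r} (via the centering argument above, which only needs finiteness) and \eqref{eqn:tail-bound-sup-x-r} (via Gaussian concentration) follow; in fact \eqref{eqn:tail-bound-sup-x-r} combined with \eqref{eqn:expectation-bound-sup-x-r} also re-proves finiteness of $\E X$, so the two parts are mutually consistent.
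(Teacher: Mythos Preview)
Your proposal is correct and follows essentially the same approach as the paper: the tail bound \eqref{eqn:tail-bound-sup-x-r} is obtained by observing that $\xi \mapsto \sup_{R\ge 2}(\log R)^{-3/4}S_R$ inherits the Cameron--Martin Lipschitz bound of Lemma~\ref{lemma:cameron-martin-continuity} (the paper records the constant as $2\sqrt{\pi}$ rather than $1$) and then applying Gaussian concentration, while the expectation bound \eqref{eqn:expectation-bound-sup-x-r} is obtained by the same dyadic reduction, the same centering $S_{2^k}=\E S_{2^k}+(S_{2^k}-\E S_{2^k})$, the same use of \eqref{eqn:talagrand-expectation} for the means, and the same union-bound-plus-tail-integral for the fluctuations. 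Your discussion of the $2\le R<e$ range is slightly more careful than the paper's, which simply asserts that the prefactor $(\log R)^{-3/4}$ ``is not harmful''; otherwise the two arguments coincide.
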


Proposition~\ref{prop:concentration-scales} only solves one of our problems, namely the estimation of \eqref{eqn:weak-norm-to-control} over dyadic scales. We still need to resolve the pinning in different space points.
Heuristically, we may argue as follows: since $ \xi $ is regular, i.e.\ does not vary much on spatial scales of order one, we may replace its supremum over a box of radius one by its value at the center point. Therefore, controlling \eqref{eqn:weak-norm-to-control} over points in a box of radius $ R $ amounts to bounding the quantity for $ \sim R^d $ different center points. Since we are interested in the supremum of these $ R^d $ random variables, any correlation structure between them makes their supremum smaller, so that the worst upper bound is obtained in the independent case. Proposition \ref{prop:concentration-scales} tells us that the random variables of our interest essentially behave like Gaussians with mean and variance of order one. It is well known that the supremum of $ R^d $ of these random variables scales like $ ( \log R^d )^{ \frac{1}{2} } \sim ( \log R )^{ \frac{1}{2} } $ in $ R $. Hence, another concentration argument similar to the one we used above yields the following proposition.

\begin{proposition}\label{prop:concentration-in-space}
For 
\begin{align*}
X =  \sup_{ x \in \R^2 } ( \log | x |_+ )^{ - \frac{1}{2} } \sup_{ R \ge 2 } ( \log R )^{ - \frac{3}{4} } S_R ( \xi ( \cdot - x ) ),
\end{align*}
with $ | x |_+ \coloneq \max \{ | x |, 2 \} $, we have
\begin{equation}\label{eqn:concentration-in-space-expectation}
\E X \lesssim 1
\end{equation}
and
\begin{equation}\label{eqn:concentration-in-space-tail}
\P \{ | X - \E X | \geq t \}
\leq 2\exp( - \frac{t^2}{8\pi} )
\end{equation}
for every $ t > 0 $.
\end{proposition}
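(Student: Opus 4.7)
The plan is to treat the two claims of Proposition~\ref{prop:concentration-in-space} separately. The sub-Gaussian tail \eqref{eqn:concentration-in-space-tail} will follow from Gaussian (Borell--TIS) concentration applied directly to $X$, via a bound on its Cameron--Martin Lipschitz constant. The expectation bound \eqref{eqn:concentration-in-space-expectation} will then come from a union bound over a spatial lattice, combined with Proposition~\ref{prop:concentration-scales} applied at each site.

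For the tail, I would write $X$ as a supremum of linear functionals of $\xi$,
\[
X = \sup_{(x, R, M)} \langle f_{x, R, M}, \xi \rangle_{L^2}, \qquad f_{x, R, M} = \frac{(\log |x|_+)^{-1/2} (\log R)^{-3/4}}{|\partial M|}\, \1_{M - x},
\]
indexed by $x \in \R^2$, $R \geq 2$, and Caccioppoli sets $M \subseteq B_R$ (after the change of variable $y \mapsto y - x$ in the integral defining $S_R(\xi(\cdot - x))$). The isoperimetric inequality $|M|^{1/2}/|\partial M| \leq 1/(2\sqrt{\pi})$ together with $|x|_+, R \geq 2$ yields the uniform bound $\sup_{(x,R,M)} \|f_{x,R,M}\|_{L^2}^2 \leq (\log 2)^{-5/2}/(4\pi)$. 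The Cameron--Martin continuity in \eqref{eqn:noise-assumption} then makes $X$ Lipschitz in $|\cdot|_H$ with constant $L$ satisfying $2L^2 \leq 8\pi$ (with a lot of room to spare), and Borell--TIS immediately gives \eqref{eqn:concentration-in-space-tail}.

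For the expectation, first reduce the spatial supremum to $\Z^2$. For $x \in \R^2$ with nearest lattice point $n(x) \in \Z^2$, one has $|x - n(x)| \leq \sqrt 2/2$, so $B_R(-x) \subseteq B_{R + \sqrt 2/2}(-n(x))$ and hence $S_R(\xi(\cdot - x)) \leq S_{R + \sqrt 2/2}(\xi(\cdot - n(x)))$. Since the logarithmic normalizations $(\log R)^{-3/4}$ and $(\log |x|_+)^{-1/2}$ are distorted by only a uniform multiplicative constant under this shift (for $R \geq 2$ and $|x|_+ \geq 2$), we obtain $X \leq C X'$ with
\[
X' \coloneq \sup_{n \in \Z^2} (\log |n|_+)^{-1/2} Y(n), \qquad Y(n) \coloneq \sup_{R \geq 2} (\log R)^{-3/4} S_R (\xi(\cdot - n)).
\]
By stationarity of $\xi$, each $Y(n)$ is distributed as the random variable in Proposition~\ref{prop:concentration-scales}, so $\E Y(n) \leq C_0$ and $Y(n)$ has a sub-Gaussian tail around $C_0$. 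Rescaling, for $s$ above a universal threshold,
\[
\P\{(\log|n|_+)^{-1/2} Y(n) \geq s\} \leq 2 \exp\Big(-\frac{(s (\log |n|_+)^{1/2} - C_0)^2}{8\pi}\Big) \lesssim |n|_+^{-c s^2}.
\]
A union bound over $n \in \Z^2$, using that the series $\sum_{n \in \Z^2} |n|_+^{-c s^2}$ converges once $c s^2 > 2$, yields $\P\{X' \geq s\} \lesssim e^{-c' s^2}$ for sufficiently large $s$, and integrating the tail gives $\E X \leq C \E X' \lesssim 1$, proving \eqref{eqn:concentration-in-space-expectation}.

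The main point (and the only delicate feature of the argument) is the conceptual balance between the polynomial growth of the number of lattice sites in $\Z^2$ and the polynomial concentration rate $|n|_+^{-c s^2}$ at each site: the weight $(\log|x|_+)^{-1/2}$ in the definition of $X$ is chosen exactly so that the summability threshold $c s^2 > 2$ is met uniformly for $s$ beyond an $O(1)$ value, yielding a Gaussian tail after the union bound. Everything else---the lattice reduction and the Lipschitz computation---is a routine verification using tools already developed in the paper.
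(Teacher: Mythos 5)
Your proposal is correct and follows essentially the same two-step route as the paper: Cameron--Martin Lipschitz continuity of the supremum plus Gaussian concentration for the tail bound, and a reduction to lattice points followed by a union bound exploiting Proposition~\ref{prop:concentration-scales} for the expectation. The only cosmetic difference is that the paper first splits off the means $\E\mathscr{S}(x)$ and sums the tails of the centered variables, whereas you bound the tail of the full supremum directly and integrate; these are the same computation organized differently.
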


We expect that the concentration properties in the dyadic scales (Proposition \ref{prop:concentration-scales}) are essentially sharp. Since the variance of large-scale contributions is close to zero, they essentially behave like order one quantities.
In contrast, Proposition \ref{prop:concentration-in-space} may not be sharp: since the underlying noise $\xi$ has correlation length one, the short-range contributions to the supremum at two distinct space points are independent, whilst there is a nontrivial correlation structure in the longer-range contributions. 

\section{Proof of the deterministic ingredients}\label{sec:regularity-theory}

The goal of this section is to prove the deterministic ingredients in the proof of Theorem~\ref{thm:mr} outlined in Section~\ref{sec:strategy-of-proof}. In particular, we provide the proofs of Propositions~\ref{prop:iteration-step} and \ref{prop:approx-halfspace-l1}. Let us stress again that centering the minimizer around the origin plays no special role; this is done for notational convenience only. 
Moreover, let us emphasize that due to our two-dimensional setting, we can avoid the full machinery from the regularity theory of minimal surfaces and instead use elementary -- and self-contained -- geometric arguments.

Before we start, let us remark that as long as our claim is formulated in a scale invariant way -- as for example in Propositions \ref{prop:iteration-step} and \ref{prop:approx-halfspace-l1} -- we may always assume in the proofs that $ R = 1 $:

\begin{remark}[Rescaling]\label{rmk:eta-minimizer}
Let $ m $ be an $ \eta_R $-minimizer on $ B_R $, then $ m $ is also an $ \eta_R $-minimizer on every smaller ball $ B \subseteq B_R $; furthermore $ m(R \cdot) $ is an $ \eta_R $-minimizer on $ B_1 $.
\end{remark}

After this initial preparation, we start with the proofs of the two Propositions \ref{prop:iteration-step} and \ref{prop:approx-halfspace-l1}. We begin with the latter by employing an implicit argument. Therein, we will use the notion of a \emph{minimizer $ m_{ \rm Per } $ of the perimeter in $ B_R $}, meaning that for any $m': B_R \to \{-1, +1\}$ there holds 
\begin{align}
	\int_{ B_R } | \nabla m_{ \rm Per } | 
	\leq \int_{ B_R } | \nabla m' | 
	\quad \text{provided} ~ \{ m_{\rm Per} \neq m' \} ~ \text{is compactly contained in} ~ B_R ; \label{eqn:per-minimality}
\end{align}
that is, $ m_{ \rm Per } $ minimizes $ m \mapsto \int_{ B_R } | \nabla m | $ under compactly supported perturbations.\footnote{The notion of a $ 0 $-minimizer of the perimeter (in the sense of Definition \ref{def:eta-minimizer}) is slightly stronger since perturbations are allowed up to the boundary of $ B_R $.}

\begin{lemma}\label{lem:implicit-argument}
For every $ \varepsilon > 0 $ there exists an $ \eta_0 > 0 $ such that every $ \eta $-minimizer $ m $ on $ B_R $ with $ \eta \leq \eta_0 $ there exists a minimizer $ m_{ \rm Per } $ of the perimeter in $ B_R $ such that
\begin{align*}
\frac{1}{R^2} \int_{ B_R } | m - m_{ \rm Per } | \leq \varepsilon.
\end{align*}
\end{lemma}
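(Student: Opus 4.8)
The plan is to argue by contradiction and compactness, using the lower semicontinuity and compactness properties of sets of finite perimeter. Suppose the statement fails: then there is an $\varepsilon > 0$ and a sequence $\eta_k \downarrow 0$ together with $\eta_k$-minimizers $m_k$ on $B_{R_k}$ such that every perimeter minimizer $\bar m$ in $B_{R_k}$ satisfies $\frac{1}{R_k^2}\int_{B_{R_k}}|m_k - \bar m| > \varepsilon$. By the rescaling observation in Remark \ref{rmk:eta-minimizer} we may take $R_k = 1$ for all $k$, so $m_k$ is an $\eta_k$-minimizer on $B_1$ and $\frac{1}{1}\int_{B_1}|m_k - \bar m| > \varepsilon$ for every perimeter minimizer $\bar m$ in $B_1$.

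The key step is to extract a limit. First I would upgrade the $\eta_k$-minimality to a uniform perimeter bound: testing the defining inequality of an $\eta_k$-minimizer with a competitor that removes the interface inside a slightly smaller ball $B_\rho$ (e.g.\ filling $B_\rho$ with a single phase) gives $\int_{\overline{B_\rho}}|\nabla m_k| \le (1+\eta_k)\,\mathcal H^1(\partial B_\rho) \lesssim 1$ for a.e.\ $\rho < 1$, hence $\sup_k \int_{B_1}|\nabla m_k| < \infty$. By the BV compactness theorem for sets of finite perimeter (see \cite{AmbrosioFuscoPallara} or \cite{Maggi}), after passing to a subsequence $m_k \to m_\infty$ in $L^1(B_1)$, with $m_\infty : \R^2 \to \{-1,+1\}$ of finite perimeter. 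The next point is that $m_\infty$ is a perimeter minimizer in $B_1$: given any competitor $m'$ with $m' = m_\infty$ outside some ball $B \Subset B_1$, one modifies $m'$ near $\partial B$ on a thin annulus to match $m_k$ (using that $m_k \to m_\infty$ in $L^1$, so the interface of the modification contributes $o(1)$ perimeter after averaging the radius of the annulus à la Fubini), uses this as a competitor in the $\eta_k$-minimality of $m_k$, and then passes to the limit using lower semicontinuity of the perimeter on the left and the perimeter convergence of the modifications on the right, together with $\eta_k \to 0$. This yields $\int_B |\nabla m_\infty| \le \int_B |\nabla m'|$, i.e.\ $m_\infty$ minimizes the perimeter in $B_1$.

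Finally, $L^1$ convergence gives $\int_{B_1}|m_k - m_\infty| \to 0$, so for $k$ large $\int_{B_1}|m_k - m_\infty| \le \varepsilon$; since $m_\infty$ is a perimeter minimizer in $B_1$, this contradicts the standing assumption $\int_{B_1}|m_k - \bar m| > \varepsilon$ for all perimeter minimizers $\bar m$. This proves the lemma. I expect the main obstacle to be the competitor-matching step in showing $m_\infty$ is a perimeter minimizer: one must ensure that the surgery joining an arbitrary competitor $m'$ to the boundary datum of $m_k$ can be carried out with perimeter cost going to zero, which is handled by a mean-value (Fubini/coarea) choice of the annulus radius, exactly as in the classical proof that limits of almost-minimizers are minimizers; care is also needed because the ball $B$ in the definition of $\eta_k$-minimality must be strictly contained in $B_1$, so one works on a slightly shrunk ball and lets it exhaust $B_1$ at the end.
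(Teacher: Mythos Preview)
Your proposal is correct and follows essentially the same contradiction-and-compactness argument as the paper: rescale to $R=1$, obtain a uniform perimeter bound by comparing to a constant competitor, extract an $L^1$-limit by BV compactness, and derive a contradiction once the limit is shown to be a perimeter minimizer. The paper's proof is terser on the last point, simply invoking lower semicontinuity and $\eta_k\to 0$, whereas you spell out the standard surgery/annulus argument to match boundary data; this extra care is the right way to justify that step and does not constitute a different approach.
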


\begin{proof}
By Remark \ref{rmk:eta-minimizer} we assume w.l.o.g.~that $ R = 1 $. Let us assume there exists a sequence $ m_k $ of $ \eta_k $-minimizers on $ B_1 $ such that
\begin{align*}
\int_{ B_1 } | m_k - m_{ \rm Per } | > \varepsilon
\quad {\rm for~every~minimizer~of~the~perimeter~} m_{ \rm Per } ~ { \rm in } ~ B_1
\end{align*}
while $ \eta_k \searrow 0 $ as $ k \rightarrow \infty $. We may compare $ m_k $ to the configuration that is equal to $ 1 $ in $ B_{ 1 } $ to get
\begin{align*}
\int_{ \overline{ B_1 } } | \nabla m_k |
\leq ( 1 + \eta_k ) \int_{ \overline{ B_1 } } | \nabla ( \boldsymbol{1}_{ B_1 } + m_k \boldsymbol{1}_{ \R^2 \setminus B_1 } ) |
\leq 2 ( 1 + \eta_k ) | \partial B_1 |.
\end{align*}
The factor two on the r.h.s.~results from the jump height of two, since $ m \in \{ \pm 1 \} $. Combining this with the fact that $ -1 \leq m_k \leq 1 $, we learn that $ (m_k)_{k \in \N} $ is compact in $ L^1(B_1) $. Hence, we may assume that $ m_k \rightarrow m $ in $ L^1(B_1) $ as $ k \rightarrow \infty $. Following a routine computation, see e.g.~\cite[Theorem 21.14 \& Remark 21.17]{Maggi} or \cite[Lemma 9.1]{Giusti}, one can show that $ m $ is a minimizer of the perimeter in $ B_1 $. This contradicts our assumption. Since the details are somewhat technical, we defer them to Appendix \ref{app:minimality}.
\end{proof}

The next lemma post-processes the implicit argument from above. This is where it matters that we work in two dimensions: the minimizers of the perimeter problem are unions of line segments, see the discussion following the Lemma~\ref{lem:implicit-argument-postprocessing}. Hence, once we zoom in far enough, the previously constructed competitor is induced by a single line. The main difficulty at this place is to make this argument work under our weak assumptions.

\begin{lemma}\label{lem:implicit-argument-postprocessing}
There exists a $ 0 < \theta < 1 $ such that for every perimeter minimizer $ m_{ \rm Per } $ in $ B_R $ the jump set of $ m_{ \rm Per } $ in $ B_{ \theta R } $ is a line.
\end{lemma}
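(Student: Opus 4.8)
The plan is to exploit the classification of perimeter minimizers in the plane under minimal regularity assumptions (as referenced via \cite[Theorem 10.1]{FrankMorgan}): in $\R^2$, the reduced boundary $\partial^*\{m_{\rm Per}=+1\}$ inside any ball is a locally finite union of line segments, and by the same singular-set analysis invoked in Remark~\ref{rem:gmt} (dimension $d=2$, so $\Sigma=\emptyset$), the full topological boundary agrees with this up to $\hm^1$-null sets and is in fact a $C^{1,1}$ — indeed analytic — embedded $1$-manifold with zero curvature, i.e.\ a disjoint union of straight chords of $B_R$ (together with possibly closed components, which are excluded since a circle has nonzero curvature). By scaling (Remark~\ref{rmk:eta-minimizer}) we may take $R=1$. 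Thus $J_{m_{\rm Per}}\cap B_1$ consists of finitely many disjoint chords $\ell_1,\dots,\ell_N$ of $B_1$ (finiteness on $\cl{B_1}$ following from the density estimates \eqref{eqn:surface-density-estimate} applied at boundary points, or from a compactness argument).

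The core point is then purely geometric: I claim there is a universal $0<\theta<1$ such that \emph{any} finite family of disjoint chords of $B_1$ meets the concentric ball $B_\theta$ in at most one chord. Indeed, suppose two disjoint chords both intersect $B_\theta$. Being disjoint chords of the disk, they are ``parallel'' in the weak sense that one separates part of the disk; the key quantitative observation is that two disjoint chords each passing within distance $\theta$ of the center must be nearly parallel and at mutual distance $\le 2\theta$, yet disjointness inside $B_1$ forces the strip between them to reach the boundary — a configuration which, combined with both chords entering $B_\theta$, is impossible once $\theta$ is small enough. Concretely: a chord meeting $B_\theta$ has distance $<\theta$ from the origin, hence has length $>2\sqrt{1-\theta^2}$; two such disjoint chords bound a lens/strip region, and a short computation with the distance-from-origin of each chord shows they cannot both get within $\theta$ of the center while staying disjoint in $B_1$ unless $\theta\ge$ some explicit constant (one checks $\theta=\tfrac14$ works comfortably). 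So for $\theta$ below this threshold, $J_{m_{\rm Per}}\cap B_\theta$ is either empty or a single chord of $B_1$ — in particular a line segment, i.e.\ the intersection with $B_\theta$ of a single line.

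The one case still to dispatch is when $J_{m_{\rm Per}}\cap B_\theta=\emptyset$: then $m_{\rm Per}$ is constant on $B_\theta$, and the statement ``the jump set is a line'' should be read in the same spirit as the other statements in the excerpt (cf.\ the conventions around Definition~\ref{def:eta-minimizer} and \eqref{eqn:prop:iteration-step-assumption}) — a constant configuration is trivially equal on $B_\theta$ to some $m_{\rm line}$ whose jump set is a line disjoint from $B_\theta$, so there is nothing to prove; alternatively one simply notes the lemma will only ever be applied at base points lying in $J_m$, and closeness in $L^1$ (Lemma~\ref{lem:implicit-argument}) then forces $J_{m_{\rm Per}}$ to come near that point, ruling out the empty case for the relevant scale. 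I expect the main obstacle to be not the geometry but the \emph{regularity input}: carefully justifying, under the minimal assumptions in force here, that a planar perimeter minimizer has $J_{m_{\rm Per}}\cap B_1$ equal to finitely many disjoint line segments with no singular points — this is exactly the ``crux'' flagged in the footnote to \cite[Theorem 10.1]{FrankMorgan}, and the cleanest route is to cite that theorem together with the $\Sigma=\emptyset$ argument already spelled out in Remark~\ref{rem:gmt}, then add the short disjointness/finiteness argument from the density bounds \eqref{eqn:surface-density-estimate}.
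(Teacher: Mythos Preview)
Your proof has a genuine gap at the ``purely geometric'' step. The claim that for some universal $\theta$ \emph{any} finite family of disjoint chords of $B_1$ meets $B_\theta$ in at most one chord is simply false: take the two parallel chords $\{y=\epsilon\}\cap B_1$ and $\{y=-\epsilon\}\cap B_1$ for any $0<\epsilon<\theta$. These are disjoint in $B_1$, the strip between them reaches $\partial B_1$, and both enter $B_\theta$. Your heuristic (``nearly parallel, mutual distance $\le 2\theta$, strip reaches the boundary --- impossible'') describes exactly this configuration without producing any contradiction, and the assertion that ``$\theta=\tfrac14$ works comfortably'' is disproved by $\epsilon=\tfrac{1}{10}$.

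What is missing is a second use of \emph{minimality}, not just geometry. The paper's argument proceeds as follows: if two disjoint chords both meet $B_\theta$, each has length $\ge 2\sqrt{1-\theta^2}$ and (by the disjointness constraint you correctly identify) their endpoints on $\partial B_1$ pair up into two close pairs, each pair at distance $o(1)$ as $\theta\to 0$. One then builds a competitor by deleting the two chords (removing $\approx 4$ units of length) and reconnecting each close pair along $\partial B_1$ (adding $o(1)$ units), extending $m_{\rm Per}$ across the erased segments consistently. This strictly decreases the perimeter for $\theta$ small, contradicting that $m_{\rm Per}$ is a perimeter minimizer. Your regularity input and the reduction to finitely many disjoint chords are fine; you only need to replace the false geometric claim by this energy-comparison step.
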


Let us first explain the intuition behind Lemma \ref{lem:implicit-argument-postprocessing}: If the jump set contains a small curved arc, one can replace this arc by a line segment. By the triangle inequality, this reduces the perimeter of the configuration, while keeping its boundary conditions fixed. Hence, the jump set of any minimizer of the perimeter problem should be the union of line segments. The difficulty in making this rigorous is that a priori the jump set is the fairly rough boundary of a Caccioppoli set. In the proof given below, we use the classical, but heavy, machinery of small-scale regularity theory to overcome this difficulty.\footnote{An alternative proof is given in the introductory textbook \cite[Chapter 10]{FrankMorgan} on geometric measure theory. The argument therein avoids the use of heavy regularity theory, however, at the price of working in the context of currents.}

\begin{proof}[Proof of Lemma \ref{lem:implicit-argument-postprocessing}]
By Remark \ref{rmk:eta-minimizer} we assume w.l.o.g.~that $ R = 1 $. Since we are restricted to configurations on $ \R^2 $, we know that the jump set, or reduced boundary (see Remark \ref{rem:gmt}), of $ m_{ \rm Per } $ is the union of line segments. The regularity theory for two-dimensional minimal surfaces allows us to conclude as follows: In \cite[Theorem~on~p.~275]{Maggi} it is shown that any minimizer of the perimeter in dimension $ n \leq 7  $ is an analytic hypersurface of vanishing mean curvature. Since we are interested in the setting $ n = 2 $, this shows that the jump set of any minimizer of the perimeter is the union of non-intersecting analytic curves with curvature zero, i.e.~non-intersecting line segments.

By its finite length, the jump set consists of at most countably many line segments. Moreover, it follows that any open ball $ B \subseteq B_1 $ intersects only finitely many of these line segments. Furthermore, we learn from minimality that these line segments can only intersect on $ \partial B_1 $.

\begin{figure}[ht]
\begin{tikzpicture}[line join=round,>=triangle 45,x=1cm,y=1cm, scale=0.6]
\fill[line width=1pt,fill=black,fill opacity=0.25] (-0.5683502655950378,-3.9594163680519947) -- (-1.1450010149627492,3.8326195579178317) -- (0.96667283568455,3.881435768984151) -- (0.12298621767943245,-3.9981088517273897) -- cycle;
\draw [line width=1pt] (0,0) circle (4cm);
\draw [line width=1pt] (-0.5683502655950378,-3.9594163680519947)-- (-1.1450010149627492,3.8326195579178317);
\draw [line width=1pt] (0.12298621767943245,-3.9981088517273897)-- (0.96667283568455,3.881435768984151);
\draw [line width=1pt] (0,0) circle (1.6151871854679878cm);
\draw [line width=1pt, |-|] (4.8,3.8)-- (4.8,-4);
\draw [line width=1pt] (-0.5683502655950378,-3.9594163680519947)-- (-1.1450010149627492,3.8326195579178317);
\draw [line width=1pt] (-1.1450010149627492,3.8326195579178317)-- (0.96667283568455,3.881435768984151);
\draw [line width=1pt] (0.96667283568455,3.881435768984151)-- (0.12298621767943245,-3.9981088517273897);
\draw [line width=1pt] (0.12298621767943245,-3.9981088517273897)-- (-0.5683502655950378,-3.9594163680519947);
\draw (-0.3525196850136373,3.4752191789883584) node[anchor=north west] {$P$};
\draw [line width=1pt, |-|] (1,4.4)-- (-1.2,4.4);
\draw (-2.709628551131483,0.6342005456542573) node[anchor=north west] {$+$};
\draw (2.207519083485215,0.6342005456542573) node[anchor=north west] {$+$};
\draw (-0.39934966248617726,0.649810538145104) node[anchor=north west] {$-$};
\draw[color=black] (-3.5,2.9) node {$\partial B_1$};
\draw [fill=black] (-0.5683502655950378,-3.9594163680519947) circle (2.5pt);
\draw [fill=black] (-1.1450010149627492,3.8326195579178317) circle (2.5pt);
\draw [fill=black] (0.12298621767943245,-3.9981088517273897) circle (2.5pt);
\draw [fill=black] (0.96667283568455,3.881435768984151) circle (2.5pt);
\draw[color=black] (-2,1) node {$\partial B_{\theta}$};
\draw[color=black] (5.8,0.2595607258739363) node {$L\to 1$};
\draw[color=black] (0,4.802068540710328) node {$\ell \to 0$};
\end{tikzpicture}
\caption{Non-degenerate line configuration considered in the proof of Lemma~\ref{lem:implicit-argument-postprocessing}.}\label{fig:L33}
\end{figure}

Let us assume that at least two distinct line segments intersect $ B_{ \theta } $. All of them must extend as straight lines to the boundary $ \partial B_1 $. We now pick two of these line segments that are adjacent. To conclude, we will build a competitor as depicted in Figure~\ref{fig:L33}.

If $ \theta $ is small enough, the maximal distance $ \ell $ between the closest endpoints on $ \partial B_1 $ of these two lines becomes smaller than the minimal distance $ L $ of the endpoints belonging to the same line segment. In fact we have $ \ell \to 0 $ and $ L \to 1 $ as $ \theta \to 0 $. Further, let $ P \subset \R^2 $ denote the polygon that is spanned by the endpoints of the line segments on $ \partial B_1 $. We now compare $ m_{ \rm Per } $ to the configuration that has $ P $ removed; by the minimality\footnote{Technically speaking we have to first consider polygons $ P' $ compactly contained in $ P $, in which case $  m_{ \rm Per } \boldsymbol{1}_{ \R^2 \setminus P ' } - m_{ \rm Per } \boldsymbol{1}_{ P ' } $ is an admissible competitor in \eqref{eqn:per-minimality}, so that
\begin{align*}
	\int_{ B_1 } | \nabla m_{ \rm Per } |
	&\leq \int_{ B_1 } | \nabla ( m_{ \rm Per } \boldsymbol{1}_{ \R^2 \setminus P ' } - m_{ \rm Per } \boldsymbol{1}_{ P ' } ) | .
\end{align*}
Now, by lower semicontinuity of the perimeter w.r.t.~convergence in $ L^1 $ the above identity also holds for $ P $ upon approximating $ P ' \to P $.} \eqref{eqn:per-minimality} of $ m_{ \rm Per } $ we thus learn
\begin{align*}
	\int_{ B_1 } | \nabla m_{ \rm Per } |
	&\leq \int_{ B_1 } | \nabla ( m_{ \rm Per } \boldsymbol{1}_{ \R^2 \setminus P } - m_{ \rm Per } \boldsymbol{1}_{ P } ) | \\
	&\leq \int_{ B_1 \cap \mathring{P}  } | \nabla m_{ \rm Per }  |
	+ \int_{ B_1 \cap \overline{P}^c } | \nabla m_{ \rm Per }  |
	+ \int_{ \partial P } | m_{ \rm Per } \lfloor_{ \R^2 \setminus P } - m_{ \rm Per } \lfloor_{ P }   | .
\end{align*}
Comparing the last term to $ \frac{ 1 }{ 2 } \int_{ \partial P } | \nabla m_{ \rm Per } | $, we removed at least $ 2 L $ and add at most $ 2 \ell $ to the perimeter. Thus we obtain
\begin{align*}
	\int_{ B_1 } | \nabla m_{ \rm Per } |
	\leq \int_{ B_1 } | \nabla ( m_{ \rm Per } \boldsymbol{1}_{ \R^2 \setminus P } - m_{ \rm Per } \boldsymbol{1}_{ P }  ) |
	\leq \int_{ B_1 } | \nabla m_{ \rm Per } | - 4 L + 4 \ell
	\stackrel{ \theta \to 0 }{ \rightarrow } \int_{ B_1 } | \nabla m_{ \rm Per } | - 4 ;
\end{align*}
a contradiction for small $ \theta $.
\end{proof}

The proof of Proposition \ref{prop:approx-halfspace-l1} now easily follows:

\begin{proof}[Proof of Proposition \ref{prop:approx-halfspace-l1}]
	Let $ \theta $ be the universal constant in Lemma~\ref{lem:implicit-argument-postprocessing}. By Lemma~\ref{lem:implicit-argument} we know that for the choice $ \theta^2 \varepsilon > 0 $ exists an $ \eta_0 = \eta_0 ( \theta, \varepsilon ) $ such that every $ \eta $-minimizer $ m $ on $ B_R $ with $ \eta \leq \eta_0 $ there exists a minimizer $ m_{ \rm Per } $ of the perimeter in $ B_R $ such that
	\begin{align*}
		\frac{1}{R^2} \int_{ B_R } | m - m_{ \rm Per } | \leq \theta^2 \varepsilon.
	\end{align*}
	By the choice of $ \theta $ we know that the jump set of $ m_{ \rm Per } $ in $ B_{ \theta R } $ consists of a (possibly empty) line segment. Thus, the desired $ m_{ \rm line } $ is given by the trivial extension of the former configuration to the whole space.
\end{proof}

Before we continue towards the proof of Proposition~\ref{prop:iteration-step}, let us interpret the proposition we just proved. We think of the quantity $ \frac{1}{R^2} \int_{ B_R } | m - m_{ \rm line } | $ as an excess that measures flatness (in a weak sense). Let us remark, though, that in classical small-scale regularity theory the excess is typically defined in terms of a stronger energy norm. Again, it is the two dimensional setting that enables us to conclude Theorem~\ref{thm:mr} from this weaker notion of excess, see also Remark~\ref{rmk:strong-exc-bound} at the end of this section.

For the proof of Proposition~\ref{prop:iteration-step}, we will need to first prove some auxiliary lemmas. We first show that we can find a ``good'' radius: Provided the excess is small, we can find at least one radius (which does not degenerate to zero) such that on the sphere of that radius our minimizer is close to a line.

To keep the next statements simple, we are a bit imprecise\footnote{This is also justified by the fact that due to the assumptions \eqref{eqn:noise-assumption} on the noise, we are effectively constrained to a class of Caccioppoli sets with smooth boundaries, see again~\cite[Example 21.2 \& Theorem 21.8]{Maggi}.}
 when it comes to boundary traces. This is justified by the following remark.
 
 \begin{remark} For a function of bounded variation the inner trace (which we denote by $ m \lfloor_{ \partial B_r } $) and outer trace (which we denote by $ m \lfloor_{ \partial(\R^2 \setminus B_r) } $), both w.r.t.~$ \partial B_r $, agree for almost every radius $ r $ as $ L^1 $-functions on $ \partial B_r $, see e.g.~\cite[Remark 3.1]{Giusti}.\footnote{This justifies that we do not introduce new notation for the traces of $m$ on spheres.} For these radii we can still identify $ J_m \cap \partial B_r $ and $ \partial^* \{ m = 1 \} \cap \partial B_r $, since they have the same $ \mathscr{H}^{1} $-measure, see Remark~\ref{rem:gmt}.
 \end{remark}
 
In spirit of the above remark, we implicitly assume that whenever we write down a boundary term we work with one of the aforementioned radii.

\begin{lemma}\label{lem:few-jumps}
There exists $ \varepsilon_0 > 0 $ and $ \eta_0 > 0 $ such that for every $ \varepsilon \leq \varepsilon_0 $ and $ \eta \leq \eta_0 $ the following holds: for every $ \eta $-minimizer $ m $ on $ B_R $ such that
\begin{align*}
\frac{1}{R^2} \int_{ B_R } | m - m_{ \rm line } | \leq \varepsilon
\end{align*}
for some $ m_{ \rm line } $ with jump set equal to a line, there exists a radius $ \frac{1}{16} R \leq r \le \frac{15}{16} R $ such that
\begin{align*}
\frac{1}{ r } \int_{ \partial B_r } | m - m_{ \rm line } |
\lesssim \, \frac{1}{R^2} \int_{ B_R } | m - m_{ \rm line } |
\end{align*}
and $ m \lfloor_{ \partial B_r } $ has either zero or two jumps on $ \partial B_r $. 
\end{lemma}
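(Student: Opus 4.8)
The plan is to exploit two facts: (i) an averaging / Fubini argument over radii $r$ in the annulus $\frac1{16}R \le r \le \frac{15}{16}R$ produces many good radii on which the boundary $L^1$-error is controlled by the bulk excess, and (ii) on such a radius the small bulk excess, together with $\eta$-minimality, forces $m|_{\partial B_r}$ to agree with $m_{\rm line}|_{\partial B_r}$ on most of $\partial B_r$ — and since $m_{\rm line}$ (whose jump set is a line through, or near, the origin) has exactly two jumps on $\partial B_r$, $m|_{\partial B_r}$ can have only a controlled number of jumps, which we then upgrade to exactly zero or two.

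First I would set $R=1$ by the rescaling in Remark~\ref{rmk:eta-minimizer}. Writing $e \coloneq \int_{B_1}|m-m_{\rm line}|$ and using the coarea-type identity $\int_{\frac1{16}}^{\frac{15}{16}} \big(\int_{\partial B_r}|m-m_{\rm line}|\big)\,\dd r \le \int_{B_1}|m-m_{\rm line}| = e$, the set of radii with $\int_{\partial B_r}|m-m_{\rm line}| \le C e$ has measure at least (say) $\tfrac12$ of the annulus width, for a suitable universal $C$; call this set $G_1$. Simultaneously, since $m$ is a BV function, for a.e.\ $r$ the inner and outer traces on $\partial B_r$ coincide (as recalled before the lemma), and $\int_{\frac1{16}}^{\frac{15}{16}} \mathscr H^0(J_m \cap \partial B_r)\,\dd r \le \int_{B_1 \setminus B_{1/16}}|\nabla m| \le \int_{\overline{B_1}}|\nabla m|$, which by comparison with the constant competitor (as in the proof of Lemma~\ref{lem:implicit-argument}) is $\le 2(1+\eta)|\partial B_1| \lesssim 1$; hence the set $G_2$ of radii with $\mathscr H^0(J_m \cap \partial B_r) \le N$ has measure at least (say) $\tfrac12$ the annulus width for a universal integer $N$. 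Since $|G_1 \cap G_2| > 0$, pick $r$ in this intersection; this already gives the claimed $L^1$ bound on $\partial B_r$ and a uniform bound on the number of jumps.

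It remains to promote "at most $N$ jumps" to "zero or two jumps". Here I would argue as follows. The trace bound $\int_{\partial B_r}|m-m_{\rm line}| \lesssim e \ll 1$ forces $m|_{\partial B_r}$ to equal $m_{\rm line}|_{\partial B_r}$ except on a subset of $\partial B_r$ of small $\mathscr H^1$-measure; in particular the two arcs of $\partial B_r$ on which $m_{\rm line} \equiv +1$ resp.\ $m_{\rm line}\equiv -1$ each contain a point where $m$ takes the corresponding value, so $m|_{\partial B_r}$ is not constant — it has at least two jumps — unless the line $J_{m_{\rm line}}$ misses $B_r$ entirely, in which case $m|_{\partial B_r}$ is (up to the small bad set) constant and one shows it has zero jumps. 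To rule out $4, 6, \dots$ jumps I would use $\eta$-minimality on $B_r$ itself: if $m|_{\partial B_r}$ had $2k \ge 4$ jumps, one can build a competitor $m'$ agreeing with $m$ outside $B_r$ whose jump set inside $B_r$ consists of $k$ chords pairing up consecutive boundary jump points; the total chord length is at most the length of $2k$ radii plus lower-order terms, while any configuration with the given non-constant boundary data has perimeter in $B_r$ at least the diameter $2r$ — comparing, the $\eta$-minimality inequality $\int_{\overline{B_r}}|\nabla m| \le (1+\eta)\int_{\overline{B_r}}|\nabla m'|$ is violated once $k \ge 2$ and $\eta \le \eta_0$, because having more than one chord is strictly more expensive than a single diameter by a universal amount, not absorbable into the factor $(1+\eta)$. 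This last step — making the competitor construction and the length bookkeeping rigorous under the weak regularity available — is the main obstacle, and it is exactly the kind of elementary two-dimensional geometric argument the authors advertise; the key quantitative input is that "one chord vs.\ two-or-more chords" is a gap of order one, so it survives multiplication by $(1+\eta)$.
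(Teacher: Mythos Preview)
Your averaging argument for the $L^1$ boundary bound (the set $G_1$) is fine and matches the paper. The gap is in how you handle the jump count. Your competitor argument to upgrade ``at most $N$ jumps'' to ``zero or two jumps'' does not yield a contradiction: if $m|_{\partial B_r}$ has $2k\ge 4$ jumps, $\eta$-minimality only says that $\int_{\overline{B_r}}|\nabla m|$ is at most $(1+\eta)$ times the perimeter of \emph{any} competitor with the same boundary data---in particular at most $(1+\eta)$ times the cost of your $k$-chord competitor. This is an \emph{upper} bound on $\int_{\overline{B_r}}|\nabla m|$, not a lower bound, and there is no universal ``one chord vs.\ two chords'' gap to exploit here (think of four jump points arranged in two close pairs: the minimal chord configuration is then very short, and an $\eta$-minimizer with such boundary data has small perimeter inside, with no contradiction whatsoever). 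So this step fails as written.

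The paper avoids this issue entirely by sharpening your perimeter bound in step (ii). Instead of comparing $m$ to a constant competitor (which gives $\int_{\overline{B_1}}|\nabla m|\le 2(1+\eta)|\partial B_1|\approx 4\pi$), the paper first picks, by the same averaging, a radius $\frac{7}{8}<r_1<1$ with small trace error and compares $m$ to $m_{\rm line}$ on $B_{r_1}$: this yields the much sharper bound $\int_{\overline{B_{r_1}}}|\nabla m|\le (1+\eta)(4r_1+16\varepsilon)$, which is close to $4$, not $4\pi$. Now the coarea/slicing inequality $\int_0^{r_1}\big(\int_{\partial B_r}|\nabla^{\rm tan}m|\big)\,dr\le\int_{B_{r_1}}|\nabla m|$ together with the fact that $\int_{\partial B_r}|\nabla^{\rm tan}m|$ equals twice the (even) number of jumps gives $8\cdot|\{r:\ \#\text{jumps}\ge 4\}|\le (1+\eta)(4+16\varepsilon)$, so the bad set has measure at most $\tfrac12(1+\eta)(1+4\varepsilon)<\tfrac{3}{4}-\tfrac{1}{8}$ for $\eta,\varepsilon$ small. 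Intersecting with $G_1$ directly produces a radius in $[\tfrac1{16},\tfrac{15}{16}]$ with exactly zero or two jumps, with no further competitor argument needed. The moral: compare to $m_{\rm line}$, not to the constant, so that the total-variation bound is tight enough for the pigeonhole to force $\le 2$ jumps immediately.
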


We will actually prove that the set of radii satisfying the condition of the lemma has positive measure. This is important in view of our implicit assumption concerning boundary traces as explained before Lemma~\ref{lem:few-jumps} above.

The main idea behind the lemma is the following. Since for any line configuration $ m_{ \rm line } $ the jump set has two intersection points with a sphere, and $ m $ is close to such a configuration, we expect this property to carry over to sufficiently many spheres. More precisely, since $ \frac{ 1 }{ R^2 } \int_{ B_R } | m - m_{ \rm line } | \leq \varepsilon $, the layer cake formula implies that there exists a ``good'' radius such that $ \frac{ 1 }{ r_1 } \int_{ \partial B_{ r_1 } } | m - m_{ \rm line } | \lesssim \varepsilon $. Thus, by $ \eta $-minimality the perimeter $ \frac{ 1 }{ 2 } \int_{ B_{ r_1 } } | \nabla m | $ behaves like the perimeter $ \frac{ 1 }{ 2 } \int_{ B_{ r_1 } } | \nabla m_{ \rm line } | \leq 2 r_1 $. Geometrically, this implies that the jump set $ J_m $ is short enough, so that it cannot intersect too many spheres in more than two points. 

\begin{figure}[ht]

\begin{tikzpicture}[line cap=round,line join=round,>=triangle 45,x=1cm,y=1cm,scale=.5]

\draw [shift={(-2.362557200696298,-2.0368889612724157)},line width=1pt,color=blue]  plot[domain=2.0119624224748316:3.5425129709957828,variable=\t]({1*1.9381325132715082*cos(\t r)+0*1.9381325132715082*sin(\t r)},{0*1.9381325132715082*cos(\t r)+1*1.9381325132715082*sin(\t r)});
\draw [shift={(-2.3787497316935813,-2.218338965471483)},line width=1pt,color=blue]  plot[domain=0.9801426899712782:1.968025427810441,variable=\t]({1*2.097319423659996*cos(\t r)+0*2.097319423659996*sin(\t r)},{0*2.097319423659996*cos(\t r)+1*2.097319423659996*sin(\t r)});
\draw [shift={(-0.1646193056715446,0.9213684497996446)},line width=1pt,color=blue]  plot[domain=4.069880889737176:5.09549199672038,variable=\t]({1*1.7458536982086001*cos(\t r)+0*1.7458536982086001*sin(\t r)},{0*1.7458536982086001*cos(\t r)+1*1.7458536982086001*sin(\t r)});
\draw [shift={(0.049666708352282304,0.5307416875737344)},line width=1pt,color=blue]  plot[domain=-1.2281301079743177:0.5192317559801743,variable=\t]({1*1.3045099790868204*cos(\t r)+0*1.3045099790868204*sin(\t r)},{0*1.3045099790868204*cos(\t r)+1*1.3045099790868204*sin(\t r)});
\draw [shift={(0.503419150261584,0.7429559272435011)},line width=1pt,color=blue]  plot[domain=0.5699959683328197:2.835362851653719,variable=\t]({1*0.8062969803606828*cos(\t r)+0*0.8062969803606828*sin(\t r)},{0*0.8062969803606828*cos(\t r)+1*0.8062969803606828*sin(\t r)});
\draw [shift={(-0.8942081124498833,1.27055041475496)},line width=1pt,color=blue]  plot[domain=3.254437654806985:5.858290619960291,variable=\t]({1*0.690214041232061*cos(\t r)+0*0.690214041232061*sin(\t r)},{0*0.690214041232061*cos(\t r)+1*0.690214041232061*sin(\t r)});
\draw [shift={(-0.6141846443963681,1.3954773407266599)},line width=1pt,color=blue]  plot[domain=1.6966459167558359:3.3484072812584937,variable=\t]({1*0.9868779750130782*cos(\t r)+0*0.9868779750130782*sin(\t r)},{0*0.9868779750130782*cos(\t r)+1*0.9868779750130782*sin(\t r)});
\draw [shift={(-3.0217338202190613,11.448298105018896)},line width=1pt,color=blue]  plot[domain=4.958947962728468:5.156686099442129,variable=\t]({1*9.356713300724762*cos(\t r)+0*9.356713300724762*sin(\t r)},{0*9.356713300724762*cos(\t r)+1*9.356713300724762*sin(\t r)});
\draw [shift={(0.4211373287902274,3.9564464381716733)},line width=1pt,color=blue]  plot[domain=-1.02654577756614:0.06490823008305485,variable=\t]({1*1.1179766460939127*cos(\t r)+0*1.1179766460939127*sin(\t r)},{0*1.1179766460939127*cos(\t r)+1*1.1179766460939127*sin(\t r)});
\draw [shift={(-1.2188184326368205,3.8855949476395537)},line width=1pt,color=blue]  plot[domain=0.051980845521262906:0.3396891196207035,variable=\t]({1*2.7593051660724943*cos(\t r)+0*2.7593051660724943*sin(\t r)},{0*2.7593051660724943*cos(\t r)+1*2.7593051660724943*sin(\t r)});

\draw [line width=1pt] (0,0) circle (5cm);
\draw [line width=1pt,color=red] (0,0) circle (1.5cm);
\draw [line width=1pt,color=YellowGreen] (0,0) circle (4.5cm);

\end{tikzpicture}

\caption{Choice of ``good'' radius (in {\color{YellowGreen}green}) intersecting the jump set in two points, and a ``bad'' radius (in {\color{red}red}) with more intersection points.}\label{fig:good-radius}	
\end{figure}

To implement the above idea rigorously, we would like to use identity
\begin{align}
	 \int_{B_1} | \nabla m | = \int_0^1\int_{ \partial B_r } | \nabla m | \geq \int_0^1 \int_{ \partial B_r } | \nabla^{ \rm tan } m |, \label{eqn05} 
\end{align}
which holds for any smooth function $ m $, where $ \nabla^{ \rm tan } m (x) $ denotes the tangential gradient, i.e~the orthogonal projection of $ \nabla m $ onto the tangent space of the circle. In our setting, where $ m $ is not smooth but takes values in $ \{ \pm 1 \} $, the estimate \eqref{eqn05} is still true if the r.h.s.~is interpreted as the variation of $ m $ in the angular variable, that is, the number of jumps of $ t \mapsto m ( r e^{ i t } ) $. 

Indeed, for any function of bounded variation we can define the r.h.s.~of \eqref{eqn05} via $ \int_{ \partial B_r } | \nabla^{ \rm tan } m | = V_{ \rm per } ( t \mapsto m ( r \mathrm{e}^{ i t}  ) ) $, where $ V_{ \rm per } ( \cdot ) $ denotes the (periodic) variation, see \cite[Definition 3.4 \& Discussion around (3.99)]{AmbrosioFuscoPallara} for further details. Since the latter expression is lower semicontinuous w.r.t.~convergence in $ L^1( \partial B_r ) $, an approximation as in \cite[Theorem 3.9]{AmbrosioFuscoPallara} applies and \eqref{eqn05} holds. Furthermore, we note that by the one-dimensional characterization of functions of bounded variation, see \cite[Proposition 3.52 \& Theorem 3.28]{AmbrosioFuscoPallara}, $ V_{ \rm per } ( t \mapsto m ( r \mathrm{e}^{ i t}  ) )  $ can be understood as counting the jumps of $ m $ on $ \partial B_r $. For almost every $ r $, the aforementioned jump points agree with $ J_m \cap \partial B_r $, see \cite[Theorem 3.10]{AmbrosioFuscoPallara}.

\begin{proof}[Proof of Lemma \ref{lem:few-jumps}.]
By Remark \ref{rmk:eta-minimizer} we assume w.l.o.g.~that $ R = 1 $. For ease of notation we may pass to a smaller $ \varepsilon $ such that
\begin{align*}
	\frac{1}{R^2} \int_{ B_R } | m - m_{ \rm line } | = \varepsilon
	\quad \text{holds, and the claim becomes} \quad
	\frac{1}{ r } \int_{ \partial B_r } | m - m_{ \rm line } | \lesssim \varepsilon 
\end{align*}
for some $ \frac{1}{16} R \leq r \le \frac{15}{16} R $. To see this, we first note that there exists a radius $ \frac{7}{8} < r_1 < 1 $ such that
\begin{align*}
\int_{ \partial B_{r_1} } |  m - m_{ \rm line } | \leq 16 \varepsilon.
\end{align*}
Indeed, by Markov's inequality
\begin{align*}
\Big| \Big\{ \frac{7}{8} < r_1 < 1 ~\Big|~ \int_{ \partial B_{ r_1 } } | m - m_{ \rm line } | > 16 \varepsilon \Big\} \Big|
\leq \frac{1}{16 \varepsilon} \int_{ \frac{7}{8} } ^1 \int_{ \partial B_{ r_1 } } | m - m_{ \rm line } | d r_1 \leq \frac{1}{16},
\end{align*}
so that the complement has positive Lebesgue-measure. In particular, we can pick a radius $ \frac{ 7 }{ 8 } < r_1 < 1 $ such that inner and outer trace of $ m $ on $ \partial B_{ r_1 } $ agree.

For the fixed $ r_1 $ from above, we now compare $ m $ to $ m_{ \rm line } $ on $ B_{ r_1 } $ to obtain
\begin{align*}
\int_{ \overline{ B_{r_1} } } | \nabla m |
\leq ( 1 + \eta ) \left( \int_{ B_{r_1} } | \nabla m_{ \rm line } | + \int_{ \partial B_{r_1} } |  m \lfloor_{ \R^2 \setminus B_{ r_1 } } - m_{ \rm line } | \right)
\leq ( 1 + \eta ) ( 4 r_1 + 16 \varepsilon ).
\end{align*}
Since for almost every $ 0 < r_2 < r_1 $, the integral $ \int_{ \partial B_{ r_2 } } | \nabla^{ \rm tan } m | $ is equal to twice (recall that the jump height is $ 2 $) the number of jumps (which is an even number) of $ m $ on $ \partial B_r $, we have
\begin{align*}
8 \Big| \Big\{ 0 < r_2 < r_1 ~ \Big| ~ \int_{ \partial B_{ r_2 } } | \nabla^{ \rm tan } m | \notin \{ 0, 4 \} \Big\} \Big|
& \leq \int_0^{r_1} \left( \int_{ \partial B_{ r_2 } } | \nabla^{ \rm tan } m | \right) dr_2 \\
& \leq \int_0^{r_1} \left( \int_{ \partial B_{ r_2 } } | \nabla m | \right) dr_2
\leq \int_{ B_{r_1} } | \nabla m |,
\end{align*}
which implies
\begin{align*}
\Big| \Big\{ 0 < r_2 < \frac{7}{8} ~ \Big| ~ \int_{ \partial B_{ r_2 } } | \nabla_{ \partial B_{ r_2 } } m | \notin \{ 0, 4 \} \Big\} \Big|
\leq \frac{1}{2} ( 1 + \eta ) ( 1 + 4 \varepsilon ).
\end{align*}
Furthermore, we have
\begin{align*}
\Big| \Big\{ 0 < r_2 < \frac{7}{8} ~ \Big| ~ \int_{ \partial B_{ r_2 } } | m - m_{ \rm line } | \geq 8 \varepsilon \Big\} \Big|
\leq \frac{1}{8}.
\end{align*}
Hence, for small enough $ \eta, \varepsilon \ll 1 $ the union of the above two sets has at most Lebesgue measure $ \frac{3}{4} $. Therefore, its complement in $ [0, \frac{7}{8}] $ has Lebesgue measure larger than $ \frac{1}{8} $, so that it must intersect with $ [ \frac{1}{16}, \frac{7}{8} ] $. Hence, we can find a radius $ \frac{1}{16} \leq r_2 \leq \frac{15}{16} $ such that the claim of Lemma \ref{lem:few-jumps} holds.
\end{proof}

Next, we will prove an analog of what is known as the height bound in the classical regularity theory of minimal surfaces, see e.g.~\cite[Section 22]{Maggi}. In our two dimensional setting, this can be done using fairly elementary geometric arguments. In essence, we solve the following optimization problem: given two points in the plane connected by a curve of a given length, how far can this curve deviate from the line connecting the two points?

\begin{lemma}[Height bound]\label{lem:height-bound}
Let $ \gamma $ be a curve in $ B_R $ that starts and ends at points $ A, B \in \partial B_R $ and satisfies
\begin{align*}
{ \rm Length } (\gamma) \leq ( 1 + \eta ) | A - B |
\end{align*}
for some $ 0 \leq \eta \leq 1 $; then
\begin{align*}
\frac{ { \rm dist } ( \gamma, { \rm conv } ( A, B ) ) }{ R }
\lesssim \sqrt{ \eta } \frac{| A - B |}{R},
\end{align*}
where $  { \rm dist } ( \gamma, { \rm conv } ( A, B ) ) $ refers to the Hausdorff distance\footnote{For two sets $M, N$ the (two-sided) Hausdorff distance is given by \begin{align*}\mathrm{dist}(M,N) \coloneq \max\big\{\sup_{p \in M} \mathrm{dist}(p,N), \sup_{q \in N} \mathrm{dist}(q,M)\big\}.\end{align*}} between the curve $ \gamma $ and the line segment connecting $ A $ and $ B $.
\end{lemma}

\begin{proof}
We assume w.l.o.g.~that $ R = 1 $. 
Since $ { \rm conv } (A,B) $ is a line segment with the same start and end point as $\gamma$, $ h \coloneq { \rm dist } ( \gamma, { \rm conv } (A, B) ) $ is in fact equal to the one-sided Hausdorff distance,~i.e.
\begin{align}\label{eqn:height-bound-2}
h = \sup_{ y \in \gamma } { \rm dist } ( y, { \rm conv } ( A, B ) ).
\end{align}

First, let us note that the maximum value of $ h $ (taken over all admissible curves $ \gamma $) is equal to the maximum of $ h $ taken over all curves that consist of two line segments, i.e.
\begin{align}
	\sup_{ { \rm admissible } ~ \gamma } h &= \sup \{ { \rm dist } ( C, { \rm conv } ( A, B ) ) ~ | ~ C \in B_1 ~ { \rm s.t. } ~ | A - C | + | B - C | \leq ( 1 + \eta ) | A - B | \}.  \label{eqn:height-bound-2-b}
\end{align}
\begin{figure}
\centering

\begin{minipage}{.48\textwidth}
\centering
\begin{tikzpicture}[line join=round,>=triangle 45,x=1cm,y=1cm]
\clip(-3,-3) rectangle (3.5,2.8);
\draw [shift={(-1.3409464800701152,-0.15905351992988478)},line width=1pt,color=blue]  plot[domain=1.317647092817942:2.9047845613130194,variable=\t]({1*0.6779746044905054*cos(\t r)+0*0.6779746044905054*sin(\t r)},{0*0.6779746044905054*cos(\t r)+1*0.6779746044905054*sin(\t r)});
\draw [shift={(-1.0482116589162211,1.4842785543808061)},line width=1pt,color=blue]  plot[domain=4.588470332215275:5.917146629095273,variable=\t]({1*0.9945921752743889*cos(\t r)+0*0.9945921752743889*sin(\t r)},{0*0.9945921752743889*cos(\t r)+1*0.9945921752743889*sin(\t r)});
\draw [shift={(0.46667181722723083,0.8534370808551299)},line width=1pt,color=blue]  plot[domain=0.05150111521724326:2.703136442553309,variable=\t]({1*0.6474213281782358*cos(\t r)+0*0.6474213281782358*sin(\t r)},{0*0.6474213281782358*cos(\t r)+1*0.6474213281782358*sin(\t r)});
\draw [shift={(3.211017859025946,1.1332939321536868)},line width=1pt,color=blue]  plot[domain=3.258574753614191:3.707926381952951,variable=\t]({1*2.1122193101324527*cos(\t r)+0*2.1122193101324527*sin(\t r)},{0*2.1122193101324527*cos(\t r)+1*2.1122193101324527*sin(\t r)});
\draw [shift={(1.7142857142857144,0.2217322961534863)},line width=1pt,color=blue]  plot[domain=3.8015665185708345:5.623211442198545,variable=\t]({1*0.3616598736640848*cos(\t r)+0*0.3616598736640848*sin(\t r)},{0*0.3616598736640848*cos(\t r)+1*0.3616598736640848*sin(\t r)});
\draw [line width=1pt] (0,0) circle (2cm);
\draw [line width=1pt,dash pattern=on 1pt off 1pt] (0.5,0)-- (0.5,1.5);
\draw [line width=1pt] (-2,0)-- (0.5,1.5);
\draw [line width=1pt] (0.5,1.5)-- (2,0);
\draw [line width=1pt] (-2,0)-- (2,0);
\begin{scriptsize}
\draw [fill=black] (-2,0) circle (1pt);
\draw[color=black] (-2.2,0) node {$A$};
\draw [fill=black] (2,0) circle (1pt);
\draw[color=black] (2.2,0) node {$B$};
\draw [fill=black] (0.5,1.5) circle (1pt);
\draw[color=black] (0.5,1.75) node {$C$};
\draw[color=black] (0.65,0.75) node {$h$};
\draw[color=black] (0,-0.3) node {$\mathrm{conv}(A,B)$};
\draw[color=blue] (-1.2,0.8) node {$\gamma$};
\end{scriptsize}
\end{tikzpicture}
\end{minipage}
\hfill
\begin{minipage}{.48\textwidth}
\centering
\begin{tikzpicture}[line cap=round,line join=round,>=triangle 45,x=1cm,y=1cm]
\clip(-3,-3) rectangle (3.5,2.8);
\draw [line width=1pt] (0,0) circle (2.0987428108210455cm);
\draw [line width=1pt] (-1.8361832612315765,1.0164410544374896)-- (2.0987428108210455,0);
\draw [rotate around={-14.483620895443744:(0.13145057576063224,0.5081764071741245)},line width=1pt,color=DarkGreen,fill=DarkGreen,fill opacity=0.15] (0.13145057576063224,0.5081764071741245) ellipse (2.2350709328755056cm and 0.9303903518687997cm);
\draw [line width=1pt,color=DarkGreen] (0.13127977479473452,0.5082205272187448)-- (0.36397340636002584,1.4090423049789844);
\draw [line width=1pt,color=DarkGreen] (0.13127977479473452,0.5082205272187448)-- (2.295489114423676,-0.050822052721874456);
\begin{scriptsize}
\draw [fill=black] (-1.8361832612315765,1.0164410544374896) circle (1pt);
\draw[color=black] (-1.7,.8) node {$A$};
\draw [fill=black] (2.0987428108210455,0) circle (1pt);
\draw[color=black] (1.9351736038303906,-0.12119276492332029) node {$B$};
\draw[color=DarkGreen] (0.4,.9) node {$h$};
\draw[color=DarkGreen] (1,-.1) node {$\frac{(1+\eta)|A-B|}{2}$};
\end{scriptsize}
\end{tikzpicture}
\end{minipage}

\caption{Illustration of the proof of Lemma~\ref{lem:height-bound}; Left: Reduction to curves consisting of two line segments $ { \rm conv } ( A , C ) $ and $ { \rm conv } ( C , B ) $; Right: Admissible points $ C $ subject to the constraint $ | A - C | + | B - C | \leq ( 1 + \eta ) | A - B | $.}\label{fig:lem35}
\end{figure}%
The idea behind the characterization \eqref{eqn:height-bound-2-b} is illustrated in Figure~\ref{fig:lem35} (left); its rigorous justification is deferred to the end of the proof.

Equipped with \eqref{eqn:height-bound-2-b} we are able to conclude by elementary geometric arguments. By convexity, we know that the maximum point $ C $ must lie on the boundary of the admissible set, i.e.~satisfy
\begin{align*}
	 | A - C | + | B - C | = ( 1 + \eta ) | A - B | ;
\end{align*}
that is, $ C $ lies on the ellipse with focal points $ A $ and $ B $ and semi-major axis of length $ \frac{ 1 }{ 2 } ( 1 + \eta ) | A - B | $, see Figure~\ref{fig:lem35} (right). Moreover, by optimality of $ C $ we know that $ { \rm dist } ( C, { \rm conv } ( A , B ) ) $ is precisely the length of the semi-minor axis and thus by Pythagoras
\begin{align*}
	 { \rm dist } ( C, { \rm conv } ( A , B ) ) ^2 
	 =  \frac{ 1 }{ 4 } ( 1 + \eta )^2 | A - B |^2 - \frac{ 1 }{ 4 } | A - B |^2
	 =  \frac{ 1 }{ 4 } ( \eta^2 + 2 \eta ) |A-B|^2 .
\end{align*}
Since $ \eta $ is small we have $  \frac{ 1 }{ 4 } ( \eta^2 + 2 \eta ) \lesssim \eta $, which, inserted in \eqref{eqn:height-bound-2-b}, implies the claim.

Finally, here comes the justification of \eqref{eqn:height-bound-2-b}: If for some admissible $ \gamma $ the height $ h $ is attained at a point $ C \in \gamma $, we know that the curve $ \gamma' = { \rm conv }(A,C) \cup { \rm conv }(C,B) $, satisfies
\begin{align*}
	h
	= \sup_{ y \in \gamma ' } { \rm dist } ( y, { \rm conv } ( A, B ) )
	= { \rm dist } ( C, { \rm conv } ( A , B ) ) .
\end{align*}
Moreover, by the triangle inequality we have $ { \rm Length } ( \gamma ' ) \leq  { \rm Length } ( \gamma ) $ and thus also
\begin{align*}
	{ \rm Length } ( \gamma ' ) \leq { \rm Length } ( \gamma ) \leq ( 1 + \eta ) | A - B | .
\end{align*}
Since $ { \rm Length } ( \gamma ' ) = | A - C | + | C - B | $, the value of $ h $ for our choice of $ \gamma $ can always be recovered by a curve joining two line segments. Therefore equality in \eqref{eqn:height-bound-2-b} holds.
\end{proof}

The next lemma is well-known in the regularity theory of minimal surfaces, and there are many versions of it. In two dimensions, there is a clear picture to have in mind: zooming in on sufficiently small scales, any minimizer of the perimeter looks like a line. Hence it evenly divides any ball centered on a point of the surface into two pieces of equal volume, which is therefore comparable to the volume of the ball. In terms of estimates, this behavior carries over to $ \eta $-minimizers:

\begin{lemma}[Density estimates]\label{lem:density-bounds}
Let $ m $ be an $ \eta $-minimizer in $ B_R $ such that $ 0 \in J_{ m } $, then
\begin{equation}\label{eqn:volume-density-estimate}
	\frac{ 1 }{ ( 2 + \eta )^2 } | B_r |
	\leq \frac{1}{2} \int_{ B_r } | m \pm 1 |
	\leq \left( 1 - \frac{ 1 }{ ( 2 + \eta )^2 } \right) | B_r | 
\end{equation}
for every $ 0 < r < R $. Furthermore, we have
\begin{equation}\label{eqn:surface-density-estimate}
	\frac{r}{(2 + \eta) } \lesssim \int_{ B_r } | \nabla m |
	\lesssim ( 1 + \eta ) r.
\end{equation}
\end{lemma}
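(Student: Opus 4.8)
The plan is to prove the two chains of inequalities by comparison arguments, exploiting that $0 \in J_m$ forces both phases to be genuinely present near the origin. First I would reduce to $R=1$ by the rescaling Remark~\ref{rmk:eta-minimizer} and fix a radius $0 < r < 1$; by the trace discussion preceding Lemma~\ref{lem:few-jumps} I may assume inner and outer traces of $m$ on $\partial B_r$ agree (the set of such radii has full measure, and the estimates for the remaining radii follow by approximation, or one simply notes both sides are continuous in $r$). Write $V(r) \coloneq \frac12 \int_{B_r} |m - 1| = |\{m = -1\} \cap B_r|$ for the volume of the minority-or-majority phase; the statement is symmetric in the roles of $\pm 1$, so it suffices to bound $V(r)$ from below, the upper bound then following by the same argument applied to $\frac12\int_{B_r}|m+1| = |B_r| - V(r)$.

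For the \emph{lower volume bound} I would argue by contradiction-free comparison: let $m'$ agree with $m$ outside $B_r$ and equal $+1$ inside $B_r$ (i.e.\ fill in the $\{m=-1\}$ region). Then $\int_{\overline{B_r}}|\nabla m'| \le \int_{\partial B_r}|m|_{\R^2\setminus B_r} - 1| = \mathcal{H}^1(\{m|_{\partial B_r} = -1\})$, which by the isoperimetric inequality on the disk is controlled. More precisely, $\eta$-minimality gives
\begin{align*}
\int_{\overline{B_r}}|\nabla m| \le (1+\eta)\int_{\overline{B_r}}|\nabla m'|,
\end{align*}
and combining this with the relative isoperimetric inequality in $B_r$, $\min\{V(r), |B_r|-V(r)\}^{1/2} \lesssim \int_{B_r}|\nabla m|$, together with a lower bound $\int_{B_r}|\nabla m| \gtrsim \text{(perimeter of the }\{m=-1\}\text{ region inside }B_r)$, yields that $V(r)$ cannot be too small. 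The sharp constant $(2+\eta)^{-2}$ comes from tracking the isoperimetric constant on the disk carefully: a half-disk has volume $\frac12|B_r|$ and the relevant ratio degrades by the factor $(2+\eta)$ coming from jump height two and $\eta$-minimality, squared because volume scales like length squared. Since $0\in J_m$, the density at $0$ is $\frac12$, so $V(r) > 0$ for all $r>0$, which bootstraps via the differential inequality (below) into the quantitative bound.

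For the \emph{surface estimates} \eqref{eqn:surface-density-estimate}: the upper bound $\int_{B_r}|\nabla m| \lesssim (1+\eta)r$ is immediate from comparing $m$ with the constant $+1$ (or $-1$) configuration in $B_r$, as in the proof of Lemma~\ref{lem:implicit-argument}, giving $\int_{\overline{B_r}}|\nabla m| \le (1+\eta)\,\mathcal{H}^1(\partial B_r) = (1+\eta)2\pi r$. For the lower bound I would use the coarea-type identity $\int_{B_r}|\nabla m| \ge \int_0^r \mathcal{H}^0(J_m\cap\partial B_s)\,ds$ (the tangential part controlled as in the footnote to Lemma~\ref{lem:few-jumps}) combined with the observation that $0\in J_m$ forces $J_m\cap\partial B_s \ne \emptyset$ for a.e.\ small $s$, hence $\mathcal{H}^0 \ge 2$ there — but to get the quantitative $\frac{r}{2+\eta}$ one instead differentiates the volume: setting $v(s) = \frac12\int_{\partial B_s}|m-1|\,d\mathcal{H}^1$ one has $V(r) = \int_0^r v(s)\,ds$ and, by a slicing/comparison argument analogous to the one in Lemma~\ref{lem:few-jumps}, a differential inequality of the form $v'(s) \gtrsim v(s)^{1/2}$ forced by $\eta$-minimality; this integrates to the volume bound \eqref{eqn:volume-density-estimate} and, read differently, to $\int_{B_r}|\nabla m| \gtrsim \frac{r}{2+\eta}$ by relating the perimeter of the minority phase to $V(r)$ via isoperimetry again.

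The main obstacle I anticipate is getting the \emph{sharp constants} $\frac{1}{(2+\eta)^2}$ and $\frac{1}{2+\eta}$ rather than merely some $c(\eta)$ bounded away from $0$: this requires care in how the jump-height-two and the $\eta$-minimality factor enter the isoperimetric comparison, and in handling the boundary trace term $\int_{\partial B_r}|m|_{\R^2\setminus B_r} - 1|$ without losing constants (one cannot afford the crude bound $\le 2\pi r$ here — one must keep it as the actual measure of the bad trace set and absorb it). A secondary technical point is the a.e.-in-$r$ trace caveat: the differential-inequality argument needs the identity $V'(r) = v(r)$ and the slicing bounds to hold on a set of radii of full measure, which is standard for BV functions but must be invoked cleanly (via \cite[Remark 3.1]{Giusti} and the coarea formula) to keep the argument self-contained.
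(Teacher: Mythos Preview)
Your plan is essentially the paper's proof: compare $m$ to the constant competitor in $B_r$, feed the resulting perimeter bound into an isoperimetric inequality, obtain a differential inequality for the volume $V(r)=\tfrac12\int_{B_r}|m-1|$, and integrate; the upper surface bound is the same comparison, and the lower surface bound follows from the volume bounds via Poincar\'e/relative isoperimetric (these are equivalent for $\{\pm1\}$-valued functions). Two small points: your differential inequality should read $V'(r)\gtrsim V(r)^{1/2}$, not $v'(s)\gtrsim v(s)^{1/2}$; and the clean way to extract the sharp constant $(2+\eta)^{-2}$ is not via the \emph{relative} isoperimetric inequality in $B_r$ but via the \emph{full} isoperimetric inequality applied to $(1-m)\boldsymbol{1}_{B_r}$---after comparison you have $\int_{\overline{B_r}}|\nabla m|\le(1+\eta)\int_{\partial B_r}|1-m|$, and adding $\int_{\partial B_r}|1-m|$ to both sides gives $\int|\nabla((1-m)\boldsymbol{1}_{B_r})|\le(2+\eta)\int_{\partial B_r}|1-m|$, whence the sharp isoperimetric constant enters directly and the factor $(2+\eta)$ appears exactly once before squaring.
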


The proof is quite standard, see for example \cite[Theorem 21.11]{Maggi} or \cite[Proposition 1.5]{AP99}. We include it for the reader's convenience. We also want to mention the reference \cite[Proposition 3.1]{GNR} for another argument in the anisotropic (but ``uniformly elliptic'') setting.

\begin{proof}
We split the proof in two steps.

\medskip

\textit{Step 1 (Argument for \eqref{eqn:volume-density-estimate}).} We consider the function
\begin{align*}
I( R ) = \frac{1}{2} \int_{ B_R } | 1 - m | = \frac{1}{2} \int_0^R \left( \int_{ \partial B_r } | 1 - m | \right) dr 
\quad \text{with} \quad
I'( R ) = \frac{ 1 }{ 2 } \int_{ \partial B_R } | 1 - m | .
\end{align*}
Note that $0<I(r)<|B_r|$ for every $0<r<R$, since $0\in J_m$.
Below we will argue that
\begin{align}\label{eqn:density-estimate-1}
	\text{for a.e.} \quad 0 < r < R \quad \text{it holds} \quad \frac{ | B_1 |^{\frac{1}{2}} }{ 2 + \eta } \leq \frac{1}{2} \frac{ I'(r) }{ I(r)^{ \frac{1}{2} } } ,
\end{align}
which we can use to conclude \eqref{eqn:volume-density-estimate}: Since the r.h.s.~of \eqref{eqn:density-estimate-1} is equal to $  \frac{1}{2} I ' ( r ) / I( r )^{ \frac{1}{2} }  = \frac{ d }{ d r } I( r )^{\frac{1}{2}} $, we can integrate \eqref{eqn:density-estimate-1} to obtain the claimed lower bound for $ \int_{ B_1 } | 1 - m | $ in \eqref{eqn:volume-density-estimate}. The same argument yields the lower bound for $ 1 - m $ replaced by $ - 1 - m $. Using the fact that
\begin{align*}
| 1 - m | + | - 1 - m | = 2,
\end{align*}
we also conclude the upper bounds in \eqref{eqn:volume-density-estimate}.

Now comes the argument for \eqref{eqn:density-estimate-1}: Note that for a.e. $ 0 < r < R $ we know the inner and outer trace of $ m $ on $ \partial B_r $ are equal. Let us fix such a radius and compare $ m $ to the configuration that is equal to $ m $ in $ B_r^c $ and $ 1 $ in $ B_r $ to get
\begin{align*}
\int_{ \overline{ B_r } } | \nabla m | \leq ( 1 + \eta ) \int_{ \overline{B_r} } | \nabla ( \boldsymbol{1}_{ B_r } + m \boldsymbol{1}_{ \R^2 \setminus B_r } ) | \leq(  1 + \eta ) \int_{ \partial B_r } | 1 - m |.
\end{align*}
Adding $ \int_{ \partial B_r } | 1 - m | $ to both sides, we get
\begin{align*}
\int | \nabla (  ( 1 - m ) \boldsymbol{1}_{ B_r }  ) | \leq ( 2 + \eta ) \int_{ \partial B_r } | 1 - m | .
\end{align*}
The isoperimetric inequality (relating the volume and perimeter of the set $ \{ 1 - m = 2 \}$) and the previous estimate imply
\begin{align*}
	2 | B_1 |^{ \frac{1}{2} } \left( \frac{ 1 }{ 2 } \int_{ B_r } | 1 - m | \right)^{ \frac{1}{2} }
	\leq \frac{ 1 }{ 2 } \int | \nabla ( ( 1 - m ) \boldsymbol{1}_{ B_r } ) |
	\leq ( 2 + \eta ) \frac{ 1 }{ 2 } \int_{ \partial B_r } | 1 - m |. 
\end{align*}
By definition of $ I ( r ) $, and since $I(0) = 0$ and $I(r)>0$ for all $0<r<R$, this gives \eqref{eqn:density-estimate-1}.

\medskip

\textit{Step 2 (Argument for \eqref{eqn:surface-density-estimate}).} Observe that by Poincaré's inequality
\begin{equation}\label{eqn:density-estimate-3-prev}
\int_{ B_r } | m - \fint_{B_r} m |^2
\lesssim \left( \int_{ B_r } | \nabla m | \right)^2,
\end{equation}
where the implicit constant is by scaling independent of $ r $. Since $ m $ only takes values $ \pm 1 $, $ m - \fint_{B_r} m $ is a simple function and we may rewrite the left hand side as
\begin{align}
\int_{ B_r } | m - \fint_{B_r} m |^2
&= | 1 - \fint_{B_r} m |^2 \, | B_r \cap \{ m = 1 \} |+ | 1 + \fint_{B_r} m |^2 \, | B_r \cap \{ m = - 1 \} |  \nonumber \\
&= \frac{1}{2} | 1 - \fint_{B_r} m |^2 \, \int_{B_r} | 1 + m | + \frac{1}{2} | 1 + \fint_{B_r} m |^2 \, \int_{B_r} | 1 - m | . \label{eqn:density-estimate-3}
\end{align}
Since $ -1 \leq \fint_{B_r} m \leq 1 $, we have
\begin{align*}
\frac{1}{2} | 1 - \fint_{B_r} m |^2 + \frac{1}{2} | 1 + \fint_{B_r} m |^2 
= 1 + ( \fint_{B_r} m )^2 \geq 1 ,
\end{align*}
so that \eqref{eqn:density-estimate-3-prev} and \eqref{eqn:density-estimate-3}, which we estimate by \eqref{eqn:volume-density-estimate} (recall that $0 \in J_m$), combine to the lower bound in \eqref{eqn:surface-density-estimate}.

The upper bound \eqref{eqn:surface-density-estimate}, follows from comparing $ m $ to the configuration that is equal to $ 1 $ in $ B_r $. More precisely, we have
\begin{align}\label{eqn:surface-density-estimate-precise}
\int_{ B_r } | \nabla m | \leq ( 1 + \eta  ) \int_{ \overline B_r } | \nabla ( \boldsymbol{1}_{ B_r } + m \boldsymbol{1}_{ \R^2 \setminus B_r } ) |
\leq 2 ( 1 + \eta ) | \partial B_r |,
\end{align}
as claimed.
\end{proof}

It was convenient in the above argument that our excess is defined in terms of line configurations. What we really need to control in the end is the tilt of the normal after readjusting the line configuration. The next lemma provides the necessary estimate for this. 

\begin{lemma}\label{lem:normal-estimate}
Consider two lines $ \mathscr{L} $ and $ \mathscr{L}' $ such that
\begin{align*}
{ \rm dist }_{ \partial B_1 } ( \mathscr{L} \cap \partial B_1 , \mathscr{L}' \cap \partial B_1 ) \leq \frac{ 1 }{ 4 } ,
\quad
\mathscr{L} \cap \overline{ B_{ \frac{1}{4} } } \neq \emptyset, 
\end{align*}
where $ { \rm dist }_{ \partial B_1 } $ denotes the two-sided Hausdorff distance on $ \partial B_1 $; then
\begin{align*}
| \nu - \nu' | \lesssim { \rm dist }_{ \partial B_1 } ( \mathscr{L} \cap \partial B_1 , \mathscr{L}' \cap  \partial B_1 ) ,
\end{align*}
where $ \nu $, $ \nu' $ are normals to $ \mathscr{L} $,~resp.~$ \mathscr{L}' $, with $ \nu \cdot \nu' > 0 $.
\end{lemma}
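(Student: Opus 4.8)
The plan is to parametrize each line by a unit normal together with its signed distance to the origin, and to reduce the assertion to a one–dimensional estimate comparing the two chords $\mathscr{L}\cap\overline{B_1}$ and $\mathscr{L}'\cap\overline{B_1}$. Write $\mathscr{L}=\{x : x\cdot\nu = t\}$ and $\mathscr{L}'=\{x : x\cdot\nu' = t'\}$ with $\nu,\nu'\in\mathbb{S}^1$ chosen so that $\nu\cdot\nu'>0$. Since $|\nu-\nu'|^2 = 2(1-\nu\cdot\nu')$, it suffices to control $1-\nu\cdot\nu'$, i.e.\ the (acute) angle between the directions of $\mathscr{L}$ and $\mathscr{L}'$.

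The first step is to observe that both chords have length bounded below by a universal constant, so that neither line is close to tangent to $B_1$. Indeed, $\mathscr{L}\cap\overline{B_{1/4}}\neq\emptyset$ forces $|t|\le\tfrac14$, hence $\mathscr{L}\cap\overline{B_1}$ is a segment of length $2\sqrt{1-t^2}\ge\tfrac{\sqrt{15}}{2}$; and if $p\in\mathscr{L}\cap\overline{B_{1/4}}$, the Hausdorff bound $d:={\rm dist}_{\partial B_1}(\mathscr{L}\cap\overline{B_1},\mathscr{L}'\cap\overline{B_1})\le\tfrac14$ yields a point $q\in\mathscr{L}'\cap\overline{B_1}$ with $|p-q|\le\tfrac14$, so $|q|\le\tfrac12$, whence $|t'|\le\tfrac12$ and $\mathscr{L}'\cap\overline{B_1}$ has length $\ge\sqrt3$. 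Set $\ell_0:=\sqrt3$.

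Next, let $P_1,P_2$ be the two endpoints of $\mathscr{L}\cap\overline{B_1}$ on $\partial B_1$, so $v:=P_2-P_1$ is a direction vector of $\mathscr{L}$ with $|v|\ge\ell_0$. By the Hausdorff bound, pick $P_1',P_2'\in\mathscr{L}'\cap\overline{B_1}$ with $|P_i-P_i'|\le d$; then $v':=P_2'-P_1'$ is a direction vector of $\mathscr{L}'$ with $|v-v'|\le 2d$ and $|v'|\ge|v|-2d\ge\ell_0-2d\ge\tfrac12\ell_0$. Splitting $\tfrac{v}{|v|}-\tfrac{v'}{|v'|}=\tfrac{v-v'}{|v|}+v'\bigl(\tfrac1{|v|}-\tfrac1{|v'|}\bigr)$ and using $\bigl||v|-|v'|\bigr|\le|v-v'|$ gives $\bigl|\tfrac{v}{|v|}-\tfrac{v'}{|v'|}\bigr|\le\tfrac{2|v-v'|}{|v|}\le\tfrac{4d}{\ell_0}$.

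Finally I would translate this back to the normals: the unit normals to $\mathscr{L}$ and $\mathscr{L}'$ are the rotations by $\tfrac\pi2$ of $\pm\tfrac{v}{|v|}$ and $\pm\tfrac{v'}{|v'|}$, so $|\nu-\nu'|$ equals one of $\bigl|\tfrac{v}{|v|}-\tfrac{v'}{|v'|}\bigr|$ or $\bigl|\tfrac{v}{|v|}+\tfrac{v'}{|v'|}\bigr|$; since $\tfrac{4d}{\ell_0}<\sqrt2$ while the squares of these two candidates sum to $4$, the former is the smaller one, which is exactly the value picked out by the convention $\nu\cdot\nu'>0$. Hence $|\nu-\nu'|\le\tfrac{4}{\sqrt3}\,d\lesssim{\rm dist}_{\partial B_1}(\mathscr{L}\cap\overline{B_1},\mathscr{L}'\cap\overline{B_1})$. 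The only mildly delicate point is the lower bound on the length of $\mathscr{L}'\cap\overline{B_1}$, i.e.\ ruling out that the nearby line $\mathscr{L}'$ is almost tangent to $B_1$; this is precisely where both hypotheses $\mathscr{L}\cap\overline{B_{1/4}}\neq\emptyset$ and ${\rm dist}_{\partial B_1}(\cdots)\le\tfrac14$ enter, and the rest is elementary plane geometry.
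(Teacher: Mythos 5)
Your proof is correct, but it follows a genuinely different route from the paper. The paper assumes w.l.o.g.\ that the lines are not parallel, locates their intersection point $p$, and splits into two cases according to whether $p$ lies in $\overline{B_{1/2}}$ or outside it; in each case it bounds $\sin\sphericalangle(\nu,\nu')$ by a ratio of a point-to-line distance (controlled by $d$) over a length that is bounded below thanks to $\mathscr{L}\cap\overline{B_{1/4}}\neq\emptyset$, and then converts the sine bound into $|\nu-\nu'|\le 2\sin\sphericalangle(\nu,\nu')$. You instead avoid any case distinction (and the parallel/non-parallel dichotomy) by comparing normalized direction vectors $v/|v|$ and $v'/|v'|$ built from paired nearby endpoints, using the elementary splitting of $\tfrac{v}{|v|}-\tfrac{v'}{|v'|}$ and the chord-length lower bound $|v|\ge\sqrt{15}/2$ coming from $|t|\le\tfrac14$; the sign convention $\nu\cdot\nu'>0$ is then correctly identified with the smaller of the two candidates $|u\mp u'|$ via $|u-u'|^2+|u+u'|^2=4$. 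This is arguably cleaner and more uniform. Two small remarks: first, the paper uses ${\rm dist}_{\partial B_1}$ as an arc-length (Hausdorff) distance between the \emph{endpoint sets} on the circle, so your invocation of the bound at the interior point $p\in\mathscr{L}\cap\overline{B_{1/4}}$ to lower-bound the length of $\mathscr{L}'\cap\overline{B_1}$ is not literally licensed by the hypothesis — but that paragraph is redundant, since the bound $|v'|\ge|v|-2d$ that you derive from the endpoints alone already supplies the needed non-degeneracy (and also rules out the tangent/degenerate case $P_1'=P_2'$). Second, arc-length distance dominates Euclidean distance, so passing from the hypothesis to $|P_i-P_i'|\le d$ is fine, though worth a word.
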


Notice that since we assume $ d \coloneq { \rm dist }_{ \partial B_1 } ( \mathscr{L} \cap  \partial B_1 , \mathscr{L}' \cap  \partial B_1 ) \leq \frac{ 1 }{ 4 }$, we have that $ d $ equals the maximum of the distances of closest endpoints of $ \mathscr{L} $ and $ \mathscr{L} ' $ on $ \partial B_1 $.

\begin{proof}[Proof of Lemma \ref{lem:normal-estimate}.]
Let us write $ d \coloneq { \rm dist }_{ \partial B_1 } ( \mathscr{L} \cap  \partial B_1 , \mathscr{L}' \cap \partial B_1  ) $. W.l.o.g.~we may assume that $ \mathscr{L} $ is not parallel to $ \mathscr{L}' $. Since $ d \leq \frac{1}{4} $, we know that
\begin{align*}
\mathscr{L} \cap \overline{ B_{\frac{1}{2}} } \neq \emptyset
\quad \text{and} \quad
\mathscr{L}' \cap \overline{ B_{\frac{1}{2}} } \neq \emptyset.
\end{align*}
Based on that, we distinguish two cases: either $ \mathscr{L} $ and $ \mathscr{L}' $ intersect in $ \overline{ B_{\frac{1}{2}} } $ (case 1), or they meet in $ \R^2 \setminus \overline{B_{ \frac{1}{2} }} $ (case 2).

\medskip

\textit{Case 1.} Let us assume that $ \{ p \} = \mathscr{L} \cap \mathscr{L}' \subseteq \overline{ B_{ \frac{1}{2} } } $. Furthermore, we denote by $ A \in \mathscr{L} \cap \partial B_1 $ and $ A' \in \mathscr{L}' \cap \partial B_1 $ two points in $ \mathscr{L} $, resp.~$ \mathscr{L}' $, lying on the boundary $ \partial B_1 $ such that $ ( A - p ) \cdot ( A' -p ) > 0 $. Such points exists since $ d < \frac{ \pi }{ 2 } $ and they satisfy
\begin{equation}\label{eqn:normal01}
\sphericalangle( \nu, \nu' )
= \sphericalangle( A - p, A' - p ).
\end{equation}
By an elementary geometric consideration, we have
\begin{align*}
\sin \sphericalangle( A - p, A' - p )
= \frac{ { \rm dist }(A', \mathscr{L}) }{ | p - A' | }
= \frac{ { \rm dist }(A, \mathscr{L}') }{ | p - A | } ,
\end{align*}
so that
\begin{equation}\label{eqn:normal02}
\sin \sphericalangle( A - p, A' - p )
\leq \min \left\{ \frac{ { \rm dist }_{ \partial B_1 } (A', \mathscr{L}) }{ | p - A' | },
 \frac{ { \rm dist }_{ \partial B_1 } (A, \mathscr{L}') }{ | p - A | }
 \right\}
 \leq 2 d.
\end{equation}
Since
\begin{align*}
| \nu - \nu' |^2
= 2 - 2 \cos \sphericalangle( \nu, \nu') 
= 4 \sin^2 \frac{ \sphericalangle( \nu, \nu' ) }{2},
\end{align*}
we may combine \eqref{eqn:normal01} and \eqref{eqn:normal02} to get
\begin{align*}
| \nu - \nu' |
\leq 2 \sin \sphericalangle( A - p, A' - p )
\leq 4 d,
\end{align*}
which proves the claim if $ \mathscr{L} $ and $ \mathscr{L}' $ meet in $ \overline{ B_{\frac{1}{2}} } $.

\medskip

\textit{Case 2.} We now consider the case that $ \mathscr{L} $ and $ \mathscr{L}' $ intersect in $ \R^2 \setminus \overline{ B_{ \frac{1}{2} } } $, and both $ \mathscr{L} $ and $ \mathscr{L}' $ go through $ \overline{ B_{ \frac{1}{2} } } $.  As above, we write $ \{ p \} = \mathscr{L} \cap \mathscr{L}' $. Since $ \mathscr{L} $ passes through $ B_{ \frac{1}{4} } $, there exists points $ A , B \in \mathscr{L} \cap \partial B_{ \frac{1}{2} } $ such that $ B $ lies between $ A $ and $ p $. Observe that (again since $ \mathscr{L} $ intersects $ B_{ \frac{1}{4} } $) we have $ | A - B | \gtrsim 1 $. Furthermore, let us note that via parallel translation
\begin{align*}
	{ \rm dist } ( A, \mathscr{L}' )
	\leq { \rm dist } ( \mathscr{L} \cap \partial B_1 , \mathscr{L}' )
	\leq  { \rm dist } ( \mathscr{L} \cap \partial B_1 , \mathscr{L}' \cap \partial B_1 )
	\leq  { \rm dist }_{ \partial B_1 } ( \mathscr{L} \cap \partial B_1 , \mathscr{L}' )
	= d .
\end{align*}
Using these two facts, we estimate
\begin{align*}
\sin \sphericalangle( \nu, \nu' ) 
= \frac{ { \rm dist } ( A, \mathscr{L}' ) }{ | A - p | }
\leq  \frac{ d }{ | A - B | }
\lesssim d 
\end{align*}
As in Case~1, this yields
\begin{align*}
| \nu - \nu' | \lesssim d,
\end{align*}
which proves the lemma when $ \mathscr{L} $ and $ \mathscr{L}' $ intersect in $ \R^2 \setminus \overline{ B_{ \frac{1}{2} } } $.
\end{proof}

After the preparations above, we are finally ready to prove the one step improvement (Proposition~\ref{prop:iteration-step}) of the Campanato iteration in the proof of Theorem~\ref{thm:mr}. As announced in Section~\ref{sec:strategy-of-proof}, the final goal is to use the small excess to find a ``good'' radius so that the jump set of the minimizer has two intersection points with the circle of that radius. Afterwards, we connect these points to construct an improved configuration.

\begin{proof}[Proof of Proposition \ref{prop:iteration-step}.] By Remark \ref{rmk:eta-minimizer} we assume w.l.o.g.~that $ R = 1 $. We proceed in three steps.

\medskip

\textit{Step 1.} Let us fix some radius $ r_0 $ (to be chosen later). We claim that there exist $ \varepsilon_0, \eta_0 > 0 $ such that for every $ \varepsilon \le \varepsilon_0 $, $ \eta \leq \eta_0 $, and $ m $, $ m_{ \rm line } $ satisfying \eqref{eqn:prop:iteration-step-assumption}, the jump set of $ m_{ \rm line } $ intersects $ B_{ r_0 } $.

Indeed, observe that by using the pointwise inequality $ | | 1 - m_{ \rm line } | - | 1 - m | | \leq | m - m_{ \rm line } | $, we obtain after integration
\begin{align*}
\frac{1}{2} \int_{ B_{ r_0 } } | 1 - m | - \frac{1}{2} \int_{B_1} | m - m_{ \rm line } |
\leq \frac{1}{2} \int_{ B_{ r_0 } } | 1 - m_{ \rm line } |
\leq \frac{1}{2} \int_{ B_{ r_0 } } | 1 - m | + \frac{1}{2} \int_{B_1} | m - m_{ \rm line } |.
\end{align*}
By \eqref{eqn:volume-density-estimate} this estimate implies
\begin{align*}
\frac{ |  B_{ r_0 } | }{ ( 2 + \eta )^2 } - \frac{1}{2} \varepsilon
\leq \frac{1}{2} \int_{ B_{ r_0 } } | 1 - m_{ \rm line } |
\leq ( 1 - \frac{ 1 }{ ( 2 + \eta )^2 } ) |  B_{ r_0 } | + \frac{1}{2} \varepsilon.
\end{align*}
Whenever $ \eta, \varepsilon \ll_{ r_0 } 1 $, we may infer
\begin{align*}
c | B_{ r_0 } |
\leq \frac{1}{2} \int_{ B_{ r_0 } } | 1 - m_{ \rm line } |
\leq ( 1 - c ) | B_{ r_0 } |
\quad \text{for~some} \quad \frac{1}{8} < c < \frac{1}{4}.
\end{align*}
We conclude with
\begin{align*}
\frac{1}{2} \int_{ B_{ r_0 } } | 1 - m_{ \rm line } | = | B_{r_0} \cap \{ m_{ \rm line } = - 1 \} |
\end{align*}
that the jump set of $ m_{ \rm line } $ must intersect with $ B_{ r_0 } $.

\medskip

\textit{Step 2.} We claim that for some radius $ \frac{1}{16} < r < \frac{5}{16} $ the jump set of $ m $ has two intersections with $ \partial B_r $ and
\begin{equation}\label{eqn:linf-bdry-close}
{ \rm dist }_{ \partial B_r } ( J_{ m }, J_{ m_{ \rm line } } ) \lesssim \int_{B_1} | m - m_{ \rm line } |.
\end{equation}

To this end, we choose a radius $ \frac{1}{16} < r < 1 $ according to Lemma \ref{lem:few-jumps}. Hence, $ m \lfloor_{ \partial B_r } $ has either zero or two jumps on $ \partial B_r $ and
\begin{equation}\label{eqn:l1-bdry-close-iii}
\int_{ \partial B_r } | m - m_{ \rm line } |
\lesssim \int_{ B_1 } | m - m_{ \rm line } |
\lesssim \varepsilon.
\end{equation}
We will proceed in two steps: first, we argue that $ m \lfloor_{ \partial B_r } $ has exactly two jumps provided $ \varepsilon \ll 1 $. Secondly we upgrade \eqref{eqn:l1-bdry-close-iii} to \eqref{eqn:linf-bdry-close}.

We start with the former: $ m $ can't have zero jumps on $ \partial B_r $. By the first step, where for now $ r_0 = \frac{1}{32} $ would be sufficient, but for later purposes (in view of Lemma \ref{lem:normal-estimate}) $ r_0 = \frac{1}{64} $ is convenient, we may assume that the jump set of $ m_{ \rm line } $ intersects non-trivially with $ B_{ \frac{1}{64} } $ whenever $ \eta < \eta_0 $ and $ \varepsilon < \varepsilon_0 $. Since the jump set of $ m_{ \rm line } $ intersects $ B_{ \frac{1}{32} } $, we can lower bound
\begin{align*}
| \{  m_{ \rm line } = -1  \} \cap \partial B_{ \frac{1}{16} }  | \gtrsim 1.
\end{align*}
Hence, we can bound
\begin{align*}
| \{  m_{ \rm line } = -1  \} \cap \partial B_r |
\geq 16 r \, | \{  m_{ \rm line } = -1  \} \cap \partial B_{ \frac{1}{16} }  |
\gtrsim r.
\end{align*}
Similarly, we may argue for the set $ \{   m_{ \rm line } = 1 \} $ to conclude
\begin{equation}\label{eqn:density-bound-m-per-i}
| \{ m_{ \rm line } = -1  \} \cap \partial B_r | \gtrsim r
\quad \text{and} \quad
| \{ m_{ \rm line } = 1  \} \cap \partial B_r | \gtrsim r.
\end{equation}
Furthermore, we may write
\begin{equation}\label{eqn:l1-bdry-close}
| \{ m = 1, m_{ \rm line } = -1 \} \cap \partial B_r | + | \{ m = -1, m_{ \rm line } = 1 \} \cap \partial B_r |
= \frac{1}{2} \int_{ \partial B_r } | m - m_{ \rm line } |
\lesssim \varepsilon.
\end{equation}
If $ m \lfloor_{ \partial B_r } $ had zero jumps on $ \partial B_r $, say $ m $ is constantly equal to $ 1 $, then
\begin{align*}
| \{ m_{ \rm line } = -1 \} \cap \partial B_r |
= | \{ m = 1, m_{ \rm line } = -1 \} \cap \partial B_r |
\lesssim \varepsilon.
\end{align*}
Since $ r > \frac{1}{16} $, this contradicts \eqref{eqn:density-bound-m-per-i} if $ \varepsilon \ll 1 $. Hence, by our choice of the radius $ r $ through Lemma \ref{lem:few-jumps} $ m \lfloor_{ \partial B_r } $ has exactly two jumps 

We now argue, that \eqref{eqn:l1-bdry-close-iii} implies \eqref{eqn:linf-bdry-close}. The jump set of $ m_{ \rm line } $ has two intersection points with $ \partial B_r $, say $ A', B' \in J_{ m_{ \rm line } } \cap \partial B_r $. In this notation, we may write
\begin{align*}
{ \rm dist }_{ \partial B_r }( A', B' ) = \min \{  | \{ m_{ \rm line } = - 1 \} \cap \partial B_r | ,  | \{ m_{ \rm line } = 1 \} \cap \partial B_r | \},
\end{align*}
so that \eqref{eqn:density-bound-m-per-i} turns into
\begin{equation}\label{eqn:density-bound-m-per-ii}
1 \lesssim r \lesssim { \rm dist }_{ \partial B_r }( A', B' ) \lesssim r \lesssim 1.
\end{equation}
Since also the jump set of $ m $ has exactly two intersection points $ A $, $ B $ with $ \partial B_r $, we can write
\begin{align*}
{ \rm dist }_{ \partial B_r }(A, A') = | \{ m = 1, m_{ \rm line } = - 1 \} \cap \partial B_r |, \quad
{ \rm dist }_{ \partial B_r }(B, B') = | \{ m = -1, m_{ \rm line } = 1 \} \cap \partial B_r |.
\end{align*}
Hence \eqref{eqn:l1-bdry-close} becomes
\begin{equation}\label{eqn:l1-bdry-close-ii}
{ \rm dist }_{ \partial B_r }(A, A') + { \rm dist }_{ \partial B_r }(B, B')
\lesssim \frac{1}{2} \int_{ \partial B_r } | m - m_{ \rm line } |.
\end{equation}
Now recall that $ r > \frac{1}{16} $, $ \varepsilon \ll 1 $, so that \eqref{eqn:l1-bdry-close-ii} and \eqref{eqn:density-bound-m-per-ii} show
\begin{align*}
\min \{ { \rm dist }_{ \partial B_r }(B, A') ,  { \rm dist }_{ \partial B_r }(A, B') \} \gtrsim 1, \quad
\max \{ { \rm dist }_{ \partial B_r }(A, A') , { \rm dist }_{ \partial B_r }(B, B') \} \lesssim \varepsilon,
\end{align*}
so that \eqref{eqn:l1-bdry-close-ii} turns into \eqref{eqn:linf-bdry-close}.

\medskip

\textit{Step 3.} We now construct the competitor $ m_{ \rm line }' $. To this end, we denote by $ \mathscr{L} $ the line along which $ m_{ \rm line } $ jumps and by $ \nu $ the outwards pointing normal $ \nu $ to the half space $ \{ m_{ \rm line } = 1 \} $. Furthermore, we let $ \mathscr{L}' $ denote the line that contains the two intersection points of the jump set of $ m $ and $ \partial B_r $ where $ r $ was chosen in Step 2. We can now define $ m_{ \rm line }' $ to be the configuration with jump set $ \mathscr{L}' $ such that the halfspace $ \{ m_{ \rm line }' = 1 \} $ has outwards pointing normal $ \nu' $ with $ \nu \cdot \nu' \ge 0 $. 

First, we argue that \eqref{eqn:tilt-halfspace} holds true. Since $ J_m $ has exactly two intersections with $ B_r $, we may decompose it, up to sets of measure $ \mathscr{H}^1 $, into curves
\begin{align*}
J_m \cap \overline{ B_r } = \bigcup_{ k \in \N } \gamma_k,
\end{align*}
where $ \gamma_1 $ intersects $ \partial B_r $ in two points, and every $ \gamma_k $ is a simple closed curve in $ B_r $. Recall\footnote{Strictly speaking, this relies on the small-scale regularity of a minimizer in the sense of Definition \ref{defn:in-energy-minimizer}, which is justified by our later application. For genuine $ \eta $-minimizers, one can proceed via approximation through smooth boundaries, see e.g.~\cite[Theorem 3.42]{AmbrosioFuscoPallara}.}
that since $ \xi $ is bounded on small scales, see \eqref{eqn:noise-assumption}, the small-scale regularity theory, see \cite[Example 21.2 \& Theorem 21.8]{Maggi}, shows that these curves are (sufficiently) smooth. Hence there is no subtlety in the above decomposition.

Let us now denote by $ { \rm int } ( \gamma_k ) $ the domain enclosed by the curve $ \gamma_k $. Hence, we can bound
\begin{align}\label{eqn:tilt-half-space-estimate-1}
\begin{aligned}
\int_{ B_r } | m - m_{ \rm line }' | 
&\leq \int_{ B_r \cap J_{ m _ { \rm line } ' } } \sup_{ y \in \gamma_1 } | x - y |  \, d \mathscr{H}^1 (x)  + \sum_{ k \ge 2 } | { \rm int } ( \gamma_k ) | \\
&\leq r \, { \rm dist } ( J_{ m_{ \rm line }' }, \gamma_1 ) + \sum_{ k \ge 2 } | { \rm int } ( \gamma_k ) |
\end{aligned}
\end{align}
The first term on the r.h.s is controlled via Lemma \ref{lem:height-bound}, which yields
\begin{align}\label{eqn:tilt-half-space-estimate-2}
{ \rm dist } ( J_{ m_{ \rm line }' }, \gamma_1 ) \lesssim r \sqrt{ \eta }.
\end{align}
To the latter, we may apply the isoperimetric inequality to obtain
\begin{align*}
\sum_{ k \ge 2 } | { \rm int } ( \gamma_k ) | 
\lesssim \sum_{ k \ge 2 } | \gamma_k |^2 
\leq \left( \sum_{ k \ge 2 } | \gamma_k | \right)^2
= \left( \int_{ B_r } | \nabla m | - | \gamma_1 | \right)^2
\leq \left( \int_{ B_r } | \nabla m | - \int_{ B_r } | \nabla m_{ \rm line } ' | \right)^2.
\end{align*}
The $ \eta $-minimality of $ m $ therefore implies upon comparing $ m $ to $ m_{ \rm line }' $ that
\begin{align}\label{eqn:tilt-half-space-estimate-3}
\sum_{ k \ge 2 } | { \rm int } ( \gamma_k ) | \leq \eta^2  \left( \int_{ B_r } | \nabla m_{ \rm line } ' | \right)^2 \leq r^2 \eta^2.
\end{align}
Together \eqref{eqn:tilt-half-space-estimate-1}, \eqref{eqn:tilt-half-space-estimate-2}, and \eqref{eqn:tilt-half-space-estimate-3} imply \eqref{eqn:tilt-halfspace}, as desired.

Finally, we show that also \eqref{eqn:tilt-normal} holds. By \eqref{eqn:linf-bdry-close}, we have
\begin{align*}
{ \rm dist }_{ \partial B_r } ( \mathscr{L} , \mathscr{L}' )
= { \rm dist }_{ \partial B_r } ( J_m, \mathscr{L}' )
\lesssim \int_{ B_1 } | m - m_{ \rm line } |
\lesssim \varepsilon,
\end{align*}
so that Lemma \ref{lem:normal-estimate} (recall that in Step 2 we intentionally chose to have $ \emptyset \neq \mathscr{L} \cap \overline{ B_{ \frac{1}{64} } } \subseteq \mathscr{L} \cap \overline{ B_{ \frac{r}{4} } }  $ with $ \frac{1}{16} < r < 1 $) implies
\begin{align*}
| \nu - \nu' | \lesssim \int_{ B_1 } | m - m_{ \rm line } |,
\end{align*}
proving \eqref{eqn:tilt-normal}.
\end{proof}

Although at this point we have proven the main deterministic ingredient, namely the one step improvement in Proposition~\ref{prop:iteration-step}, the next remark will be crucial. It relates the normal constructed in the proof above to the average normal in Theorem~\ref{thm:mr}.

\begin{remark}\label{rmk:strong-exc-bound} 
Let us remark that in the situation of Proposition \ref{prop:iteration-step}, we could even bound
\begin{equation}\label{eqn:strong-exc-bound}
\frac{1}{r} \int_{ B_r } | \nu - \nu' |^2 | \nabla m |
\lesssim \eta,
\end{equation}
where $ r $ is the same radius as above. The left-hand side of \eqref{eqn:strong-exc-bound} provides a bound on the \emph{excess} $\inf_{\nu'\in\mathbb{S}^1} \frac{1}{r} \int_{ B_r } | \nu - \nu' |^2 | \nabla m |$ in the regularity theory of minimal surfaces. 
Indeed, since $ m_{ \rm line }' $ has a jump set equal to a line, we have
\begin{align*}
\nu' = \frac{ \int_{ B_r } \nabla m_{ \rm line }' }{ \int_{ B_r } | \nabla m_{ \rm line }' | },
\end{align*}
so that in the situation of Step 3 in the proof of Proposition \ref{prop:iteration-step}, we can compute
\begin{align*}
\frac{1}{2} \int_{ B_r } | \nu - \nu' |^2 | \nabla m |
&= \int_{ B_r } | \nabla m | - \int_{ B_r } \nu' \cdot \nabla m \\
&= \int_{ B_r } | \nabla m | - \int_{ B_r } | \nabla m_{ \rm line }' |
- \int_{ B_r } \nu' \cdot \nabla ( m - m_{ \rm line }' ).
\end{align*}
The last integral vanishes, because $ m = m_{ \rm line }' $ on $ \partial B_r $. Indeed, let $ J $ denote a counterclockwise rotation by $ \frac{\pi}{2} $, so that $ J\nu $ of a normal $ \nu $ is a tangent. Since the jump sets of $ m $ and $ m_{ \rm line }' $ intersect $ \partial B_r $ in two common points, we learn from integrating their tangents that
\begin{align*}
\int_{ B_r } J \nu | \nabla m | = \int_{ B_r } J \nu' | \nabla m_{ \rm line } ' |
\quad { \rm so ~ that } \quad
\int_{ B_r } \nabla m =  \int_{ B_r } \nu | \nabla m | = \int_{ B_r } \nu' | \nabla m_{ \rm line } ' | = \int_{ B_r } \nabla m_{ \rm line } ' .
\end{align*}
Taking the inner product with $ \nu' $ yields $ \int_{ B_r } \nu' \cdot \nabla ( m - m_{ \rm line }' ) = 0 $. Therefore, by $ \eta $-minimality, the above identity simplifies to
\begin{align*}
\frac{1}{2} \int_{ B_r } | \nu - \nu' |^2 | \nabla m |
= \int_{ B_r } | \nabla m | - \int_{ B_r } | \nabla m_{ \rm line }' |
\leq \eta \int_{ B_r } | \nabla m_{ \rm line }' |,
\end{align*}
which implies \eqref{eqn:strong-exc-bound}.\end{remark}

\section{Proof of the stochastic ingredients}

In this section, we make the stochastic ingredients needed for the proof of Theorem \ref{thm:mr} rigorous. In particular, we prove Propositions~\ref{thm:talagrand-bound}, \ref{prop:concentration-scales}, and~\ref{prop:concentration-in-space} that we introduced in Section~\ref{sec:strategy-of-proof}. 
Since the first one is an adaption of the results in \cite{DW} to our continuum setting, we start with the latter two in Section~\ref{sec:pf-concentration-arguments}. The proof of Proposition~\ref{thm:talagrand-bound} is then given in Section~\ref{sec:pf-talagrand-bound} together with a short primer on the chaining method. 
As in Section \ref{sec:regularity-theory}, we try to give (to some reasonable extent) self-contained arguments.  

\subsection{Concentration arguments}\label{sec:pf-concentration-arguments}

The proofs of this section rely on concentration properties of Gaussian random variables. More precisely, we use that the application of Lipschitz-continuous functions to a Gaussian random variable enjoys good concentration properties. In particular, this implies concentration estimates for the supremum of a Gaussian processes. Indeed, for a finite-dimensional centered Gaussian vector $ \xi \in \R^N $, it is known that
\begin{align}\label{eqn-gaussian-concentration-plausibel-1}
	\P \big\{ \sup_{ i = 1, \hdots , N } \xi_i  > \E[ \sup_{ i = 1, \hdots , N } \xi_i ] + t \big\} \leq \exp ( - \frac{t^2}{2 \sigma^2} ),
	\quad { \rm where } \quad
	\sigma^2 = \sup_{ i = 1, \hdots, N }  \E \xi_i^2 
\end{align}
see e.g.~\cite[Theorem 2.1.1 \& Lemma 2.1.6]{AdlerTaylor-RF-and-Geometry} (or \cite[Lemma 3.1 \& (3.2)]{LedouxTalagrand} for a closely related statement). 
This naturally generalizes to infinite-dimensional Gaussian vectors. 
To treat the two classes of noise we introduced in Section \ref{sec:assumptions-on-noise} in a unified way, it is convenient to instead argue via the abstract concentration argument in \cite[Theorem 4.5.6 \& Theorem 4.5.7]{BogachevGM}.

To make this precise for our infinite dimensional Gaussian $ \xi $, we need to specify a norm to measure Lipschitz-continuity. In fact, it turns out, see \cite{BogachevGM}, that Lipschitz-continuity w.r.t.~Cameron-Martin shifts is the right notion in infinite dimensions. For convenience, we recall the notion of a Cameron-Martin space from \cite{BogachevGM}: viewing $ \xi $ as a $ \mathscr{S}'(\R^2) $-valued\footnote{$ \mathscr{S}'$ denotes the space of Schwartz distributions.} random variable, the definition of the Cameron-Martin space $ H $ (of the law of $ \xi $) is
\begin{align}\label{eqn:cm-pf-03}
H = \{ T \in \mathscr{S}'(\R^2) ~|~ | T |_{ H } < \infty \}
\quad { \rm with } \quad
| T |_{ H } = \sup \{ \ell(T)  ~|~ \ell \in (  \mathscr{S}'(\R^2) )', ~ \E[ \ell(\xi)^2 ] \leq 1 \} ,
\end{align}
see \cite[Definition 2.2.7]{BogachevGM}. Using this definition we can prove the following lemma.\footnote{In the statement of Lemma~\ref{lemma:cameron-martin-continuity}, we are intentionally a bit vague with notation: we write $ \delta\xi $ for an element in the Cameron-Martin space $ H $, which we think of as a distribution, and use the same symbol for its defining function,~i.e.~$ \delta\xi(f) = \int \delta\xi(x) f(x) dx $.}

\begin{lemma}\label{lemma:cameron-martin-continuity}
	Let $ \xi $ denote either of the two noises in Section \ref{sec:assumptions-on-noise}. The map $ \xi \mapsto \frac{1}{| \partial M |} \int_{ M } \xi $ is $ 2 \sqrt{ \pi } $-Lipschitz-continuous under Cameron-Martin shifts; that is 
	\begin{align}\label{eqn:cm-pf-02}
	\left| \frac{1}{| \partial M |} \int_{ M } ( \xi + \delta \xi ) - \frac{1}{| \partial M |} \int_{ M } \xi \, \right|
	\leq 2 \sqrt{ \pi } | \delta \xi |_{ H }
	\end{align}
	provided $ \delta\xi \in H $.
\end{lemma}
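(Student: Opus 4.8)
The plan is to reduce the claimed Lipschitz bound \eqref{eqn:cm-pf-02} to the estimate
\[
\int \delta\xi(x) f(x)\,dx \leq |\delta\xi|_H \Big( \int f^2 \Big)^{\frac12}
\]
from the second part of \eqref{eqn:noise-assumption}, applied with $f = \mathbf{1}_M / |\partial M|$, together with the isoperimetric inequality. First I would observe that by linearity the left-hand side of \eqref{eqn:cm-pf-02} equals $\big| \frac{1}{|\partial M|} \int_M \delta\xi \big| = \big| \int \delta\xi(x) \, \frac{\mathbf{1}_M(x)}{|\partial M|}\, dx \big|$. Strictly speaking $\mathbf{1}_M/|\partial M|$ is not a Schwartz function, so I would first justify by a density/approximation argument that the bound $\int \delta\xi \, g \le |\delta\xi|_H (\int g^2)^{1/2}$ extends from Schwartz $g$ to $g = \mathbf{1}_M/|\partial M| \in L^2(\R^2)$ (note $M \subseteq B_R$ is a Caccioppoli set of finite measure, so $\mathbf{1}_M \in L^2$); mollifying $\mathbf{1}_M$ and passing to the limit, using that the right-hand side is continuous in the $L^2$-norm of $g$ and that $\delta\xi$ pairs continuously against $L^2$ functions precisely because of the UV cut-off.

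The core computation is then to bound $\big( \int (\mathbf{1}_M/|\partial M|)^2 \big)^{1/2} = |M|^{1/2}/|\partial M|$. By the isoperimetric inequality in $\R^2$, $|\partial M| = \int |\nabla \mathbf{1}_M| \ge 2\sqrt{\pi}\, |M|^{1/2}$ (the sharp constant $2\sqrt{\pi} |B_1|^{1/2}$ with $|B_1| = \pi$), hence $|M|^{1/2}/|\partial M| \le \frac{1}{2\sqrt{\pi}}$. Combining, the left-hand side of \eqref{eqn:cm-pf-02} is bounded by $|\delta\xi|_H \cdot \frac{1}{2\sqrt{\pi}}$ — which is even better than the claimed constant $2\sqrt{\pi}$; so the stated bound certainly holds (the slack likely accounts for dimensional constants appearing in the cut-off normalization, or is simply not optimized).

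The place where care is genuinely needed — and what I expect to be the main obstacle — is establishing the $L^2$-pairing inequality of \eqref{eqn:noise-assumption} itself (i.e.\ the content that the lemma is really invoking, or rather proving en route): that for these two specific noises, Cameron-Martin shifts act boundedly against $L^2$ functions with the right constant. For regularized white noise with $\widehat c = \frac{1}{2\pi}\mathbf{1}_{\{|k|\le\sqrt{4\pi}\}}$, the Cameron-Martin space is $H = \{T : \widehat T \text{ supported in } |k|\le\sqrt{4\pi},\ |T|_H^2 = 2\pi\int |\widehat T|^2/\mathbf{1}\}$, and Plancherel gives $\int \delta\xi\, f = \int \widehat{\delta\xi}\, \overline{\widehat f}$ over the low-frequency ball, which by Cauchy-Schwarz is at most $|\delta\xi|_H \|f\|_{L^2}$ up to the cut-off normalization constant; for discretized white noise one expands $\delta\xi$ in the cell basis and uses that $H$ consists of cell-wise constant functions with $\ell^2$ coefficients, again Cauchy-Schwarz against $\int_{Q_1(x)} f$ and Jensen. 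Both computations are routine once the Cameron-Martin space is correctly identified from \eqref{eqn:cm-pf-03}, but pinning down the exact constant (and checking it is compatible with $\E\xi(x)^2 = 1$) is the fiddly part; since only the order-one Lipschitz constant $2\sqrt\pi$ is needed downstream, I would not optimize it.
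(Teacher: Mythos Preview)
Your proposal is correct and follows essentially the same route as the paper: linearity reduces to bounding $\frac{1}{|\partial M|}\int_M \delta\xi$, the Cameron--Martin $L^2$-pairing estimate \eqref{eqn:noise-assumption} is extended by density to $f=\mathbf{1}_M$, and the isoperimetric inequality finishes (indeed with the sharper constant $\frac{1}{2\sqrt\pi}$, as you observe). The one simplification the paper makes over your sketch is that it does \emph{not} identify $H$ explicitly for each noise; instead it uses the abstract definition \eqref{eqn:cm-pf-03} directly to get $\int \delta\xi\, f \le \E[\xi(f)^2]^{1/2}\,|\delta\xi|_H$ for any Schwartz $f$, and then checks $\E[\xi(f)^2]\le \int f^2$ via a short covariance computation (Plancherel with $\sup\widehat c\le \frac{1}{2\pi}$ for the regularized noise, Jensen on the unit cells for the discretized noise) --- this sidesteps the ``fiddly'' identification you anticipated.
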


\begin{proof}[Proof of Lemma \ref{lemma:cameron-martin-continuity}.]
	First, we observe that either of the two noises statisfies
	\begin{align}\label{eqn:cm-pf-01}
	\int \delta \xi(x) f(x) dx \leq | \delta \xi |_{ H } \left( \int f^2 \right)^{ \frac{1}{2} }
	\end{align}
	for any Schwartz function $ f \in \mathscr{S}(\R^2) $. Indeed, to see this, we recall that the space of Schwartz functions $ \mathscr{S}(\R^2) $ embeds into its bi-dual $ ( \mathscr{S}' (\R^2) )' $ via the map $ f \mapsto ( \xi \mapsto \xi(f) ) $. Hence, by the definition of the norm $ | \cdot |_{ H } $, see \eqref{eqn:cm-pf-03}, we have
	\begin{align*}
	\int \delta \xi (x) f(x) dx
	&\leq \E[ \xi(f)^2 ] \sup \left\{ l(\delta \xi) ~\middle|~ \ell \in ( \mathscr{S}' (\R^2) )' , ~ \E[ \ell(\xi)^2 ] \leq 1 \right\} \\
	&= \E[ \xi(f)^2 ] | \delta\xi |_{ H }.
	\end{align*}
	In particular, \eqref{eqn:cm-pf-01} follows for the regularized white noise upon invoking the estimate
	\begin{align*}
	\E[ \xi(f)^2 ] = 2 \pi \int | \widehat{f} |^2 \widehat{c} 
	\leq 2 \pi ( \sup \widehat{c} ) \int | \widehat{f} |^2
	\leq \int f^2
	\end{align*}
	For the discretized white noise we observe that Jensen's inequality gives
	\begin{align*}
	\E[ \xi(f)^2 ] = \E \Big[ \sum_{ x,y \in \Z^2 } \xi_x \xi_y \Big( \int_{ Q_1 (x) } f \Big) \Big( \int_{ Q_{ 1 } (y) } f \Big) \Big] = \sum_{ x \in \Z^2 } \Big( \int_{ Q_{ 1 } (x) } f \Big)^2 \leq \int f^2 ,
	\end{align*}
	so that \eqref{eqn:cm-pf-01} is established in both cases.
	
	Via a density argument, we can apply \eqref{eqn:cm-pf-01} to $ f = \boldsymbol{1}_M $, which yields
	\begin{align*}
	\int_M \delta \xi(x) dx \leq | \delta \xi |_{ H }  | M |^{ \frac{1}{2} }.
	\end{align*}
	By the isoperimentric inequality this implies
	\begin{align*}
	\frac{1}{| \partial M |} \int_M \delta \xi(x) dx \leq 2 \sqrt{\pi} | \delta \xi |_{ H }.
	\end{align*}
	By linearity of the map $ \xi \mapsto \frac{1}{| \partial M |} \int_M \xi $, this already implies \eqref{eqn:cm-pf-02}.
\end{proof}

Equipped with the Lipschitz-continuity estimates in Lemma \ref{lemma:cameron-martin-continuity}, we appeal to the concentration estimate \cite[Theorem 4.5.7]{BogachevGM} asserting that
\begin{align} \label{eqn:Gaussian-concentration}
	\P \{ | X - \E X | > \alpha \} \leq 2 \exp ( - \frac{\alpha^2}{2 C_{ \rm CM } } ),
\end{align}
provided $ X = X(\xi) $ is Lipschitz-continuous under Cameron-Martin shifts with constant $ C_{ \rm CM } < \infty $, see also \cite[Theorem 4.5.6]{BogachevGM} for a one-sided tail estimate with improved constants. Using this, we can prove Proposition~\ref{prop:concentration-scales}, which is concerned with the pinning over dyadic scales. 
Note that in case of the discretized white noise, Step 1 is more explicit: the random variables $ \pm \frac{1}{ | \partial M | } \int_{ M } \xi $ are Gaussian with variance of order one, hence  \eqref{eqn-gaussian-concentration-plausibel-1} also applies directly to their supremum.

\begin{proof}[Proof of Proposition~\ref{prop:concentration-scales}.] We split the proof in two parts, one for \eqref{eqn:tail-bound-sup-x-r} and one for \eqref{eqn:expectation-bound-sup-x-r}.

\medskip

\textit{Step 1 (Concentration).} We start with the argument for \eqref{eqn:tail-bound-sup-x-r}. Let us observe that Lemma \ref{lemma:cameron-martin-continuity} states that
\begin{align*}
\xi \mapsto \frac{1}{ | \partial M | } \Big| \int_{ M } \xi \, \Big|
\quad { \rm is } \quad 2 \sqrt{ \pi } - { \rm Lipschitz ~ under ~ Cameron\text{-}Martin ~ shifts }.
\end{align*}
In particular, the same holds true after taking the two suprema under consideration (recall that the supremum of uniformly Lipschitz-continuous functions is Lipschitz-continuous with the same constant),~i.e.
\begin{align*}
\xi \mapsto \sup_{ R \ge 1 } ( \log R )^{ - \frac{3}{4} } S_R
\quad { \rm is } \quad 2 \sqrt{ \pi } - { \rm Lipschitz ~ under ~ Cameron\text{-}Martin ~ shifts }.
\end{align*}
The pre-factor $ ( \log R )^{ - \frac{3}{4} } $ is not harmful since it decays for large $ R $, even improving the continuity estimate for $ ( \log R )^{ - \frac{3}{4} } S_R $.
By the Gaussian concentration estimate \eqref{eqn:Gaussian-concentration} with $ C_{ \rm CM } = 2 \sqrt\pi $, we obtain \eqref{eqn:tail-bound-sup-x-r}. For later purposes let us record that same argument as above may be used to deduce
\begin{equation}\label{eqn:concentration-x-r}
\P \{ | S_R - \E S_R | \geq \alpha \}
\leq 2 \exp ( - \frac{1}{ 8 \pi } \alpha^2 )
\end{equation}
for every $ R > 0 $.

\medskip

\textit{Step 2 (Moment bound).} For the expectation we have
\begin{equation}\label{eqn:sup-x-r-expectation-reduction}
\E \sup_{ R \ge 2 } ( \log R )^{ - \frac{3}{4} } S_R
= \E \sup_{ k \in \N_0 } \sup_{ 2^k \leq R \leq 2^{k+1} } ( \log R )^{ - \frac{3}{4} } S_R
\sim \E \sup_{ R \ge 2 ~ { \rm dyadic } } ( \log R )^{ - \frac{3}{4} } S_R.
\end{equation}
Furthermore, we may estimate
\begin{equation}\label{eqn:sup-x-r-expectation-split}
\E \sup_{ R \ge 2 ~ { \rm dyadic } } ( \log R )^{ - \frac{3}{4} } S_R
\leq \sup_{ R \ge 2 ~ { \rm dyadic } } ( \log R )^{ - \frac{3}{4} } \E S_R
 + \E \sup_{  R \ge 2 ~ { \rm dyadic } } \log ( R )^{ - \frac{3}{4} }  ( S_R - \E S_R ).
\end{equation}
We estimate the second term on the r.h.s.~of \eqref{eqn:sup-x-r-expectation-split} by using \eqref{eqn:concentration-x-r}. More precisely, we have

\begin{align*}
\E \sup_{ R \ge 2 ~ { \rm dyadic } }  ( \log R )^{ - \frac{3}{4} }  ( S_R - \E S_R )
& \leq 1 + \sum_{ R \ge 2 ~ { \rm dyadic } } \int_{ 1 }^{ \infty } \P \{ S_R - \E S_R > t ( \log R )^{ \frac{3}{4} } \} dt \\
& \lesssim 1 + \sum_{ R \ge 2 ~ { \rm dyadic } } \int_{ 1 }^{ \infty } \exp( - \frac{1}{8 \pi} t^2 ( \log R )^{ \frac{3}{2} } ) dt \\
& \lesssim 1 + \sum_{ R \ge 2 ~ { \rm dyadic } } ( \log R )^{ - \frac{3}{4} }  \int_{  ( \log R )^{ \frac{3}{4} } }^{ \infty } \exp( - \frac{1}{8 \pi} t^2 ) dt.
\end{align*}
Using the tail bound
\begin{align}\label{eqn:gaussian-tail-bound}
\int_{ x }^{ \infty } \exp( - \frac{1}{8 \pi} t^2 ) dt
\lesssim \frac{1}{ x } \exp( - \frac{1}{8 \pi} x^2 ),
\end{align}
see e.g.~\cite[(2.1.1)]{AdlerTaylor-RF-and-Geometry}, we obtain
\begin{equation}\label{eqn:sup-x-r-expectation-split-ii}
\E \sup_{ R \ge 2 ~ { \rm dyadic } } ( \log R )^{ - \frac{3}{4} } ( S_R - \E S_R )
\lesssim 1 + \sum_{ R \ge 2 ~ { \rm dyadic } } ( \log R )^{ - \frac{3}{2} } \exp( - \frac{1}{8 \pi}  ( \log R )^{ \frac{3}{2} } )
\end{equation}
and the last sum converges. Hence, \eqref{eqn:sup-x-r-expectation-reduction} and \eqref{eqn:sup-x-r-expectation-split} imply
\begin{align*}
\sup_{ R \ge 2 ~ { \rm dyadic } } ( \log R )^{ - \frac{3}{4} } \E S_R
\lesssim \E \sup_{ R > 0 } ( \log R )^{ - \frac{3}{4} } S_R
\lesssim 1 +  \sup_{ R \ge 2 ~ { \rm dyadic } } ( \log R )^{ - \frac{3}{4} } \E S_R
\end{align*}
and the latter supremum is finite due to Theorem \ref{thm:talagrand-bound}. This finishes the argument for \eqref{eqn:expectation-bound-sup-x-r}.
\end{proof}

Equipped with Proposition~\ref{prop:concentration-scales}, we now use another concentration argument to perform the pinning in space. The proof proceeds analogously to the previous proof, which makes it a bit hard to recognize the heuristic in Section~\ref{sec:strategy-of-proof}. In its core, our argument literally applies to an abstract set of Gaussian random variables; see also the proof of Lemma \ref{lem:wn-sup-bound} below.

\begin{proof}[Proof of Proposition \ref{prop:concentration-in-space}.]
Let us write
\begin{align*}
\mathscr{S} (x) =  \sup_{ R \ge 2 } \, ( \log R )^{ - \frac{3}{4} } \, S_R ( \xi( \cdot - x ) )
\quad { \rm for } ~ x \in \R^2.
\end{align*}
The reader will notice that the argument relies on the sub-Gaussianity of $ \mathscr{S} (x) $ proven in Proposition \ref{prop:concentration-scales} and hence follows essentially the same strategy as the proof of the aforementioned Proposition.

\medskip

\textit{Step 1 (Concentration).} As in the proof of Proposition \ref{prop:concentration-scales}, we learn from \eqref{eqn:cm-pf-02} that
\begin{align*}
\xi \mapsto \sup_{ x \in \R^2 } \, ( \log | x |_{+})^{ - \frac{1}{2} } \, \mathscr{S} (x)
\quad { \rm is } \quad 2 \sqrt{ \pi } - { \rm Lipschitz ~ under ~ Cameron\text{-}Martin ~ shifts },
\end{align*}
so that by \cite[Theorem 4.5.7]{BogachevGM} it follows that the concentration property \eqref{eqn:concentration-in-space-tail} holds true.

\medskip

\textit{Step 2 (Estimate for the expectation).} We start with the following observation. Given some set $ M \subseteq B_R(x) $, we have $ M \subseteq B_{ R + 1 } ( y ) $ whenever $ | x - y | \leq 1 $. This implies
\begin{align*}
\sup_{ x \in \R^2 } \, ( \log | x | _{+} )^{ - \frac{1}{2} } \mathscr{S} (x)
\lesssim \sup_{ y \in \Z^2 } ( \log | y |_{+} )^{ - \frac{1}{2} } \mathscr{S} (y)
\lesssim \sup_{ y \in \Z^2, | y | \ge 2 } ( \log | y |_{+} )^{ - \frac{1}{2} } \mathscr{S} (y).
\end{align*}
Hence, by \eqref{eqn:expectation-bound-sup-x-r}, we obtain
\begin{align*}
&\E \sup_{ x \in \R^2 } \, ( \log | x |_{+} )^{ - \frac{1}{2} } \mathscr{S} (x) \\
&\qquad \lesssim \sup_{ x \in \Z^2, | x | \ge 2 } \, ( \log | x | )^{ - \frac{1}{2} } \E \mathscr{S} (x) + \E \sup_{ x \in \Z^2, | x | \ge 2 } ( \log | x | )^{ - \frac{1}{2} } ( \mathscr{S} (x) - \E \mathscr{S} (x) ) \\
&\qquad \lesssim 1 + \E \sup_{ x \in \Z^2, | x | \ge 2 } \, ( \log | x | )^{ - \frac{1}{2} } ( \mathscr{S} (x) - \E \mathscr{S} (x) ).
\end{align*}
Similar to \eqref{eqn:sup-x-r-expectation-split-ii}, we may show that for an arbitrary cut-off $ t > 0 $ (that was chosen to be $ 1 $ before), we have
\begin{align*}
&\E \sup_{ x \in \Z^2, | x | \ge 2 } ( \log | x | )^{ - \frac{1}{2} } ( \mathscr{S} (x) -  \E \mathscr{S} (x) ) \\
&\qquad \leq t + \sum_{ x \in \Z^2, | x | \ge 2 }  ( \log | x | )^{ -1 } \exp( - \frac{ t }{8 \pi} \log | x | ).
\end{align*}
The exact form of the exponents is not so important here, since we have the freedom to choose $ t $: note that for $ | x | \ge 2 $ the latter summand simplifies to $ | x | ^{ - \frac{ t }{8 \pi} } ( \log | x | )^{ - 1 } $, which is summable for large $ t \gg 1 $. Together, the last two estimates imply \eqref{eqn:concentration-in-space-expectation}.
\end{proof}

\subsection{Chaining arguments}\label{sec:pf-talagrand-bound}

The goal of this section is to prove Proposition~\ref{thm:talagrand-bound}. Let us stress again that this is the continuum version of \cite[Proposition 2.2]{DW}, which carries over to our continuum setting due to the UV regularization. In the two Lemmas~\ref{lem:wn-sup-bound} and \ref{lem:deformation} below we collect the main adaptations to the continuum setting. 

\medskip

\paragraph{\textit{Preliminaries on the Chaining Method.}}
In this paragraph we give an overview of the main ingredient in Proposition \ref{thm:talagrand-bound}: the \emph{chaining method}. 
We start with an exposition of the chaining method, where we recall some useful results from \cite{Talagrand-GC} that we will need later on. 
For more details on this technique, we refer the reader to \cite{VanHandel} and the books \cite{Talagrand-GC} and \cite{Talagrand-ULB}.

The chaining method is based on the insight that the supremum of centered Gaussian random variables $ \{ X_t ~|~ t \in T \} $ is related to covering numbers with respect to the metic 
\begin{align*}
	d(s,t)^2  \coloneqq \E | X_s - X_t |^2 ,
\end{align*}
and provides a far reaching generalization of Dudley's bound \cite{DudleyBound} or \cite[(1.18)]{Talagrand-GC}. 
Due to Gaussianity of the random variables, this metric provides a strong control on their distance, which is encoded in the tail estimate
\begin{align*}
	\P \{ | X_s - X_t | > \alpha \} \leq \exp( - \frac{\alpha^2}{2 d(s,t)^2} ) . 
\end{align*}
In a kind of coarse graining procedure, one bounds the supremum of the family $ \{ X_t ~|~ t \in T \} $ by grouping the random variables provided they are close in the distance $d$. 
In the Dudley  bound this is translated in a successive covering of the metric space $ T $ by balls. The main insight of Talagrand is to consider general partitions instead; optimizing them leads to sharp estimates, see \cite[Section 1.2]{Talagrand-GC} for a short exposition of that topic. Using that strategy one can prove that
\begin{align}\label{eqn:chaining-bound}
\E \sup_{ t \in T } X_t \sim \inf \Big\{ \sup_{ t \in T } \sum_{ k \geq 0 } 2^{ \frac{k}{2} } { \rm diam } (  A_k(t) ) ~\big|~ { \rm admissible ~ partition ~ chain }\footnotemark ~ \{ \mathscr{A}_k \}  \Big\}, 
\end{align}
see \cite[Theorem 1.2.6]{Talagrand-GC} and \cite[Theorem 2.10.1]{Talagrand-ULB}.\footnotetext{A sequence of partitions $ \{ \mathscr{A}_k \} $ of the metric space $ (T,d) $ is admissible provided it increases, but its size growths like $ | \mathscr{A}_k | \leq 2^{ 2^k } $. For any such partition $ \mathscr{A}_k $ the set $ A_k (t) $ denotes the unique element $ A \in \mathscr{A}_k $ such that $ t \in A $.} 
Note that the r.h.s.~of \eqref{eqn:chaining-bound} only depends on the geometry of the metric space $ (T,d) $; due to its crucial role in the chaining method it is usually denoted by $ \gamma_2 (T,d) $. Furthermore, if the process $ \{ X_t \, \mid \, t \in T \} $ is symmetric, meaning that $  \{ X_t \, \mid \, t \in T \} $ and $ \{ - X_t \, \mid \, t \in T \} $ have the same law, then \eqref{eqn:chaining-bound} may be postprocessed into an estimate on the supremum of $ | X_t | $, namely
\begin{align}\label{eqn:chaining-bound-b}
0 \leq \E \sup_{ t \in T } | X_t | - \E | X_{ t_0 } |  \lesssim \gamma_2 ( T, d ) 
\end{align}
for any $ t_0 \in T $, see \cite[Lemma 2.2.1]{Talagrand-ULB}. 

Since in our case we are only interested in upper bounds on the supremum of the Gaussians on the l.h.s.~of \eqref{eqn:chaining-bound}, let us make the useful observation that instead of studying the metric space $ (T,d) $, we can parametrize it through a surjective Lipschitz-continuous map $ f : (M,d_M) \rightarrow (T,d) $ provided we understand the metric geometry of $ (M,d_M) $ in the sense of the $ \gamma_2 $-functional. More precisely, in this case we have
\begin{align}\label{eqn:gamma-lipschitz-composition}
	\E \sup_{ t \in T } X_t \lesssim [ f ]_{ \rm Lip } \, \gamma_{2} ( M, d_M ) , 
\end{align}
see \cite[Theorem 1.3.6]{Talagrand-GC}. Hence, to prove Proposition \ref{thm:talagrand-bound}, we need to find a convenient way of parametrizing the Gaussians $ \frac{1}{ | \partial M | } \int_{M} \xi $ through a metric space that we understand well.

In \cite{DW}, a proof of the discrete analog of Proposition \ref{thm:talagrand-bound} is suggested that employs the chaining bound in a multi-scale way, in the sense that they study the integrals $ \frac{1}{ | \partial M | } \int_{M} \xi $ for $ | \partial M | \sim L $ over a range of dyadic scales $ L $. In this way, the metric takes the more customary form
\begin{align*}
\E \Big| \frac{1}{L} \int_{ M } \xi - \frac{1}{L} \int_{ N } \xi \Big|^2
= \left\{ \begin{aligned}
& ~ \frac{2\pi}{L^2} \int | \widehat{ \boldsymbol{1}_M } (k) -  \widehat{ \boldsymbol{1}_N  }(k) |^2 \widehat{c}(dk) ~ \\
& ~ \frac{1}{L^2} \sum_{ x \in \Z^2 } \Big( \int_{ Q_1(x) } (  \boldsymbol{1}_M -  \boldsymbol{1}_N ) \Big)^2 ~
\end{aligned} \right\}
\leq \frac{1}{L^2} \int |  \boldsymbol{1}_M -  \boldsymbol{1}_N |^2
\end{align*}
for the two noises under consideration in Section \ref{sec:assumptions-on-noise}. Let us note that passing to the upper bound is justified by \eqref{eqn:gamma-lipschitz-composition} above. In this way, we are led to consider the space $ \{ M \subseteq B_R ~|~ \int | \nabla \boldsymbol{1}_M | \leq L \} $ of Caccioppli sets w.r.t.~the $ L^2 $-metric. Note that the former set is a compact subset of the space $ BV( B_R ) $ equipped with the $ L^1 $-topology (which in terms of topologies agrees with the $ L^2 $-topology induced on that set).

We can use \eqref{eqn:gamma-lipschitz-composition} to pull the estimation procedure back into a Hilbert space setting, where the Euclidian geometry allows for explicit computations. In our case, the following parametrization will be useful.  

\begin{lemma}[Proposition 3.4.5 in \cite{Talagrand-GC}]\label{lem:paramterization-through-h1}
	Denote by $ L = 2^k $ (for some $ k \in \N $) a length scale and let
	\begin{align*}
	\Gamma_{ L } \coloneqq \big\{ M \subseteq \R^2 ~ { \rm simply ~ connected } ~ \big| ~ \partial M ~ { \rm contained ~ in ~ the ~ square ~ lattice }, ~ | \partial M | \leq L, ~ 0 \in \partial M \big\}
	\end{align*}
	There exists a surjective map $ M : { \rm dom } ( T ) \subseteq \{ f \in H^1 ( \mathbb{T} ) ~|~ \int_{ \mathbb{T} } | f |^2 + \int_{ \mathbb{T} } | f' |^2 \leq 1 \} \rightarrow \Gamma_L $	 such that
	\begin{align*}
		\int | \boldsymbol{1}_{ M( f ) } - \boldsymbol{1}_{ M ( g ) } | \lesssim L^2 \Big( \int_{ \mathbb{T} } | f - g |^2 \Big)^{ \frac{1}{2} } .
	\end{align*}
\end{lemma}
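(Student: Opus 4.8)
The plan is to recall, and adapt the normalizations of, the construction behind \cite[Proposition~3.4.5]{Talagrand-GC}: I would encode each $M \in \Gamma_L$ by a constant-speed closed parametrization of $\partial M$ anchored at the origin, rescale it into the unit ball of $H^1(\mathbb{T})$, and then deduce the claimed $L^1$-Lipschitz bound from an elementary sweeping estimate for the symmetric difference of two enclosed regions.

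\emph{Parametrization.} First I would fix $M \in \Gamma_L$. Since $M$ is simply connected, $\partial M$ is a connected union of unit edges of $\Z^2$ of total length $|\partial M| \le L$; as each of its vertices has even degree, $\partial M$ carries a closed Eulerian circuit based at $0 \in \partial M$ (for boundaries that pass through a vertex more than once one simply fixes such a traversal — this combinatorial bookkeeping is exactly what is done in \cite[Proposition~3.4.5]{Talagrand-GC}). Reparametrizing this circuit at constant speed over $\mathbb{T} = \R/\Z$ yields a Lipschitz closed curve $\gamma_M\colon \mathbb{T}\to\R^2$ with $\gamma_M(0) = 0$ and $|\gamma_M'|\equiv|\partial M|\le L$ a.e. Because $\gamma_M$ leaves and returns to the origin along a curve of length at most $L$, it satisfies $|\gamma_M(t)| \le \tfrac12|\partial M| \le \tfrac{L}{2}$ for every $t$, so the rescaled curve $f_M = \tfrac{1}{2L}\gamma_M$ obeys
\begin{equation*}
\int_{\mathbb{T}}|f_M|^2 + \int_{\mathbb{T}}|f_M'|^2 \;\le\; \tfrac{1}{16} + \tfrac{1}{4} \;<\; 1 .
\end{equation*}
I would then let $\mathrm{dom}(T)$ be the collection of all such $f_M$ with $M\in\Gamma_L$, and set $M(f)$ to be the region enclosed by the curve $2Lf$ (the set of points about which $2Lf$ has nonzero winding number); by construction $M(\cdot)\colon\mathrm{dom}(T)\to\Gamma_L$ is well defined and surjective.

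\emph{The quantitative bound.} Given $f = f_M$ and $g = f_N$ with $M,N\in\Gamma_L$, I would consider the affine homotopy $H(s,t) = (1-s)\gamma_M(t) + s\,\gamma_N(t)$ on $[0,1]\times\mathbb{T}$. For $x\notin\mathrm{im}(H)$ the function $s\mapsto\operatorname{wind}(H(s,\cdot),x)$ is continuous and integer-valued, hence constant; since $\operatorname{wind}(\gamma_M,\cdot)$ and $\operatorname{wind}(\gamma_N,\cdot)$ disagree on $M(f)\triangle M(g)$, that set lies inside $\mathrm{im}(H)$. By the area formula for Lipschitz maps, together with $|\gamma_M'|,|\gamma_N'|\le L$ and $|\mathbb{T}| = 1$,
\begin{equation*}
\bigl|M(f)\triangle M(g)\bigr| \;\le\; |\mathrm{im}(H)| \;\le\; \int_0^1\!\!\int_{\mathbb{T}} \bigl|\partial_s H\wedge\partial_t H\bigr|\,dt\,ds \;\le\; 2L\int_{\mathbb{T}}|\gamma_M-\gamma_N|\,dt \;\le\; 2L\Bigl(\int_{\mathbb{T}}|\gamma_M-\gamma_N|^2\Bigr)^{1/2}.
\end{equation*}
Since $\gamma_M-\gamma_N = 2L(f-g)$ and $\int|\boldsymbol{1}_{M(f)}-\boldsymbol{1}_{M(g)}| = |M(f)\triangle M(g)|$, the right-hand side equals $4L^2\bigl(\int_{\mathbb{T}}|f-g|^2\bigr)^{1/2}$, which is the desired estimate.

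\emph{The main obstacle.} The genuinely nontrivial point is the parametrization step: realizing $\partial M$ as a single constant-speed loop based at the origin, uniformly over $M\in\Gamma_L$ and allowing boundaries that touch themselves, and recovering $M$ from that loop. This is precisely the content of \cite[Proposition~3.4.5]{Talagrand-GC}; the remaining ingredients — the $H^1$-normalization and the sweeping bound — are elementary and use the two-dimensional setting in an essential way.
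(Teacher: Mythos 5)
Your proposal is correct. Note that the paper does not actually prove this lemma: it is stated as a citation of \cite[Proposition 3.4.5]{Talagrand-GC}, with only a one-sentence indication of the idea (``one may use a piecewise continuous function to parametrize'' the boundary curves, ``the crux is to prove the above estimate''). Your argument reconstructs exactly that cited proof in its essential architecture -- a constant-speed closed parametrization of $\partial M$ anchored at the origin, the elementary $H^1$-normalization $f_M = \tfrac{1}{2L}\gamma_M$ using $\sup_t|\gamma_M(t)|\le \tfrac12|\partial M|$ and $|\gamma_M'|\le L$, and a bound on $|M(f)\triangle M(g)|$ by (length)$\times$($L^1$-distance of the curves). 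Your way of obtaining the sweeping estimate, via the affine homotopy $H(s,t)$, constancy of the winding number off $\mathrm{im}(H)$, and the area formula, is a clean self-contained substitute for Talagrand's computation and gives the right constant $\lesssim L^2\bigl(\int_{\mathbb{T}}|f-g|^2\bigr)^{1/2}$ after Cauchy--Schwarz. The only genuinely delicate point is the one you explicitly flag: realizing $\partial M$ as a single Eulerian circuit when the boundary touches itself at lattice vertices, and checking that the enclosed region (points of nonzero winding number) recovers $M$ up to a null set; the ray-crossing parity argument (interior points have odd winding number, the exterior is connected to infinity and hence has winding number zero) closes this, and deferring the bookkeeping to \cite{Talagrand-GC} is consistent with the level of detail the paper itself adopts.
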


For a proof of the lemma, we refer the reader to the proof given in Talagrand's book, see \cite[Proof of Proposition 3.4.5 \& Lemma 3.4.7 for the Fourier space representation]{Talagrand-GC}. The idea originates from the matching community, see again \cite[Theorem 3.4.1]{Talagrand-GC} for an exposition of this topic. Since the elements of $ \Gamma_L $ are uniquely determined by their boundary curves, one may use a piecewise continuous function to parametrize it. The crux is to prove the above estimate.

Note that since our original metric measures distances on indicator functions, this bound translates to the statement
\begin{align}\label{eqn:upper-bound-gaussian-metric}
\Big( \E \Big| \frac{1}{L} \int_{ M(f) } \xi - \frac{1}{L} \int_{ M(g) } \xi \Big|^2 \Big)^{ \frac{1}{2} }
\lesssim \Big( \frac{1}{L} \int | \boldsymbol{1}_{ M( f ) } - \boldsymbol{1}_{ M ( g ) }  | \Big)^{ \frac{1}{2} } \lesssim \Big( \int_{ \mathbb{T} } | f - g |^2 \Big)^{ \frac{1}{4} } 
\end{align}
for the $ L^2 $-metric considered above. Note that the exponent on the r.h.s.~is $ \frac{1}{4} $, which is from a metric point of view not a problem: the square root of a metric is still a metric.

It is the next lemma from \cite{Talagrand-GC} that recovers the correct exponent in the framework of the chaining method. The lemma relates the $ \gamma_2 $-functional of interest to another $ \gamma $-functional, see \cite[(3.4)]{Talagrand-GC}, namely
\begin{align}\label{eqn:gamma_1-2}
	\gamma_{1,2} (T,d) \coloneqq \inf \Big\{ \sup_{ t \in T } \Big( \sum_{ k \geq 0 } \big( 2^{k} { \rm diam } (  A_k(t) ) \big)^2 \Big)^{ \frac{1}{2} } ~\big|~ { \rm admissible ~ partition ~ chain } ~ \{ \mathscr{A}_k \}  \Big\}.
\end{align}
At a first glance, this looks a bit artificial, but it turns out that this is exactly the right notion of measuring the compactness of the embedding $ H^1 ( \mathbb{T} ) \hookrightarrow L^2 ( \mathbb{T} ) $, more precisely we shall consider $\gamma_{1,2} ( H^1 ( \mathbb{T} ) \hookrightarrow L^2 ( \mathbb{T} )  ) \coloneq \gamma_{1,2} ( \{  \| f \|_{ H^1(\T) } \leq 1 \} , \| \cdot \|_{ L^2(\T) } ) $; in view of our original metric, corresponding to the square-root of the $ L^2 ( \mathbb{T} ) $-norm, the new functional replaces the $ \ell^{ \frac{1}{2} } $-summability (which by the naive argument above we expect to diverge) on the r.h.s.~of \eqref{eqn:chaining-bound} by an $ \ell^2 $-summability condition in the $ \gamma_{1,2} $-functional (which borderline converges). In Lemma \ref{chain-key-lemma} we will refine the aforementioned strategy and explicitly construct a partition of $ H^1( \mathbb{T} ) $ as a competitor in \eqref{eqn:gamma_1-2}. Since we will use $ L^2 $-balls of slightly varying radii this may indeed be seen as a refined statement on the compactness of $ H^1( \mathbb{T} ) \hookrightarrow L^2 ( \mathbb{T} ) $.

The first of the above mentioned lemma is the following.

\begin{lemma}[Lemma 3.4.6 in \cite{Talagrand-GC}]\label{lem:emergernce-gamma_1-2}
	Let $ (T, d_T) $, $ (M, d_M) $ be two metric spaces related through a Lipschitz-continuous surjective map $ f : (M, d_M^{\frac{1}{2}}) \rightarrow (T,d_T) $. Assume that $ T $ is finite (with $ | T | \geq 2 $ to exclude pathological cases); then
	\begin{align*}
		\gamma_2 ( T, d_T ) \lesssim ( \log \log | T | )^{ \frac{3}{4} } \gamma_{ 1, 2 } ( M, d_M ) ^{ \frac{1}{2} } .
	\end{align*}
\end{lemma}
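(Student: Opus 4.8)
The plan is to produce, from a near-optimal admissible partition chain of $(M, d_M)$ realizing $\gamma_{1,2}(M,d_M)$, an admissible partition chain of $(T,d_T)$ whose Talagrand sum $\sup_t \sum_k 2^{k/2}\,\mathrm{diam}(A_k(t))$ is controlled by $(\log\log|T|)^{3/4}\,\gamma_{1,2}(M,d_M)^{1/2}$. First I would fix an admissible chain $\{\mathscr{B}_k\}$ on $M$ with $\sup_{x}\big(\sum_k (2^k\,\mathrm{diam}_{d_M}(B_k(x)))^2\big)^{1/2}\le 2\,\gamma_{1,2}(M,d_M)=:2S$. Pushing forward the pieces of $\mathscr{B}_k$ under the surjection $f$ gives partitions $f(\mathscr{B}_k)$ of $T$ with $|f(\mathscr{B}_k)|\le|\mathscr{B}_k|\le 2^{2^k}$, and since $f\colon (M,d_M^{1/2})\to(T,d_T)$ is Lipschitz (say with constant $L_f$), the diameters transform as $\mathrm{diam}_{d_T}(f(B))\le L_f\,\mathrm{diam}_{d_M}(B)^{1/2}$. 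So the naive Talagrand sum on $T$ becomes $\sup_x\sum_k 2^{k/2} L_f\,\mathrm{diam}_{d_M}(B_k(x))^{1/2}$, and the task reduces to bounding $\sum_k 2^{k/2} d_k^{1/2}$ in terms of $\big(\sum_k (2^k d_k)^2\big)^{1/2}$, where $d_k:=\mathrm{diam}_{d_M}(B_k(x))$.

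The heart of the matter is that this last interpolation inequality fails for the full infinite sum — $\sum_k 2^{k/2}d_k^{1/2}$ need not be finite just from $\ell^2$-control of $2^k d_k$ — so one must truncate the chain. Here is where the finiteness hypothesis $|T|\ge 2$ and the factor $\log\log|T|$ enter: once $2^{2^k}>|T|$, i.e. $k>\log_2\log_2|T|$, the partition $f(\mathscr{B}_k)$ could as well separate all points of $T$, so for those high levels one should simply collapse to the discrete partition into singletons (diameter $0$), which costs nothing in the Talagrand sum. Thus the relevant sum runs only over $0\le k\le K$ with $K\sim\log_2\log_2|T|$. On this finite range I would apply Cauchy–Schwarz / Hölder:
\[
\sum_{k=0}^{K} 2^{k/2} d_k^{1/2}
= \sum_{k=0}^{K} 2^{-k/2}\,(2^k d_k)^{1/2}
\le \Big(\sum_{k=0}^{K} 2^{-k/3}\Big)^{?}\cdots
\]
— more precisely, write $2^{k/2}d_k^{1/2} = (2^{k}d_k)^{1/2}\cdot 2^{0}$ and bound $\sum_{k\le K}(2^k d_k)^{1/2}\le K^{1/2}\big(\sum_{k\le K}2^k d_k\big)^{1/2}$, then bound $\sum_{k\le K}2^k d_k\le K^{1/2}\big(\sum_{k\le K}(2^k d_k)^2\big)^{1/4}\cdot(\text{something})$; the exponent bookkeeping should be arranged so that two applications of Cauchy–Schwarz over a range of length $K$ produce a prefactor $K^{3/4}$ multiplying $\big(\sum_k(2^k d_k)^2\big)^{1/4}=S^{1/2}$, which is exactly the claimed $(\log\log|T|)^{3/4}\gamma_{1,2}(M,d_M)^{1/2}$. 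The absorption of $L_f$ into the implicit constant is harmless.

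There is a technical subtlety in checking that the truncated-then-refined family $\{\mathscr{A}_k\}$ on $T$ is genuinely an \emph{admissible} chain: it must be increasing and satisfy $|\mathscr{A}_k|\le 2^{2^k}$. For $k\le K$ we take $\mathscr{A}_k=f(\mathscr{B}_k)$ (increasing because $\{\mathscr{B}_k\}$ is, and of the right cardinality); for $k>K$ we refine to singletons, which is still increasing and has cardinality $|T|\le 2^{2^K}\le 2^{2^k}$. One should also handle the degenerate case where $d_k$ does not decay — but $\mathrm{diam}_{d_M}(B_0(x))\le\mathrm{diam}(M)$ is finite and enters the $k=0$ term only, contributing $O(S^{1/2})$, so it is absorbed. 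The main obstacle I anticipate is getting the exponent exactly $3/4$ rather than something like $1$ or $1/2$: this requires doing the Cauchy–Schwarz step carefully (essentially twice, peeling one power of two at a time) and being precise about which $\ell^p$ norm of $(2^k d_k)_k$ is available. Everything else — pushforward of partitions, the Lipschitz diameter estimate, the truncation at $\log\log|T|$ — is routine once this interpolation is pinned down, and indeed the whole argument is exactly Talagrand's, which we may cite for the details.
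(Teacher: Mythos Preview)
Your plan is essentially the paper's argument: truncate an admissible chain for $\gamma_{1,2}(M,d_M)$ at level $n\sim\log\log|T|$ (refining to singletons thereafter), and then interpolate on the finite sum to extract the factor $n^{3/4}$. Two small points are worth cleaning up. First, the interpolation you are circling is a single H\"older step with exponents $4$ and $4/3$: for $a_k=2^k\,\mathrm{diam}_{d_M}(A_k(t))$ and the sum restricted to $0\le k\le n-1$,
\[
\sum_{k<n} a_k^{1/2}\le n^{3/4}\Big(\sum_{k<n} a_k^{2}\Big)^{1/4}\le n^{3/4}\,\gamma_{1,2}(M,d_M)^{1/2},
\]
which is exactly the exponent you want; your ``two Cauchy--Schwarz'' sketch gives the same bound but is unnecessarily roundabout. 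Second, your pushforward step $\mathscr{A}_k:=f(\mathscr{B}_k)$ is not quite right as stated: images of disjoint sets under a surjection need not be disjoint, so $f(\mathscr{B}_k)$ is in general only a covering of $T$, not a partition, and the ``increasing'' property is not automatic either. The paper avoids this by not pushing forward at all: it first restricts to a subset of $M$ in bijection with $T$ (so $|M|=|T|$), proves the purely metric inequality $\gamma_2(M,d_M^{1/2})\lesssim(\log\log|M|)^{3/4}\gamma_{1,2}(M,d_M)^{1/2}$ on $M$, and then invokes the Lipschitz monotonicity $\gamma_2(T,d_T)\lesssim L_f\,\gamma_2(M,d_M^{1/2})$ (Talagrand's Theorem~1.3.6). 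This sidesteps the partition issue entirely and is the cleaner route.
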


The following proof is essentially the same as in \cite{Talagrand-GC}, we include it for the reader's convenience, since we have changed the formulation of the statement a little lit.

\begin{proof}
	W.l.o.g.~we may assume that $ M = f^{-1} ( T ) $. Otherwise, we restrict $ M $ and observe after the proof that $ \gamma_{1,2} $ is monotone up to multiplicative constants, see \cite[Theorem 1.3.6]{Talagrand-GC}.

	Modulo the Lipschitz parameterization that may be handled as in  \cite[Theorem 1.3.6]{Talagrand-GC}, see also the comment above, what we really need to show is
	\begin{align*}
		\gamma_2 ( M, d_M^{\frac{1}{2}} ) \lesssim ( \log \log | M | )^{ \frac{3}{4} } \gamma_{ 1, 2 } ( M, d_M ) ^{ \frac{1}{2} } .
	\end{align*}
	Since $ | M | = | T | \geq 2 $, we can write $ 2^{ 2^{n-1} } \leq | M | \leq 2^{ 2^n } $ for some $ n \in \N $. Note that in particular,
	\begin{align*}
	\log ( \log 2 ) + ( n-1 ) \log 2 
	\leq \log ( \log | M | )
	\leq \log ( \log 2 ) + n \log 2,
	\quad { \rm so ~ that } \quad
	\log \log | M | \sim n .
	\end{align*}
	For the desired estimate, we start with an admissible partition $ \{ \mathscr{A}_k \} $ that almost attains the infimum for $ \gamma_{1,2}(T,d) $, i.e.
	\begin{align*}
		 \gamma_{1,2} (M, d_M) 
		 \leq \sup_{t \in T} \Big( \sum_{ k \geq 0 } \big( 2^k { \rm diam }_{ d } ( A_k (t) ) \big)^2 \Big)^{ \frac{1}{2} }
		 \leq \gamma_{1,2} (M, d_M) + \varepsilon
	\end{align*}
	Note that the remaining supremum above becomes smaller -- and hence closer to the actual infimum -- if we redefine $ \mathscr{A}_n = \{ \{ t \} ~|~ t \in M \} $ (note that $ | \mathscr{A}_n | \leq | M | \leq 2^{ 2^n } $). This procedure ensures that the sum is finite; it stops at $ k = n - 1 $. Hence, by Hölder's inequality we have
	\begin{align*}
		 \gamma_{2}( M, d_M^{ \frac{1}{2} } )
		 \leq \sum_{ k \geq 0 } \big( 2^k { \rm diam }_{ d } ( A_k(t) ) \big)^{ \frac{1}{2} } 
		 \leq n^{ \frac{3}{4} } \Big( \sum_{ k \geq 0 } \big( 2^k { \rm diam }_{ d } ( A_k(t) ) \big)^{2} \Big)^{ \frac{1}{4} }
		 \leq n^{ \frac{3}{4} } ( \gamma_{1,2} ( M, d_M) + \varepsilon )^{ \frac{1}{2} } .
	\end{align*}
	Using $ | M | \sim n $ this concludes the proof upon sending $ \varepsilon \rightarrow 0 $.
\end{proof}

To make use of the last lemma, we of course need control over the quantity $ \gamma_{1,2} ( H^1( \mathbb{T} ) \hookrightarrow L^2( \mathbb{T} ) ) $ appearing therein. This is handled by the next lemma.

\begin{lemma}[Lemma 3.4.7 in \cite{Talagrand-GC}]\label{chain-key-lemma}
	We have $ \gamma_{1,2} ( H^1( \mathbb{T} ) \hookrightarrow L^2( \mathbb{T} ) ) < \infty $.
\end{lemma}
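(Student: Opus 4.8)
The plan is to construct an explicit admissible partition chain $\{\mathscr{A}_k\}$ of the unit ball $\mathscr{B} = \{f \in H^1(\mathbb{T}) : \|f\|_{H^1(\mathbb{T})} \le 1\}$, measured in the $L^2(\mathbb{T})$-metric, for which the sum $\sum_{k \ge 0} (2^k \operatorname{diam}_{L^2}(A_k(f)))^2$ is bounded uniformly in $f$. The natural tool is the Fourier representation: writing $f = \sum_{n \in \mathbb{Z}} \hat f(n) e^{2\pi i n t}$, the constraint $\|f\|_{H^1}^2 = \sum_n (1 + 4\pi^2 n^2)|\hat f(n)|^2 \le 1$ forces the tail $\sum_{|n| > N} |\hat f(n)|^2 \lesssim N^{-2}$. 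This is precisely the quantitative compactness of $H^1(\mathbb{T}) \hookrightarrow L^2(\mathbb{T})$ alluded to in the text: truncating the Fourier series at level $N$ incurs an $L^2$-error of order $N^{-1}$.

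First I would fix the combinatorics of the partition chain. At stage $k$ we are allowed $|\mathscr{A}_k| \le 2^{2^k}$ sets. The idea is to use the first few stages ($k$ small) to do nothing (the whole ball is one piece, of diameter $O(1)$), and then, once $2^{2^k}$ is large enough, to partition by discretizing the low-frequency Fourier coefficients on an appropriate grid. Concretely, at stage $k$ one keeps frequencies $|n| \le N_k$ for a slowly growing sequence $N_k$ (something like $N_k \sim 2^{c \cdot 2^k}$ will not work since that is too many coefficients; rather one wants the number of grid cells $(\text{grid fineness})^{-(2N_k+1)}$ to be at most $2^{2^k}$, which forces $N_k$ to grow only like a constant times $k$, or perhaps $N_k$ essentially constant with the grid refining — I would need to optimize this). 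The $L^2$-diameter of a cell at stage $k$ then splits into a tail contribution $\lesssim N_k^{-1}$ (from unresolved high frequencies) and a resolution contribution from the grid spacing on the resolved coefficients. One balances these so that $2^k \operatorname{diam}_{L^2}(A_k(f)) \lesssim 2^{-\alpha k}$ for some $\alpha > 0$, making $\sum_k (2^k \operatorname{diam})^2$ a convergent geometric-type series. A weighted grid — finer on low frequencies, coarser on high ones, reflecting the $(1+n^2)$ weight — is the right device, and this is where the argument of Talagrand's Lemma 3.4.7 does its real work.

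The main obstacle is the simultaneous bookkeeping of three competing budgets: the cardinality budget $2^{2^k}$, the frequency cutoff $N_k$, and the per-coordinate grid fineness, arranged so that the diameters decay fast enough (faster than $2^{-k}$ after the $2^k$ weight) while the partition remains genuinely nested (each $\mathscr{A}_{k+1}$ refines $\mathscr{A}_k$) — nesting is automatic if one only ever refines grids and raises cutoffs, so the real fight is the quantitative balance. A clean way to organize it is: at stage $k$, resolve frequencies $|n| \le m$ (with $m$ chosen so the tail error $\sim 1/m$ is comparable to the target $2^{-\alpha k}/2^k$), and on each such coefficient impose a grid of spacing $\sim \varepsilon_k/(1+|n|)$ so that the weighted $\ell^2$-ball structure is respected; checking that the total number of cells stays below $2^{2^k}$ amounts to the estimate $\prod_{|n| \le m}(1+|n|)/\varepsilon_k \le 2^{2^k}$, i.e. a logarithm bound $\sum_{|n|\le m}\log((1+|n|)/\varepsilon_k) \lesssim 2^k$. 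Since $\varepsilon_k$ only shrinks polynomially in $k$ and $m$ grows at most polynomially in $k$, the left side is polynomial in $k$, hence eventually dominated by $2^k$; the finitely many exceptional small-$k$ stages are absorbed by taking $\mathscr{A}_k$ trivial there, contributing a harmless $O(1)$ to the final sum. Taking the infimum over such chains then yields $\gamma_{1,2}(H^1(\mathbb{T}) \hookrightarrow L^2(\mathbb{T})) < \infty$, as claimed.
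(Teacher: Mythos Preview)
The paper does not actually prove this lemma; it simply refers to Talagrand's book, citing the partition schemes of \cite[Corollary 3.1.4 \& Theorem 3.1.5]{Talagrand-GC} and the alternative route via the contraction principle in \cite[Theorem 4.2.1]{Talagrand-ULB} and \cite[Theorem 3.1]{VanHandel}. Your proposal --- passing to Fourier coefficients so that the $H^1$-ball becomes an ellipsoid with semi-axes $\sim (1+|n|)^{-1}$, then building nested partitions by truncating frequencies and discretizing the retained coefficients on a weighted grid --- is precisely the content of Talagrand's ellipsoid scheme that the paper invokes, so you are on the same track.

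Your sketch is sound in outline but leaves the quantitative balance unresolved (you flag this yourself). The point where care is needed is the simultaneous choice of frequency cutoff $N_k$ and per-coordinate mesh so that the cell count stays below $2^{2^k}$ while $2^k\operatorname{diam}_{L^2}(A_k)$ is square-summable; your heuristic that ``left side polynomial in $k$, right side $2^k$'' is correct, but the actual diameter of a cell must be computed as an $\ell^2$-sum over \emph{all} coordinates (resolved plus tail), and one has to check that the tail $\sum_{|n|>N_k}(1+|n|)^{-2}\sim N_k^{-1}$ and the grid contribution combine to give the claimed decay. Talagrand's Theorem 3.1.5 packages exactly this computation for general ellipsoids with $\ell^2$-summable axes, so once you have the Fourier identification the result follows by direct citation rather than by redoing the optimization.
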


Again, we refer to the book \cite{Talagrand-GC} for a detailed proof utilizing the partition schemes in \cite[Corollary 3.1.4 \& Theorem 3.1.5]{Talagrand-GC}. Another more recent proof is given in \cite[Theorem 4.2.1]{Talagrand-ULB}, which simplifies the argument by using the contraction principle developed in \cite[Theorem 3.1]{VanHandel}; specializing this construction to our setting leads to an explicit construction of the admissible partition in \eqref{eqn:gamma_1-2}.

\medskip

\paragraph{\textit{Adaptations to the continuum setting.}}

In our proof of Proposition \ref{thm:talagrand-bound}, we follow the general strategy by \cite{DW} and \cite{Talagrand-GC}. Due to our continuum setting, some adaptations of the proof are needed. Briefly summarized, we need some type of continuity of the functional $ M \mapsto \frac{1}{ | \partial M | } \int_{ M } \xi $ to reduce the maximization in \eqref{thm:talagrand-bound} to a set parameterized by discretely many curves. The exponent $ \frac{3}{4} $ then emerges from Lemma~\ref{lem:emergernce-gamma_1-2}. 

In view of the bound of order $ ( \log R )^{ \frac{3}{4} } $ in \eqref{thm:talagrand-bound}, the following estimate on the supremum of $ \xi $ suffices for our purposes. It is this lemma where we need the assumption on the noise $ \xi $ in Section \ref{sec:assumptions-on-noise}. We will see that, next to the Gaussianity, we rely on the first property in \eqref{eqn:noise-assumption}, which we verify along the lines.

\begin{lemma}\label{lem:wn-sup-bound}
Let $ \xi $ denote either of the two noises in Section \ref{sec:assumptions-on-noise}. It holds
\begin{equation}\label{eqn:wn-sup-bound}
\E \sup_{Q_R} | \xi | \lesssim ( \log R )^{ \frac{1}{2} }
\end{equation}
for every $ R \ge 2 $.
\end{lemma}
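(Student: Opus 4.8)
The plan is to reduce the supremum over the continuum box $Q_R$ to a supremum over $\sim R^2$ well-chosen Gaussian random variables and then apply the standard maximal bound \eqref{eqn-gaussian-concentration-plausibel-1} together with Gaussian concentration. First I would treat the easier \emph{discretized white noise}: here $\sup_{Q_R}|\xi| = \max_{x \in \Z^2 \cap Q_R} |\xi_x|$ is literally the maximum of $\lesssim R^2$ i.i.d.\ standard Gaussians, so $\E \sup_{Q_R}|\xi| \lesssim (\log R^2)^{1/2} \lesssim (\log R)^{1/2}$ by the classical bound on the expected maximum of Gaussians (e.g.\ \cite[Theorem 2.1.1]{AdlerTaylor-RF-and-Geometry}). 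For the \emph{regularized white noise} the field is genuinely continuous, so I would proceed in two steps. Partition $Q_R$ into $\lesssim R^2$ unit cells $Q_1(x)$, $x \in \Z^2 \cap Q_R$, and write $\sup_{Q_R}|\xi| \le \max_x \sup_{Q_1(x)} |\xi|$. By stationarity each $\sup_{Q_1(x)}|\xi|$ has the same law as the order-one quantity $Z := \sup_{Q_1} |\xi|$, which is almost surely finite by the first property in \eqref{eqn:noise-assumption}; moreover $Z$ is a Lipschitz (indeed $1$-Lipschitz, after rescaling) function of $\xi$ under Cameron--Martin shifts, since pointwise evaluation is controlled by the covariance $\widehat c$ being bounded, so $Z$ has Gaussian concentration: $\P\{Z - \E Z \ge t\} \le 2\exp(-ct^2)$ with a universal $c>0$, and in particular $\E Z < \infty$ and all exponential moments of $Z$ are finite.

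With this in hand, the second step is a union-bound/concentration argument exactly mirroring Step~2 in the proof of Proposition~\ref{prop:concentration-scales}: writing $Z_x = \sup_{Q_1(x)}|\xi|$ and $m = \E Z$,
\begin{align*}
\E \max_{x \in \Z^2 \cap Q_R} Z_x
&\le m + s + \sum_{x \in \Z^2 \cap Q_R} \int_s^\infty \P\{ Z_x - m \ge t \}\, dt \\
&\lesssim m + s + R^2 \int_s^\infty e^{-ct^2}\, dt,
\end{align*}
and choosing $s \sim (\log R)^{1/2}$ (with an implicit constant depending on $c$) makes the tail term $O(1)$, yielding $\E \max_x Z_x \lesssim (\log R)^{1/2}$ for $R \ge 2$. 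This completes the bound since $\sup_{Q_R}|\xi| \le \max_x Z_x$.

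The only genuine point requiring care — the ``main obstacle'' — is justifying that $Z = \sup_{Q_1}|\xi|$ for the regularized white noise really is almost surely finite and enjoys Gaussian concentration, i.e.\ that pointwise values of $\xi$ make sense and are subject to a Cameron--Martin Lipschitz estimate. This is where the UV cut-off enters: because $\widehat c = \frac{1}{2\pi}\mathbf 1_{\{|k|\le\sqrt{4\pi}\}}$ is a bounded, compactly supported density, $\xi$ has a smooth (indeed analytic) modification and evaluation $\xi \mapsto \xi(x)$ is a bounded linear functional, so for any Cameron--Martin shift $\delta\xi$ one has $|\delta\xi(x)| \le |\delta\xi|_H (\E \xi(x)^2)^{1/2} = |\delta\xi|_H$ by the same computation as in the proof of Lemma~\ref{lemma:cameron-martin-continuity} (with $f$ an approximate identity at $x$). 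Taking the supremum over $x \in Q_1$ preserves the Lipschitz constant, and \cite[Theorem 4.5.7]{BogachevGM} then gives the concentration inequality, from which $\E Z < \infty$ follows. Everything else is the routine union bound above.
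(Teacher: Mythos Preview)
Your proof follows the same strategy as the paper's: decompose $Q_R$ into unit cells, establish sub-Gaussian concentration for $Z=\sup_{Q_1}|\xi|$ with $\E Z=O(1)$, then union-bound over $\sim R^2$ cells and choose the threshold $s\sim(\log R)^{1/2}$. Two minor points are worth flagging. First, you invoke \eqref{eqn:noise-assumption} for the a.s.\ finiteness of $Z$, but that property is precisely what this lemma is meant to establish for the regularized noise (the paper makes this explicit in its Step~3, verifying $\sup_{Q_1}|\xi|<\infty$ a.s.\ via Dudley's entropy bound applied to the increments $\E|\xi(x)-\xi(y)|^2\lesssim|x-y|^{2\alpha}$); your later appeal to smoothness from the compactly supported spectral density is a legitimate alternative, but it should \emph{replace} rather than supplement the circular reference to \eqref{eqn:noise-assumption}. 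Second, the paper obtains concentration for $Z$ via the Borell--TIS inequality, which has the advantage that $Z<\infty$ a.s.\ already yields $\E Z<\infty$ together with the tail bound; in your route via Cameron--Martin Lipschitz continuity and \cite[Theorem~4.5.7]{BogachevGM}, the implication ``concentration $\Rightarrow \E Z<\infty$'' is stated backwards, since concentration around $\E Z$ presupposes $\E Z<\infty$. You should either invoke Borell--TIS/Fernique at that point, or pass through the median version of Gaussian concentration first.
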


Before we prove the lemma, let us comment on why it is natural to expect the bound \eqref{eqn:wn-sup-bound}: we think of the assumptions of Section~\ref{sec:assumptions-on-noise} as guaranteeing that the noise $ \xi $ is sufficiently regular on small scales. Therefore, the supremum of $ | \xi | $ on small balls behaves, at least in terms of estimates, like $ | \xi | $ at the center point. In view of the white noise character of $ \xi $ on large scales this shows that the supremum of $ | \xi | $ behaves like the supremum of $ R^d $ independent random variables, for which the logarithmic bound in \eqref{eqn:wn-sup-bound} is sharp. To make this rigorous, surprisingly few regularity assumptions on the noise are needed.

\begin{proof}[Proof of Lemma \ref{lem:wn-sup-bound}.]
We split the proof in two steps.

\medskip

\textit{Step 1 (Core argument).} Note that 
\begin{align*}
\sup_{Q_R} | \xi | \leq \sup_{p \in \Z^2 \cap Q_R} \sup_{ Q_1(p) } | \xi | .
\end{align*}
Below we will argue that
\begin{align}\label{eq:sup-xi-small-ball}
\E \sup_{ Q_1( p ) }  | \xi | \lesssim 1
\quad \text{and} \quad
\P \{  \sup_{ Q_1 ( p ) }  | \xi | \geq \E \sup_{ Q_1( p ) } | \xi  | + \alpha \} \lesssim \exp \left( - \frac{\alpha^2}{2 \sigma^2} \right)
\end{align}
for some positive constant $ \sigma = \sigma(d) > 0 $. The claim now follows by standard arguments.

Indeed, applying the Gaussian tail estimate \eqref{eqn:gaussian-tail-bound} yields
\begin{align*}
\E \sup_{ p \in \Z^2 \cap Q_R } ( \sup_{ Q_1(p) } | \xi | - \E  \sup_{ Q_1(p) } | \xi | )
& \leq \int_0^{\infty} \P \{ \sup_{ Q_1(p) } | \xi | - \E  \sup_{ Q_1(p) } | \xi | > t ~ \text{for~some} ~ p \in \Z^2 \cap Q_R \} dt \\
& \le \delta + \sum_{ p \in \Z^2 \cap Q_R } \int_{\delta}^{\infty} \P \{  \sup_{ Q_1(p) } | \xi | > t + \E  \sup_{ Q_1(p) } | \xi | \} dt \\
& \lesssim \delta + \sigma R^d \int_{\frac{\delta}{\sigma}}^{\infty} \exp \left(- \frac{t^2}{2} \right) dt
\le \delta + R^d \frac{ \sigma^2 }{ \delta } \exp \left( - \frac{1}{2} \frac{\delta^2}{\sigma^2} \right)
\end{align*}
for all $ \delta > 0 $. In particular, we may choose $ \delta = \sigma \sqrt{2 \log R^d} \sim \sqrt{ \log R } $ to obtain
\begin{align*}
\E \sup_{p \in \Z^2 \cap Q_R } \sup_{ Q_1 ( p ) } | \xi |  
\lesssim \sup_{p \in \Z^2 \cap Q_R } \big( \E \sup_{ Q_1 ( p ) } | \xi |  \big) + ( \log R )^{ \frac{1}{2} } +  ( \log R )^{ - \frac{1}{2} }
\lesssim ( \log R )^{ \frac{1}{2} }, 
\end{align*}
where the last inequality holds if $ R $ is strictly bounded away from $ 1 $.

\medskip

\textit{Step 2 (Argument for \eqref{eq:sup-xi-small-ball}).} By $ \Z^2 $-stationarity we restrict ourselves to the case $ p = 0 $. The estimates in \eqref{eq:sup-xi-small-ball} are a direct consequence of the Borell-TIS-inequality, see e.g.~\cite[Theorem 2.1.1]{AdlerTaylor-RF-and-Geometry}, which applies since any of the two noises satifies
\begin{align}\label{eqn-borell-tis-assumption}
	\sup_{ Q_1 } | \xi | < \infty \quad { \rm almost ~ surely },
\end{align}
see argument below. Therefore, the Borell-TIS-inequality implies
\begin{align*}
\E \sup_{ Q_1 } \xi < \infty
\quad { \rm and } \quad
\P \big\{ \sup_{ Q_1 } \xi - \E \sup_{ Q_1 } \xi > t \big\}
\le \exp \left( - \frac{t^2}{2 \sigma^2} \right)
\end{align*}
with constant $ \sigma^2 = \sup_{x \in Q_1} \E \xi(x)^2 < \infty $. Since $ \xi $ is a centered Gaussian, we have $ \xi =_{ \rm law } - \xi $ so that by $ \sup_{ Q_1 } | \xi | = \sup_{ Q_1 } \max \{ \xi , - \xi \} $ we have
\begin{align*}
\P \{  \sup_{ Q_1 } | \xi | > t  \} \leq 2 \, \P \{ \sup_{ Q_1 }  \xi > t  \} ,
\end{align*}
which in particular implies
\begin{align*}
\P \{  \sup_{ Q_1 } | \xi | > \E \sup_{ Q_1 } | \xi | + t \}
\leq 2 \, \P \{  \sup_{ Q_1 }  \xi > \E \sup_{ Q_1 } | \xi | + t \}
\leq 2 \, \P \{ \sup_{ Q_1 }  \xi > \E \sup_{ Q_1 } \xi + t \} .
\end{align*}
This shows that although in disguise, the above estimate already contains \eqref{eq:sup-xi-small-ball}.

\medskip

\textit{Step 3 (Argument for \eqref{eqn-borell-tis-assumption}).} To conclude the lemma, it is left to establish \eqref{eqn-borell-tis-assumption} for our two noises. Note that this also implies the first item in \eqref{eqn:noise-assumption}. Furthermore, let us note that the claim is almost immediate for the discretized white noise. Therefore, we focus only on white noise in the presence of the UV cut-off.

By $ \Z^2 $-stationary, we may restrict ourselves to the case $ p = 0 $. We first argue that
\begin{align*}
\E | \xi(x) - \xi(y) |^2
= 2 (c(0) - c(x-y))
\leq C^2 | x - y |^{2 \alpha} ,
\end{align*}
for some constant $ C > 0 $ that will appear explicitly further down in the proof. 
Indeed, to obtain the last inequality we used the UV regularization in form of
\begin{align*}
| c(0) - c(x-y) |
\sim \int | e^{0} - \exp(i (x-y) \cdot q ) | \hat{c}(q) dq
\lesssim |x - y|^{2 \alpha} \int | q |^{2 \alpha} \hat{c}(q) dq ,
\end{align*}
and $ C^2 $ is obtained multiplying the implicit constants and $  \int | q |^{2 \alpha} \hat{c}(q) dq $.

Next, Dudley's bound, see \cite[(1.18)]{Talagrand-GC} or \cite{DudleyBound}, can be applied using the above continuity statement, i.e.
\begin{align}\label{eqn:lem-wn-sup-bound-i}
\E \sup_{B_1} | \xi |
\lesssim \int_0^{ \infty } \left( \log N(B_1, C | \cdot |^{\alpha}; \varepsilon) \right)^{\frac{1}{2}} d \varepsilon
\end{align}
It is well-known that in two space-dimensions
\begin{align*}
\left( \frac{1}{ \varepsilon } \right)^2 \leq N( B_1, | \cdot |; \varepsilon )  \leq \left( \frac{ 2 + \varepsilon }{ \varepsilon } \right)^2 ,
\end{align*}
see~\cite[Exercise 2.5.9 \& (2.47)]{Talagrand-ULB}. Therefore, the last integrand is given by
\begin{align*}
\log N(B_1, C | \cdot |^{\alpha} ; \varepsilon )
= \log N(B_1, | \cdot |;  ( \frac{ \varepsilon }{ C } )^{\frac{1}{\alpha}})
\lesssim \left\{ \begin{aligned}
 \log \frac{2 + ( \frac{ \varepsilon }{ C } )^{ \frac{1}{\alpha} } }{ ( \frac{ \varepsilon }{ C } )^{\frac{1}{\alpha}} } \quad & { \rm for } ~ \varepsilon < C \\
0 \quad & { \rm for } ~ \varepsilon \ge C,
\end{aligned}\right.
\end{align*}
so that we may estimate \eqref{eqn:lem-wn-sup-bound-i} by
\begin{align*}
\E \sup_{B_1} | \xi |
\lesssim \int_0^1 \left( \log ( 1 +  2 \varepsilon^{ - \frac{1}{\alpha} } ) \right)^{ \frac{1}{2} } d\varepsilon
\lesssim_{ \alpha } \int_1^{\infty} \left( \log ( 1 + 2 x ) \right)^{ \frac{1}{2} } \frac{ dx} { x^{ 1 + \alpha } }
\lesssim_{ \alpha } 1,
\end{align*}
which establishes \eqref{eqn-borell-tis-assumption} after covering $ Q_1 $ with finitely many balls.
\end{proof}

Equipped with Lemma~\ref{lem:wn-sup-bound}, we are ready to prove that in the supremum considered in Proposition~\ref{thm:talagrand-bound} we can restrict to finitely many curves. To this end, let us recall from Lemma~\ref{lem:paramterization-through-h1} the set of curves
\begin{align*}
\Gamma_{ L } \coloneqq \big\{ M \subseteq \R^2 ~ { \rm simply ~ connected } ~ \big| ~ \partial M ~ { \rm contained ~ in ~ the ~ square ~ lattice }, ~ | \partial M | \leq L, ~ 0 \in \partial M \big\}
\end{align*}
of which we have a good parametrization. The key idea of the next lemma is to deform admissible curves to curves contained in $ \Gamma_L $. In view of the supremum bound in Lemma~\ref{lem:wn-sup-bound} and our claim in Proposition~\ref{thm:talagrand-bound}, we can allow for a deformation of order $ L ( \log L )^{ \frac{1}{4} } $ around the boundary (which is much more space than we actually need).

\begin{lemma}[Deformation to discrete curves]\label{lem:deformation}
Given some length scale $ 1 \leq L < \infty $, we consider a set $ \Gamma $ of simply connected sets (in $ \R^2 $) with smooth boundaries; additionally assume that $ | \partial M | \leq L $ and $ 0 \in B_1 (\partial M) $ for $ M \in \Gamma $. There exists a universal constant $ 0 < c_{ \rm def } < \infty $, such that for every $ M \in \Gamma $, there exists an $ N \in \Gamma_{ c_{ \rm def } L } $ such that
	\begin{align*}
	\int | \boldsymbol{1}_{M} - \boldsymbol{1}_{ N } | \lesssim L .
	\end{align*}
\end{lemma}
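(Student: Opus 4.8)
The plan is to build $N$ by \emph{pixelating} $M$ at the unit scale and then repairing the topology along the thin features of the exterior of $M$. First I would record two elementary preliminaries. Since $M$ is simply connected with smooth boundary of finite length, $\partial M$ is a single smooth Jordan curve $\gamma$ (by simple connectedness), $\ell:=|\partial M|\le L$, and $M$ is the bounded component of $\RR^{2}\setminus\gamma$; the hypothesis $0\in B_{1}(\partial M)$ reads $\operatorname{dist}(0,\gamma)<1$. Moreover a connected curve of length $\ell$ meets at most $\lesssim\ell+1\lesssim L$ closed unit lattice squares (split it into $O(\ell)$ sub-arcs of diameter $\le\tfrac12$, each meeting at most four squares), so the tube $T:=\bigcup\{Q:\ Q\cap\gamma\neq\emptyset\}$ has $|T|\lesssim L$ and $|\partial T|\lesssim L$.

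The candidate is the pixelation $N_{0}:=\bigcup\{Q:\ Q\cap\overline M\neq\emptyset\}$, a finite union of closed unit squares. Then $M\subseteq N_{0}$ and $N_{0}\setminus\overline M\subseteq T$, so $\int|\1_{M}-\1_{N_{0}}|=|N_{0}\setminus M|\lesssim L$; every lattice edge in $\partial N_{0}$ is an edge of a square meeting $\gamma$, whence $|\partial N_{0}|\lesssim|\partial T|\lesssim L$; and $N_{0}$ is connected because $\overline M$ is. The one property that can fail is simple connectedness: any bounded component $H$ of $\RR^{2}\setminus N_{0}$ satisfies $H\subseteq\RR^{2}\setminus\overline M$, and it is separated from infinity inside the connected, unbounded open exterior $\RR^{2}\setminus\overline M$ of $\gamma$ by a ``corridor'' of width $<1$ --- otherwise the unit tube around the two walls of the corridor would fail to bridge it and $H$ would not be enclosed.

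The repair is the main obstacle. For each hole $H$ I would fix a path $p_{H}\subseteq\RR^{2}\setminus\overline M$ joining a point of $H$ to a fixed point of the unbounded component of $\RR^{2}\setminus N_{0}$, and set $\mathcal C_{H}:=\bigcup\{Q:\ Q\cap p_{H}\neq\emptyset\}$. Since $p_{H}$ avoids $\overline M$ while $N_{0}\setminus\overline M\subseteq T$, any square of $\mathcal C_{H}$ that belongs to $N_{0}$ already belongs to $T$; hence $\bigcup_{H}(\mathcal C_{H}\cap N_{0})\subseteq T$ contains at most $\lesssim L$ unit squares. Therefore $N_{1}:=N_{0}\setminus\bigcup_{H}\mathcal C_{H}$ differs from $N_{0}$ by at most $|T|\lesssim L$ in measure and by at most $4|T|\lesssim L$ in perimeter; each hole is now joined to infinity through $p_{H}\subseteq N_{1}^{c}$, and since every $\mathcal C_{H}$ is connected and reaches the unbounded component, no new bounded complementary component is created, so $N_{1}$ is hole-free. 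The delicate point --- and the reason the choice of the $p_{H}$ matters, e.g.\ taking them to be geodesics in $\RR^{2}\setminus\overline M$ so that each channel only follows thin corridors of the exterior rather than slicing through the bulk of $M$ --- is to ensure that this excision does not disconnect $N_{1}$; there is comfortable slack here, since even a somewhat wasteful implementation still spends only $O(L)$, well within the $O(L(\log L)^{1/4})$ room noted before the lemma. Finally, $\operatorname{dist}(0,\gamma)<1$ forces the origin to lie within $O(1)$ of $\partial N_{1}$, and a purely local modification of $\partial N_{1}$ inside an $O(1)$-ball about $0$ (rerouting $\partial N_{1}$ through the lattice vertex $0$, attaching a short whisker if needed) keeps it simply connected while altering area and perimeter by $O(1)\lesssim L$. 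The result is some $N\in\Gamma_{cL}$ with a universal $c$, and $\int|\1_{M}-\1_{N}|\le\int|\1_{M}-\1_{N_{0}}|+|T|+O(1)\lesssim L$.
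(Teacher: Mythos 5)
Your construction is genuinely different from the paper's, and it has one real gap, which you yourself flag but do not close: the connectedness of $N_1$ after the excision of the channels $\mathcal C_H$. Everything up to that point is sound (the pixelation $N_0$ contains $M$, its excess mass and perimeter live in the unit tube $T$ around $\partial M$, the removed squares all lie in $T$, and the argument that every complementary component of $N_1$ reaches infinity is correct). But $\Gamma_{cL}$ requires a \emph{simply connected} set, hence in particular a connected one, and removing tube squares can sever $N_1$: a square straddling $\partial M$ may contain an entire cross-section of a thin feature of $M$ (a neck or lip of width less than one), and if the exterior of $M$ has forced corridors on \emph{both} sides of such a neck — e.g.\ two pockets whose only exits run alongside the neck on opposite sides — then every unit square covering some cross-section of the neck meets one of the $p_H$'s and is deleted, disconnecting $N_1$. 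Choosing the $p_H$ to be geodesics does not rule this out, since in such configurations the corridors, and hence the channels, are forced. The ``comfortable slack'' you invoke is purely quantitative (an $O(L)$ budget inside an $O(L(\log L)^{1/4})$ allowance) and cannot repair a qualitative topological failure; what is missing is an actual argument — say, a careful routing of the channels together with local re-bridging of any severed pieces at $O(1)$ cost per hole, with a count of the holes — that produces a connected, hole-free lattice set. There are also minor loose ends ($N_1$ should be defined as the union of the surviving closed squares so that its boundary stays on the lattice, and corner-touching squares must be handled so that $\partial N_1$ is a genuine lattice curve), but those are routine.

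For comparison, the paper avoids the hole-repair problem altogether: it takes the \emph{innermost lattice curve enclosing the full tube} $B_1(\partial M)$ and lets $N$ be its interior, which is automatically simply connected, and then reroutes the curve through the origin. That route is much shorter because it never creates holes in the first place; the price is that it silently absorbs any exterior pocket of $M$ that the fattened boundary encloses, whereas your pixelate-and-repair scheme is designed precisely to carve such pockets back open. So your approach is, in spirit, more faithful to the geometry of $M$ — but as written it trades the paper's difficulty for a harder one (preserving connectedness under excision) and leaves that difficulty unresolved.
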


\begin{proof}[Proof of Lemma \ref{lem:deformation}.]
Given $ M \in \Gamma $, let us write
\begin{align*}
B_1 ( \partial M ) \coloneq \{ p \in \R^2 ~|~ { \rm dist } ( p, \partial M ) \leq 1 \}.
\end{align*}
for the tube of radius one around $ \partial M $. Now let $ \gamma $ the innermost curve fully contained in the square lattice that contains $ B_1( \partial M ) $. We denote by $ { \rm int } ( \gamma ) $ the domain enclosed by this curve. Note that this curve is contained in $ B_3 ( \partial M ) $, which contains at most order $ | \partial M | $ many points from $ \Z^2 $. This follows by a volume argument. Hence, $ { \rm Length } ( \gamma ) \lesssim L $. Furthermore, by construction we have
\begin{align*}
| \boldsymbol{1}_{ { \rm int } ( \gamma ) } - \boldsymbol{1}_{ M } | \lesssim | B_3 ( \partial M ) | \lesssim L.
\end{align*}
By redirecting the curve $ \gamma $ over finitely many lattice sides, we may ensure that $ 0 \in \gamma $. Since $ L \geq 1 $ the above bound and the estimate $ { \rm Length } ( \gamma ) \lesssim L $ remain true upon chaining the implicit constant.
\end{proof}

\paragraph{\textit{Proof of Proposition \ref{thm:talagrand-bound}.}}

We are finally ready to prove Proposition \ref{thm:talagrand-bound}. We closely follow the argument given by \cite[Theorem 1.6 \& Proposition 2.2]{DW} in the discrete situation of the RFIM. 

Note that the concentration property \eqref{eqn:talagrand-concentration} already appeared in \eqref{eqn:concentration-x-r} in the section collecting the concentration type argument. Hence, our main concern for now is the upper bound \eqref{eqn:talagrand-expectation}. We proceed in several steps.

\medskip

\textit{Step 1 (Reduction to simple closed boundary curves).} As in \cite[Proof of Theorem 1.6]{DW}, we now reduce the maximization problem in \eqref{eqn:talagrand-expectation} to simply connected sets with smooth boundarys. Additionally we need to cut-off too short or too boundary curves, which happens automatically in the discrete setting. More precisely, we argue that
\begin{equation}\label{eqn:talagrand-expectation-reduction}
\begin{aligned}
& \E \sup \left\{ \Big| \frac{1}{ | \partial M | } \int_M \xi \Big| ~\middle|~ M \subseteq B_R ~ { \rm Caccioppoli ~ set}  \right\} \\
& \qquad \lesssim \E \sup \left\{ \Big| \frac{1}{ | \partial M | } \int_M \xi \Big| ~\middle|~ \begin{matrix} M \subseteq B_R ~ { \rm simply ~ connected}, \\ \partial M ~ { \rm smooth }, 1 \leq | \partial M | \leq R^2 \end{matrix} \right\} + ( \log R )^{ \frac{1}{2} }.
\end{aligned}
\end{equation}
In view of our claim the term $ ( \log R )^{ \frac{1}{2} } $ on the right is of lower order. Note that this term is not present in the discrete setting of \cite{DW} and comes from the reduction from (possibly very rough) Caccioppoli sets to smooth sets, and from the restriction to short boundaries that is automatic on the lattice. 

First, we want to argue that we may restrict to sets  $ M $ with smooth boundary. To this end, let us start with the continuity estimate
\begin{align*}
\left| \Big| \frac{ 1 }{ | \partial M | } \int_M \xi \Big| - \Big| \frac{ 1 }{ | \partial N | } \int_N \xi \Big| \right|
&\leq \left| \frac{ 1 }{ | \partial M | } \int_N \xi - \frac{ 1 }{ | \partial N | } \int_N \xi \, \right|
+ \frac{ 1 }{ | \partial M | } \left| \int_N \xi - \int_M \xi \, \right| \\
&\lesssim \Big| \frac{  | \partial N | }{ | \partial M | } - 1 \Big| \sup \left\{ \frac{1}{ | \partial N | } \int_N \xi ~\middle|~ N \subseteq B_R, \partial N ~ { \rm smooth } \right\} \\
& \qquad + \frac{ 1 }{ | \partial M | } \int | \boldsymbol{1}_M - \boldsymbol{1}_N | \sup_{ B_R } | \xi |.
\end{align*}
Since for any Caccioppoli set $ M $, we can find a set $ N $ with smooth boundary s.t.
\begin{align*}
\int | \boldsymbol{1}_M - \boldsymbol{1}_N | \lesssim | \partial M |
\quad { \rm and } \quad
| \, | \partial M | - | \partial N | \, |  \lesssim | \partial M |,
\end{align*}
see e.g.~\cite[Theorem 3.42 with a compactly supported mollifier in its proof]{AmbrosioFuscoPallara}
the above estimate shows
\begin{equation}\label{eqn:talagrand-expectation-reduction-i}
\begin{aligned}
& \E \sup \left\{ \Big| \frac{1}{ | \partial M | } \int_M \xi \Big| ~\middle|~ M \subseteq B_R ~ { \rm Caccioppoli } \right\} \\
& \qquad \lesssim \E \sup \left\{ \Big| \frac{1}{ | \partial M | } \int_M \xi \Big| ~\middle|~ M \subseteq B_R, \partial M ~ { \rm smooth } \right\} + \E \sup_{ B_R } | \xi |.
\end{aligned}
\end{equation}
Lemma \ref{lem:wn-sup-bound} implies that the last term is of lower order in view of our claimed estimate \eqref{eqn:talagrand-expectation}.

Now we employ the same argument as \cite[Proof of Theorem 1.6]{DW} to further restrict to connected sets. To this end, we write $ M = \cup_i M_i $ where $ M_i $ are the connected components of $ M $. By \eqref{eqn:talagrand-expectation-reduction-i}, we may already assume that $ \partial M $ is smooth. Hence, the boundary segments of the $ M_i $'s do not overlap (which is trivial in the discrete setting of \cite{DW}) and we have
\begin{align*}
\sum_{ i } \vert \partial M_i \vert = \vert \partial M \vert.
\end{align*}
Therefore, we may estimate
\begin{align*}
\Big| \frac{1}{ | \partial M | } \int_{M} \xi \Big|
\leq \sum_i \frac{| \partial M_i| }{| \partial M | } \Big| \frac{1}{| \partial M_i |} \int_{ M_i } \xi \Big|
\leq \sup \left\{ \Big| \frac{1}{ | \partial N | } \int_{ N } \xi \Big| ~\middle|~ \begin{matrix} N \subseteq B_R ~ { \rm connected}, \\ \partial N ~ { \rm smooth } \end{matrix} \right\}.
\end{align*}
By a similar argument, we may further restrict the supremum to simply connected sets. If $ M $ is connected, then we may write $ M = S \setminus \cup_i H_i $ where $ S $ is simply connected and the $ H_i $'s denote the possible holes of $ M $. Hence, we may again decompose the above integral
\begin{align*}
\Big| \frac{1}{ | \partial M | } \int_{M} \xi \Big|
\leq \frac{ | \partial S| }{ | \partial M | } \Big| \frac{1}{|\partial S|} \int_{S} \xi \Big| + \sum_{i} \frac{ | \partial H_i| }{ | \partial M | } \Big| \frac{1}{|\partial H_i|}  \int_{H_i} \xi \Big|.
\end{align*}
Note that these holes may as well degenerate to points and lines, where the corresponding integrals vanish. Since the holes $ H_i $ are simply connected and by the smoothness of $ \partial M $ we also have $ | \partial M | = | \partial S | + \sum_{i} | \partial H_i | $, and we may estimate
\begin{align*}
\Big| \frac{1}{ | \partial M | } \int_{M} \xi \Big|
\leq  \sup \left\{ \Big| \frac{1}{ | \partial N | } \int_{ N } \xi \, \Big| ~\middle|~ \begin{matrix} N \subseteq B_R ~ { \rm simply ~ connected}, \\ \partial N ~ { \rm smooth } \end{matrix} \right\}.
\end{align*}
Overall, we have shown
\begin{equation}\label{eqn:talagrand-expectation-reduction-ii}
\begin{aligned}
&\E \sup \left\{ \Big| \frac{1}{ | \partial M | } \int_M \xi  \, \Big| ~\middle|~ M \subseteq B_R, \partial M ~ { \rm smooth } \right\} \\
&\qquad \leq \E \sup \left\{ \Big| \frac{1}{ | \partial N | } \int_{ N } \xi \, \Big| ~\middle|~ \begin{matrix} N \subseteq B_R ~ { \rm simply ~ connected}, \\ \partial N ~ { \rm smooth } \end{matrix} \right\},
\end{aligned}
\end{equation}
which is reminiscent of the discrete case, see \cite[(11)]{DW}.

Lastly, we want to establish that we can restrict the maximization problem on the r.h.s.~of \eqref{eqn:talagrand-expectation} to short boundary curves in the sense of $ | \partial M | \leq R^2 $ and long boundary curves in the sense that $ | \partial M | \geq 1 $. More precisely, we show that
\begin{equation}\label{eqn:talagrand-expectation-reduction-iii}
\E \sup \left\{ \Big| \frac{1}{ | \partial M | } \int_M \xi \Big| ~\middle|~ \begin{matrix} M \subseteq B_R ~ { \rm simply ~ connected}, \\ \partial M ~ { \rm smooth }, | \partial M | \leq 1 ~ { \rm or } ~ | \partial M | \geq R^2  \end{matrix} \right\} 
\lesssim \E \sup_{ B_R } | \xi | + \E \fint_{ B_R } | \xi |
\end{equation}
Let us start with a set $ M $ with long boundary curve, i.e.~$ | \partial M | \geq R^2 $. Here, we employ the estimate
\begin{align*}
\Big| \frac{1}{ | \partial M | } \int_{ M } \xi \Big| \leq \frac{1}{R^2} \int_{ B_R } | \xi | ,
\end{align*}
If on the contrary, we have $ | \partial M | \leq 1 $, we know that $ { \rm diam }( M ) \leq | \partial M | \leq 1 $, so that $ M $ is fully contained in a ball of radius $ 1 $. Combining this with the isoperimetric inequality in form of $ \frac{ | M | }{ | \partial M | } \lesssim  | \partial M | \lesssim 1 $, we obtain
\begin{align*}
\Big| \frac{1}{ | \partial M | } \int_{ M } \xi \Big|
\lesssim \frac{ | M | }{ | \partial M | } \sup_{ B_R } | \xi |
\lesssim \sup_{ B_R } | \xi |.
\end{align*}
Combining the last two equations we conclude \eqref{eqn:talagrand-expectation-reduction-iii}, which in combination with \eqref{eqn:talagrand-expectation-reduction-i}, \eqref{eqn:talagrand-expectation-reduction-ii}, and Lemma \ref{lem:wn-sup-bound} implies \eqref{eqn:talagrand-expectation-reduction}.

\medskip

\textit{Step 2 (Reduction to finitely many boundary curves).} In view of the argument for the analogous upper bound \eqref{eqn:talagrand-expectation} in \cite[Lemma 4.4]{DW} and \eqref{eqn:talagrand-expectation-reduction}, we need to show that there exists some universal constant $ C_{ 0 } > 0 $ such that
\begin{equation}\label{eqn:talagrand-one-scale-bound-tail}
\P \bigg\{  \sup \left\{ \Big| \frac{1}{L} \int_M \xi \Big| ~\middle|~ \begin{matrix} M \subseteq B_R ~ { \rm simply ~ connected}, \\ \partial M ~ { \rm smooth }, \frac{L}{2} \leq | \partial M | \leq L, 0 \in B_1( \partial M) \end{matrix}\right\} \geq C_{ 0 } ( \log L )^{ \frac{3}{4} } + t \bigg\}
\leq \exp( - \frac{1}{8\pi} t^2 )
\end{equation}
for any $ t > 0 $. Once more, we make use of the continuity estimate in Lemma \ref{lemma:cameron-martin-continuity} (with $ | \partial M | $ on the l.h.s.~replaced by $ L $) that by employing the Gaussian concentration estimate from Bogachev's book, see \cite[Theorem 4.5.6]{BogachevGM}, reduces the problem to bounding the expectation
\begin{equation}\label{eqn:talagrand-one-scale-bound-expectation}
\E \sup \left\{ \Big| \frac{1}{L} \int_{ M }  \xi \Big| ~\middle|~ \begin{matrix} M \subseteq B_R ~ { \rm simply ~ connected}, \\ \partial M ~ { \rm smooth }, \frac{L}{2} \leq | \partial M | \leq L, 0 \in B_1( \partial M) \end{matrix} \right\}
\lesssim ( \log L )^{ \frac{3}{4} } .
\end{equation}

By Lemma \ref{lem:deformation}, there exists a universal constant $ 0 < c_{ \rm def } < \infty $ such that for any of the competing sets above, we find some set $ N \in \Gamma_{ c_{ \rm def } L } $ (see the definition before Lemma \ref{lem:deformation}) such that
\begin{align*}
\frac{1}{L} \int | \boldsymbol{1}_{ N } - \boldsymbol{1}_{ M } | \lesssim 1.
\end{align*}
Since
\begin{align*}
\left| \Big| \frac{1}{L} \int_{ M }  \xi \, \Big| - \Big| \frac{1}{L} \int_{ N } \xi \, \Big| \right|
\lesssim \frac{1}{L} \int | \xi | | \boldsymbol{1}_{ M } - \boldsymbol{1}_{ M }  |
\lesssim \sup_{ B_{ 2 c_{ \rm def } L } } | \xi | \, \frac{1}{L} \int | \boldsymbol{1}_{ N } - \boldsymbol{1}_{ M } |
\lesssim \sup_{ B_{ 2 c_{ \rm def } L } } | \xi | ,
\end{align*}
Lemma \ref{lem:wn-sup-bound} on scale $ R = 2 c_{ \rm def } L $ implies
\begin{align*}
\E \sup \left\{ \Big| \frac{1}{L} \int_{ M } \xi \, \Big| ~\middle|~ \begin{matrix} M \subseteq B_R ~ { \rm simply ~ connected}, \\ \partial M ~ { \rm smooth }, \frac{L}{2} \leq | \partial M | \leq L, 0 \in B_1( \partial M) \end{matrix} \right\}
\lesssim \E \sup_{ N \in \Gamma_{ c_{ \rm def } L } } \Big| \frac{1}{L} \int_{ N }  \xi \, \Big| + (\log L)^{ \frac{1}{2} }.
\end{align*}
Now the same argument\footnote{Note that by construction the set $ \Gamma_{ cL } $ contains only sets with boundary curves on the lattice. Hence for the discretized white noise as considered in Section \ref{sec:pf-talagrand-bound}, we are exactly in the situation of \cite{DW}.} as in \cite[Lemma 4.4]{DW} and \cite[Section 3.4]{Talagrand-GC} applies: using the chaining bounds \eqref{eqn:chaining-bound} and \eqref{eqn:chaining-bound-b}, followed by the Lemma \ref{lem:paramterization-through-h1} (keeping also \eqref{eqn:upper-bound-gaussian-metric} in mind) and Lemma \ref{lem:emergernce-gamma_1-2}, we obtain
\begin{align}\label{eqn:talagrand-one-scale-bound-expectation-ii}
\begin{aligned}
& \E \sup \left\{ \Big| \frac{1}{L} \int_{ M }  \xi \, \Big| ~\middle|~ \begin{matrix} M \subseteq B_R ~ { \rm simply ~ connected}, \\ \partial M ~ { \rm smooth }, | \partial M | \leq L, 0 \in B_1( \partial M) \end{matrix} \right\} \\
& \qquad \lesssim 1 +  ( \log L )^{ \frac{1}{2} } + ( \log \log | \Gamma_{ c_{  \rm def } L } | )^{ \frac{3}{4} } \gamma_{1,2} ( H^1( \mathbb{T} ) \hookrightarrow L^2( \mathbb{T} ) ) ^{ \frac{1}{2} } .
\end{aligned}
\end{align}
Since $ | \Gamma_{ c_{ \rm def } L } | \leq 4^{ c_{ \rm def } L } $, and in view of Lemma \ref{chain-key-lemma}, this implies the claim.

\medskip

\textit{Step 3 (Conclusion).} The only thing that is left, is to use the estimate provided in Step 2 to conclude that
\begin{align*}
 \E \sup \left\{ \Big| \frac{1}{ | \partial M | } \int_M \xi \, \Big| ~\middle|~ \begin{matrix} M \subseteq B_R ~ { \rm simply ~ connected}, \\ \partial M ~ { \rm smooth }, 1 \leq | \partial M | \leq R^2 \end{matrix} \right\}  
 \lesssim ( \log R )^{ \frac{3}{4} } .
 \end{align*}
In view of what we proved in Step 1, this completes the argument for Proposition \ref{thm:talagrand-bound}.

Here, we again rely on the argument by \cite[Proposition 2.2]{DW}. For notational simplicity, let $ \Gamma_{ L, 0 }^* $ denote the set of admissible $ M $'s in the supremum in \eqref{eqn:talagrand-one-scale-bound-expectation}. Once the constraint $ 0 \in B_1 ( \partial M ) $ is replaced by $ p \in B_1 ( \partial M ) $, we write $ \Gamma_{ L, p }^* $ instead. Equipped with this notation, we can bound
\begin{align*}
& \E \sup \left\{ \Big| \frac{1}{ | \partial M | } \int_M \xi \, \Big| ~\middle|~ \begin{matrix} M \subseteq B_R ~ { \rm simply ~ connected}, \\ \partial M ~ { \rm smooth }, 1 \leq | \partial M | \leq R^2 \end{matrix} \right\} \\
& \qquad \leq \E \sup_{  p \in \Z^2 \cap Q_R } \sup_{ L \leq R^2 } \sup \left\{ \Big| \frac{1}{ | \partial M | } \int_M \xi \, \Big| ~\middle|~ M \in \Gamma_{ L, p }^*, ~ | \partial M | \geq 1 \right\} .
\end{align*}
The latter supremum is conveniently bounded by using the $ \Z^2 $-stationarity: with the constant $ C_0 > 0 $ from \eqref{eqn:talagrand-one-scale-bound-tail} we have
\begin{align*}
&\P \bigg\{ \sup_{  p \in \Z^2 \cap Q_R } \sup_{ L \leq R^2 } \sup \left\{ \Big| \frac{1}{ | \partial M | } \int_M \xi \, \Big| ~\middle|~ M \in \Gamma_{ L, p }^*, ~ | \partial M | \geq 1 \right\} \ge 2 C_{ 0 } (\log R)^{\frac{3}{4}} + t \bigg\} \\
&\qquad \le \sum_{p \in \Z^2 \cap Q_R} \sum_{2^k \le R^2} \P \bigg\{ \sup \left\{ \Big| \frac{1}{ 2^{k} } \int_M \xi \, \Big| ~\middle|~ M \in \Gamma_{2^k, p}^* \right\}  \ge C_{ 0 }  (\log 2^k)^{\frac{3}{4}} + \frac{t}{2} \bigg\} \\
&\qquad \le C_1 R^d \log (R) \exp( - \frac{t^2}{16 \pi} ) 
= \exp ( d \log(R) + \log (C_1 \log (R)) -  \frac{t^2}{32 \pi} ) ,
\end{align*}
where $ C_1 $ denotes another universal constant. Therefore, inserting $ t = C_{ 2 } ( \log R )^{\frac{3}{4}} + t' $ for some large constant $ C_{ 2 } > 0 $ absorbs the lower order terms and yields the bound
\begin{align*}
\P \bigg\{ \sup_{  p \in \Z^2 \cap Q_R } \sup_{ L \leq R^2 } \sup \left\{ \Big| \frac{1}{ | \partial M | } \int_M \xi \, \Big| ~\middle|~ M \in \Gamma_{ L, p }^*, ~ | \partial M | \geq 1 \right\} \ge C_{ 3 } (\log R)^{\frac{3}{4}} + t \bigg\}
\leq \exp( - \frac{ \alpha^2 }{ 16 \pi } ) 
\end{align*}
for another universal constant $ C_{ 3 } > 0 $.
The argument to conclude is standard again, we have
\begin{align*}
	\sup_{  p \in \Z^2 \cap Q_R } \sup_{ L \leq R^2 } \sup \Big\{ \Big| \frac{1}{ | \partial M | } \int_M \xi \, \Big| ~\Big|~ M \in \Gamma_{ L, p }^*, ~ | \partial M | \geq 1 \Big\}
	\leq C_{ 3 } (\log R)^{\frac{3}{4}} + \int_{ C_{ 3 } (\log R)^{\frac{3}{4}} }^{ \infty } \exp( - \frac{t^2}{ 32 \pi } ) dt,
\end{align*}
which implies the claim. \hfill $\blacksquare$

\section{Proof of the main result}

In this section, we prove the main result of this paper, namely Theorem~\ref{thm:mr}. We begin by showing that any minimizer of \eqref{eqn:in-energy} is an approximate minimizer of the perimeter in the sense of $ \eta $-minimality, see Definition \ref{def:eta-minimizer}.

\begin{lemma}[$ \eta $-minimality of local minimizers]\label{proposition:eta-minimality}
	For any $ 0 < p_0 < 1 $ there exists a constant $ C_{ \eta } = C_{ \eta } ( p_0 ) < \infty $ such that with probability at least $ p_0 $ the following holds: let $ m $ be a local minimizer of the energy \eqref{eqn:in-energy} for some $ 0 < \varepsilon < 1 $, then $ m $ is an $ \eta_R(x) $-minimizer of the perimeter in $ B_R(x) $, with
	\begin{align}\label{eqn:def-eta-r-p}
	\eta_R(x) \leq \varepsilon C_{ \eta } ( \log | x |_{+} )^{ \frac{1}{2} } ( \log R )^{ \frac{3}{4} }
	\end{align}
	for $ x \in \R^2 $ and $ R \ge 2 $ provided the latter is less than $ 1 $.
\end{lemma}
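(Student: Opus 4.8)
The plan is to decouple the statement into a purely deterministic estimate — turning local minimality of \eqref{eqn:in-energy} into $\eta$-minimality of the perimeter with $\eta\lesssim\varepsilon S_R$ — and a purely stochastic step, which is essentially a direct invocation of Proposition~\ref{prop:concentration-in-space}.

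\emph{Deterministic step.} Fix a realization of $\xi$, a local minimizer $m$, a point $x\in\R^2$, a radius $R\ge2$, an open ball $B\subseteq B_R(x)$, and a competitor $m'\colon\R^2\to\{-1,+1\}$ with $m=m'$ on $\R^2\setminus B$. Writing $A=\{m=+1\}$, $A'=\{m'=+1\}$ and $D^{+}=A\setminus A'$, $D^{-}=A'\setminus A$, I would note that $m-m'=2(\1_{D^{+}}-\1_{D^{-}})$ and that $D^{\pm}$ are Caccioppoli sets contained in $B\subseteq B_R(x)$. By the definition of $S_R$ (after recentering at $x$; the sign of the shift is immaterial since $|x|_{+}=|{-x}|_{+}$) one gets $|\int_{D^{\pm}}\xi|\le|\partial D^{\pm}|\,S_R(\xi(\cdot-x))$, so that
\[
\varepsilon\,\Bigl|\int_B\xi\,(m-m')\Bigr|\le 2\varepsilon\bigl(|\partial D^{+}|+|\partial D^{-}|\bigr)\,S_R(\xi(\cdot-x)).
\]
The function $v:=\tfrac12(m-m')$ takes values in $\{-1,0,1\}$ and vanishes a.e.\ outside $B$; the coarea formula gives $|\nabla v|(\R^2)=|\partial D^{+}|+|\partial D^{-}|$, and since $v\equiv0$ on the open set $\R^2\setminus\overline B$ one has $|\nabla v|(\R^2)=|\nabla v|(\overline B)\le\tfrac12\bigl(\int_{\overline B}|\nabla m|+\int_{\overline B}|\nabla m'|\bigr)$ by the triangle inequality for the total-variation measures. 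Plugging this into the rearranged local minimality inequality $\int_{\overline B}|\nabla m|\le\int_{\overline B}|\nabla m'|+\varepsilon\int_B\xi(m-m')$ gives
\[
\bigl(1-\varepsilon S_R(\xi(\cdot-x))\bigr)\int_{\overline B}|\nabla m|\le\bigl(1+\varepsilon S_R(\xi(\cdot-x))\bigr)\int_{\overline B}|\nabla m'|.
\]
Thus, whenever $\varepsilon S_R(\xi(\cdot-x))\le\tfrac12$, $m$ is an $\eta$-minimizer on $B_R(x)$ in the sense of Definition~\ref{def:eta-minimizer} with $\eta\le 4\,\varepsilon S_R(\xi(\cdot-x))$.

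\emph{Stochastic step.} Let $X$ be the random variable of Proposition~\ref{prop:concentration-in-space}, so $\E X\lesssim1$ and $\P\{X-\E X\ge t\}\le2\exp(-t^2/8\pi)$. Given $p_0\in(0,1)$, I would choose $t_0=t_0(p_0)$ with $2\exp(-t_0^2/8\pi)\le1-p_0$; on the event $\{X\le\E X+t_0\}$, which has probability at least $p_0$ and does not depend on $m$, one has $S_R(\xi(\cdot-x))\le C_0\,(\log|x|_{+})^{1/2}(\log R)^{3/4}$ for all $x\in\R^2$ and $R\ge2$, with $C_0:=\E X+t_0$. Setting $C:=8C_0$ and $\eta_R(x):=\varepsilon C\,(\log|x|_{+})^{1/2}(\log R)^{3/4}$, the hypothesis $\eta_R(x)<1$ forces $\varepsilon S_R(\xi(\cdot-x))\le\eta_R(x)/8<\tfrac12$, so the deterministic step applies and shows that $m$ is an $\eta$-minimizer on $B_R(x)$ with $\eta\le4\,\varepsilon S_R(\xi(\cdot-x))\le\eta_R(x)/2\le\eta_R(x)$; since an $\eta'$-minimizer is also an $\eta$-minimizer for every $\eta\ge\eta'$, the lemma follows.

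\emph{Main obstacle.} The argument is short, and the only point that needs genuine care is the deterministic passage from the random-field term to the perimeters: one must split $m-m'$ into its two Caccioppoli level sets $D^{\pm}$ and apply $S_R$ to each separately, and then verify that $|\partial D^{+}|+|\partial D^{-}|=|\nabla v|(\overline B)$ is controlled by $\int_{\overline B}|\nabla m|+\int_{\overline B}|\nabla m'|$, with no spurious boundary mass on $\partial B$ — which is exactly what $v\equiv0$ off $\overline B$ guarantees. Fixing the constant $C$ in terms of $p_0$ through the sub-Gaussian tail of $X$ is then routine.
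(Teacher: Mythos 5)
Your proof is correct and follows essentially the same route as the paper: a deterministic step that bounds the field term by $\varepsilon\,S_R$ times the total perimeter of the symmetric-difference sets and feeds this into the local minimality inequality, combined with the high-probability bound on $S_R(\xi(\cdot-x))$ from Proposition~\ref{prop:concentration-in-space}. The only (immaterial) difference is that you control $|\partial D^{+}|+|\partial D^{-}|$ via the coarea formula for $v=\tfrac12(m-m')$ together with the triangle inequality for total-variation measures, whereas the paper bounds each indicator's perimeter separately using $\int|\nabla (fg)|\le\int|\nabla f|+\int|\nabla g|$; your variant is in fact sharper by a factor of two in the intermediate constant, which is irrelevant for the statement.
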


\begin{proof}[Proof of Lemma \ref{proposition:eta-minimality}] By Proposition \ref{prop:concentration-in-space}, we may assume that with probability at least $ p_0 $, the estimate
\begin{align}\label{eqn:proof-mr-01}
S_R(x) = \sup \left\{ \Big| \frac{1}{ | \partial M | } \int_M \xi \, \Big| ~\middle|~ M \subseteq B_R(x) ~ { \rm Caccioppoli~set } \right\}
\lesssim_{ p_0 } ( \log | x |_{+} )^{ \frac{1}{2} } ( \log R )^{ \frac{3}{4} }
\end{align}
holds for every point $ x \in \R^2 $ and scale $ R \ge 2 $. We use this to argue that $ m $ is an $ \eta_R(x) $-minimizer of the perimeter in $ B_R(x) $, see Definition \ref{def:eta-minimizer} provided the r.h.s.~of \eqref{eqn:proof-mr-01} including the implicit constant is small.

Indeed, fix $ R $ and $ x $ and consider an open ball $ B \subseteq B_R(x) $. From the minimality of $ m $ we know that
\begin{align}\label{eqn:proof-mr-02}
\int_{ \overline{ B } } | \nabla m | \leq \int_{ \overline{ B } } | \nabla m' | + \varepsilon \int_{ B } \xi ( m - m' )
\end{align}
for every competitor $ m' : \R^2 \rightarrow \{ \pm 1 \} $ such that $ m = m' $ in $ \R^2 \setminus B $. We can split the field term on the r.h.s.~of \eqref{eqn:proof-mr-02} and use that $ \{ m = 1, \, m' = -1  \} \cup \{ m = -1, \, m' = 1 \} \subseteq B $, together with the definition of $ S_R(x) $ in \eqref{eqn:proof-mr-01} to obtain
\begin{align*}
\int_{ B } \xi ( m - m' )
&= 2 \int_{ B } \xi \boldsymbol{1}_{ \{ m = 1, \, m' = -1 \} } - 2 \int_{ B } \xi \boldsymbol{1}_{ \{ m = -1, \, m' = 1 \} } \\
&\leq 2 S_R(x) \Big( \int_{ \R^2 } | \nabla \boldsymbol{1}_{ \{ m = 1, \, m' = -1 \} } | +  \int_{ \R^2 }  | \nabla \boldsymbol{1}_{ \{ m = - 1, \, m' = 1 \} } | \Big).
\end{align*}
Recall that for any smooth functions $ 0 \leq f, g \leq 1 $ we have $ \int | \nabla f g  | \leq \int | \nabla f | + \int | \nabla g | $, so that by approximation a la \cite[Proposition 3.38]{AmbrosioFuscoPallara}, we learn that for every strictly larger open ball $ \overline{ B } \subseteq B' $
\begin{align*}
\int_{ B' } | \nabla \boldsymbol{1}_{ \{ m = 1, \, m' = -1 \} } |
\leq \int_{ B' } | \nabla \boldsymbol{1}_{ \{ m = 1 \} }  | + \int_{ B' } | \nabla \boldsymbol{1}_{ \{ m' = -1 \} }  |
\leq \frac{1}{2} \int_{ B' } | \nabla m | + \frac{1}{2} \int_{ B' } | \nabla m' | .
\end{align*}
By continuity of the measures from above (and a symmetric argument for the second term), this implies
\begin{align*}
\int_{ \R^2 } | \nabla \boldsymbol{1}_{ \{ m = 1, \, m' = -1 \} } | + \int_{ \R^2 } | \nabla \boldsymbol{1}_{ \{ m = - 1, \, m' = 1 \} } |
\leq \int_{ \overline{B} } | \nabla m | + \int_{ \overline{B} } | \nabla m' | .
\end{align*}
Hence, the above estimate implies
\begin{align*}
\int_{ B } \xi ( m - m' ) \leq 2 S_R(x) \Big( \int_{ \overline{B} } | \nabla m | + \int_{ \overline{B} } | \nabla m' | \Big).
\end{align*}
Inserting this in \eqref{eqn:proof-mr-02} yields
\begin{align*}
\int_{ \overline{B} } | \nabla m | \leq \frac{ 1 + 2 \varepsilon S_R(x) }{ 1 - 2 \varepsilon S_R(x) } \int_{ \overline{B} } | \nabla m' |.
\end{align*}
Provided $ 2 \varepsilon S_R(x) \leq \frac{1}{2} $, we may estimate $ \frac{ 1 + 2 \varepsilon S_R(x) }{ 1 - 2 \varepsilon S_R(x) } \leq 1 + 8 \varepsilon S_R(x) $, which in view of \eqref{eqn:proof-mr-01} implies the claim.
\end{proof}

We have now collected all ingredients to prove Theorem \ref{thm:mr}. The only thing that is left is to set up the Campanato iteration that we explained in Section \ref{sec:strategy-of-proof}. Its implementation is somewhat technical compared to its core idea:

Focusing on one point $ x \in J_{ m } $ for the moment, we will repeatedly apply Proposition \ref{prop:iteration-step} on scales $ \theta^k R $ for $ \theta = \frac{ 1 }{ 16 } $, $k=0, \ldots, K$ to approximate $ J_{ m } $ by line segments. In that way, we manage to transfer the initial flatness on scale $ R $ down to the smaller scales $ \theta^k R $ as illustrated in Figure~\ref{fig:step31}.

To conclude, we need to apply two post-processing steps. Firstly, $K$ has to be chosen in such a way that the smallest scale $ \theta^K R $ is of order $1$, so that we can relate the normal of the approximating line segment to the average normal $ \bar\nu_1 ( x ) $. This is achieved by Remark~\ref{rmk:strong-exc-bound}. Secondly, we have to perform a similar iteration around a second base point $ y \in J_{m}$. To relate the two iterations, the largest scale $ R $ must be comparable to the distance of $x$ and $y$, such that the two normals obtained on scale $ R $ are comparable. The latter is achieved by performing one more iteration step around the midpoint $z \in J_m$ between $x$ and $y$.

This procedure is carried out in Step 3.1 in the proof of Theorem~\ref{thm:mr} below, where the difference of the two normals is estimated by the cumulated error from the iteration steps via the triangle inequality.

\begin{figure}
\begin{tikzpicture}[scale=3,line cap=round,line join=round,>=triangle 45,x=1cm,y=1cm]
\draw [shift={(-0.530449571128189,-0.5995825470896056)},line width=1pt,color=blue]  plot[domain=1.1625154523356056:2.2521039531784046,variable=\t]({1*0.7719120437323149*cos(\t r)+0*0.7719120437323149*sin(\t r)},{0*0.7719120437323149*cos(\t r)+1*0.7719120437323149*sin(\t r)});
\draw [shift={(0.24276932834762824,0.9413162115949204)},line width=1pt,color=blue]  plot[domain=4.201368691064911:4.877893863252976,variable=\t]({1*0.9543572152604437*cos(\t r)+0*0.9543572152604437*sin(\t r)},{0*0.9543572152604437*cos(\t r)+1*0.9543572152604437*sin(\t r)});
\draw [line width=1pt,color=gray] (0.06846557909837847,0.003011411800459385) circle (0.1111111111111111cm);
\draw [line width=1pt,color=gray] (0.06846557909837847,0.003011411800459385) circle (0.3333333333333333cm);
\draw [line width=1pt,color=gray] (0.06846557909837847,0.003011411800459385) circle (1cm);
\draw [shift={(-1.5068300010443225,0.6948442036115884)},line width=1pt,color=blue]  plot[domain=4.432612115647871:5.326806770432807,variable=\t]({1*0.8503688210535949*cos(\t r)+0*0.8503688210535949*sin(\t r)},{0*0.8503688210535949*cos(\t r)+1*0.8503688210535949*sin(\t r)});
\draw [shift={(-4.210755731687733,27.48118143929112)},line width=1pt,color=blue]  plot[domain=4.8786193994580485:4.902748433000367,variable=\t]({1*27.865290267939613*cos(\t r)+0*27.865290267939613*sin(\t r)},{0*27.865290267939613*cos(\t r)+1*27.865290267939613*sin(\t r)});
\draw [line width=1pt,color=gray] (-0.0394100506618949,0.029629591981871137)-- (0.17870386101763322,-0.010888236102872196);
\draw [line width=1pt,color=gray] (-0.24460621212628153,0.11745425627093736)-- (0.4,0);
\draw [line width=1pt,color=gray] (-0.9298536317280512,0.060966163934298434)-- (1.061687857689121,0.11924180933489588);
\draw [line width=1pt] (-0.7747407660133021,0.132653764056327) circle (0.1111111111111111cm);
\draw [line width=1pt] (-0.7747407660133021,0.132653764056327) circle (0.3333333333333333cm);
\draw [line width=1pt] (-0.7747407660133021,0.132653764056327) circle (1cm);
\draw [line width=1pt] (-0.8773367144396307,0.08999523174627505)-- (-0.6670832252869577,0.16014072345712535);
\draw [line width=1pt] (-1.064792261901897,-0.03160591098261664)-- (-0.4432217787711678,0.16738519375711322);
\draw [line width=1pt] (-1.741651864610051,-0.12245979381629618)-- (0.21464975404898542,-0.01262665035276039);
\begin{scriptsize}
\draw [fill=black] (-0.7747407660133021,0.132653764056327) circle (.5pt);
\draw[color=black] (-0.77,0.2) node {$x$};
\draw [fill=gray] (0.06846557909837847,0.003011411800459385) circle (.5pt);
\draw[color=gray] (0.07,-0.06) node {$y$};
\draw[color=blue] (-1.2,-0.2) node {$J_m$};
\end{scriptsize}
\end{tikzpicture}

	\caption{Illustration of the Campanato iteration around the points $x$ and $y$; approximation of $J_m$ through line segments on different scales.}\label{fig:step31}
\end{figure}

\begin{proof}[Proof of Theorem \ref{thm:mr}.]
By Lemma \ref{proposition:eta-minimality}, there exists a constant $ C_{ \eta } = C_{ \eta } (p_0) \in ( 1 , \infty ) $ such that with probability at least $ p_0 $, any local minimizer $ m $ is an $ \eta_R(x) $-minimizer of the perimeter in $ B_R(x) $, with
\begin{align}\label{eqn:def-eta-r-p-repeat}
\eta_R(x) \leq \varepsilon C_{ \eta } ( \log | x |_{+} )^{ \frac{1}{2} } ( \log R )^{ \frac{3}{4} },
\end{align}
for $ x \in \R^2 $ and $ R \ge 2 $ provided the latter is less than $ 1 $, cf.~\eqref{eqn:def-eta-r-p}.

\medskip

\textit{Step 1 (Jump set is a single arc).} Next, we capitalize on $ \eta_R $-minimality in the ball $ B_R $ (centered around the origin) for $ R =  \exp( c_{ \rm exp } \varepsilon^{ - \frac{4}{3} } ) $. Provided the constant $ 0 < c_{ \rm exp } < \infty $ is small enough (so that the r.h.s.~of \eqref{eqn:def-eta-r-p} is sufficiently small), we claim that there exists a $ \frac{1}{16} \leq \theta_{ 0 } \leq 1 $, such that the jump set of $ m $ in $ B_{ \theta_{ 0 } R } $ consists of a single arc. 

First, we argue that we can choose $ \theta_{ 0 } $ such that the jump set of $ m $ intersects $ \partial B_{ \theta_{ 0 } R } $ in either zero or two points. This follows from Lemma \ref{lem:few-jumps}. Its assumptions are satisfied by Proposition \ref{prop:approx-halfspace-l1} for a suitable choice of $ 0 < c_{ \rm exp }  < \infty $.

Next, we show that $ m $ can't have bubbles, i.e.~connected components of $ \{ m = 1 \} $ or $ \{ m = -1 \} $, fully contained in $ B_R $. Indeed, the presence of a bubble would contradict \eqref{eqn:proof-mr-01}: assume that our minimizer $ m $ has a bubble supported in a set $ M \subseteq B_R $. Now let $ m' $ denote the configuration where the bubble is removed. By local minimality, we know that
\begin{align*}
\int_{ \overline{ B_R } } | \nabla m | - \varepsilon \int_{B_R} \xi m 
\leq \int_{ \overline{ B_R } } | \nabla m' | - \varepsilon \int_{B_R} \xi m' ,
\end{align*}
which, since $ \frac{1}{2} ( m - m'  ) = \pm \boldsymbol{1}_M $, translates into
\begin{align*}
| \partial M |
= \int_{ \overline{ B_R } } | \nabla m | - \int_{ \overline{ B_R } } | \nabla m'  |
\leq \varepsilon \int_{ B_R } \xi ( m - m' )
\leq 2 \varepsilon \left| \int_M \xi \, \right|.
\end{align*}
This contradicts \eqref{eqn:proof-mr-01} provided $ R = \exp( c_{ \rm exp } \varepsilon^{ - \frac{4}{3} } ) $ for a possibly smaller choice of $ 0 < c_{ \rm exp }  < \infty $.

Summarizing the above, the jump set of $ m $ intersects $ \partial B_{ \theta_{ 0 } R } $  in either zero or two points, and contains no bubbles. Hence, the jump set in $ B_{ \theta_{ 0 } R } $ either consists of a single arc with start and endpoint on $ \partial B_{ \theta_{ 0 } R } $, or is empty.

\medskip

\textit{Step 2 (Distance on the jump set).} We now argue that for any two points $ x, y \in B_{ \frac{1}{64} R } $ in the jump set of $ m $ we have
\begin{align}\label{eqn:dist-comparable}
{ \rm dist }_{ J_m } (x,y) \sim | x - y |,
\end{align}
provided that $ R = \exp( c_{ \rm exp }  \varepsilon^{ - \frac{4}{3} } ) $ as in Step 1. Note carefully that the implicit constants in \eqref{eqn:dist-comparable} improve if $ c_{ \rm exp }  $ becomes smaller, since they only rely on the fact that $ S_R(0) $ is made small enough by the choice of $ c_{ \rm exp }  $.

To this end, let us fix two such points $ x, y $. We note that due to Remark \ref{rmk:eta-minimizer}, the same argument that we used in Step 1 also applies on the ball $ B_{ 16 | x - y | }( x ) \subseteq B_R $: we find a $ \frac{1}{16} \leq \theta_{ 1 } \leq 1 $, such that the jump set of $ m $ in $ B_{ 16 \theta_{ 1 } | x - y | } (x) $ consists of a single arc. Therefore, since $ x,y \in B_{ 16 \theta_{ 1 } | x - y | } (x) $, we can estimate
\begin{align*}
| x - y | \leq { \rm dist }_{ J_m } ( x,y) \leq  { \rm dist }_{ J_m \cap  B_{ 16 \theta_{ 1 } | x - y | }( x ) } ( x, y ) \leq \frac{1}{2} \int_{ B_{ 16 \theta_{ 1 } | x - y | }( x )  } | \nabla m |.
\end{align*}
We now appeal to the $ \eta_{R}(0) $-minimality of $ m $ in $ B_R $ (and therefore in $B_{16 |x-y|(x)}$, see Remark~\ref{rmk:eta-minimizer}), and invoke the density bounds of Lemma~\ref{lem:density-bounds} (in its more precise form of \eqref{eqn:surface-density-estimate-precise}) on the last integral to obtain 
\begin{align*}
	| x - y | \leq { \rm dist }_{ J_m } ( x,y) \leq (1+\eta_R(0)) |\partial B_{1}| 16 \theta_{ 1 } |x-y| \lesssim |x-y|,
\end{align*}
provided that $\eta_R(0)$ (due to the choice of $c$) is small enough. This implies the claim.

\medskip

\textit{Step 3 (Iteration).} We now come to the core argument. Recall $ R = \exp( c_{ \rm exp } \varepsilon^{ - \frac{4}{3} } ) $ and fix $\frac{1}{16}< \theta_{ 0 } <1$ as in Step 1. We may assume that the jump set of $ m $ in $ B_{ \theta_{ 0 } R} $ consists of a single arc; otherwise the jump set is empty and the conclusion of Theorem~\ref{thm:mr} holds trivially.

Since we are now in the situation that the jump set is a single arc (without bubbles on smaller scales), the variational problem \eqref{eqn:average-normal} always admits the unique solution 
\begin{align*}
\bar \nu_1 ( x ) = \frac{  \int_{ B_1(x) } \nabla m }{ \big| \int_{ B_1(x) } \nabla m \big| }.
\end{align*}
Indeed, the integral $\int_{ B_1(x) } \nabla m = \int_{B_1(x)} \nu |\nabla m|$, which (up to rotating by $\frac{\pi}{2}$) amounts to integrating the tangent vector to the curve $J_m \cap B_1(x)$, vanishes if and only if the curve is closed. The latter is excluded by Step 1.

We now fix a universal constant $0<\delta<1$ (to be chosen later) and take two points $ x, y \in B_{ \delta R} $ in the jump set of $ m $.

\medskip

\textit{Step 3.1 (The case $ | x - y | \gtrsim 1 $)}. The main work is to be done in the case $ | x - y | \geq 16^2 $ that we treat now. The condition is such that the iteration below makes sense.

We may pick a point $ z $ in $ J_m $ such that
\begin{align*}
{ \rm dist }_{ J_m }( x, z ) = { \rm dist }_{ J_m }( y, z ) = \frac{1}{2} { \rm dist }_{ J_m }(x,y) .
\end{align*}
By Step 2, $ { \rm dist }_{ J_m }(x,y) $ is comparable to $ | x - y | $, so that there exists a universal constant $ M > 0 $ such that $ B_{ | x - y | }(x), B_{ | x - y | }(y) \subseteq B_{ M | x - y | } ( z ) $. We can now appeal to Proposition \ref{prop:approx-halfspace-l1} on $ B_{ 16 M | x - y | } ( z ) \subseteq B_{ R } $ (the inclusion is guaranteed by choosing $\delta$ sufficiently small at this point), followed by Proposition \ref{prop:iteration-step} that asserts that
\begin{align*}
\frac{ 1 }{ ( M | x - y | )^2 } \int_{ B_{ M | x - y | }(z) } | m - m_{ \rm line } | \lesssim ( \eta_{ | x - y | } ( z ) )^{ \frac{1}{2} }
\end{align*}
for some configuration $ m_{ \rm line } $ with jump set being a line. To the price of a worse constant, we can pass this smallness to the balls around $ x $ and $ y $, that is
\begin{align}\label{eqn:iteration-01}
\max \left\{ \frac{ 1 }{  | x - y |^2 } \int_{ B_{ | x - y | }(x) } | m - m_{ \rm line } |,
\frac{ 1 }{  | x - y |^2 } \int_{ B_{ | x - y | }(y) } | m - m_{ \rm line } | \right\}
\lesssim ( \eta_{ | x - y | } ( z ) )^{ \frac{1}{2} }.
\end{align}

We now repeatedly apply Proposition \ref{prop:iteration-step} on balls centered around $ p \in \{ x, y \} $ to find configurations $ m_{ \theta^k | x - y |, p } $ with jump sets being a line and corresponding normal $ \nu_{ \theta^k | x - y | }( p ) $ such that
\begin{align}\label{eqn:iteration-02}
| \nu_{ \theta^{ k + 1 } | x - y | }( p ) - \nu_{ \theta^k | x - y | }( p ) |
\lesssim \left\{ \begin{aligned}
& ( \eta_{ \theta^{ k -	1 } | x - y | }( p ) )^{ \frac{1}{2} } \quad  & { \rm for } ~ k \geq 1 \\
& ( \eta_{ | x - y | }( z ) )^{ \frac{1}{2} }  & { \rm for } ~ k = 0
\end{aligned}\right.
\end{align}
and $ \theta = \frac{1}{16} $, see Figure~\ref{fig:step31}. We can additionally achieve that $ \nu_{ | x - y | } ( x ) = \nu_{ | x - y | } ( y ) $. This follows by an induction.

We start with the case $ k = 0 $, and observe that Proposition \ref{prop:iteration-step} yields the existence of $ m_{ \theta | x - y |, p } $ with associated normal $ \nu_{ \theta | x - y | }( p ) $ such that
\begin{align}\label{eqn:iteration-03}
\frac{ 1 }{ ( \theta | x - y | )^2 } \int_{ B_{ \theta | x - y | } ( p ) } | m - m_{ \theta | x - y | , p } | \lesssim ( \eta_{ | x - y | }( p ) )^{ \frac{1}{2} }
\end{align}
and (already inserting \eqref{eqn:iteration-01})
\begin{align*}
| \nu_{ \theta | x - y | }( p ) - \nu_{ | x - y | }( p ) |
\lesssim \frac{ 1 }{ | x - y |^2 } \int_{ B_{ | x - y | } ( p ) } | m - m_{ \rm line } |
\lesssim ( \eta_{ | x - y | } ( z ) )^{ \frac{1}{2} },
\end{align*}
which shows \eqref{eqn:iteration-02} for $ k = 0 $. Additionally, by \eqref{eqn:iteration-01} we know that $ \nu_{ | x - y | } ( x ) $ and $ \nu_{ | x - y | } ( y ) $ agree with the normal to the line segment approximating $ J_m $ in $ B_{ M | x - y | } ( z ) $, and thus are equal.

Next, we claim that there exists $ \nu_{ \theta^k | x - y | }(p) $'s with $ k \ge 1 $ as in \eqref{eqn:iteration-02} together with configurations $ m_{ \theta^k | x - y | , p } $ such that
\begin{align}\label{eqn:iteration-04}
\frac{ 1 }{ ( \theta^{k+1} | x - y | )^2 } \int_{ B_{ \theta^{k+1} | x - y | } ( p ) } | m - m_{ \theta^{k+1} | x - y | , p } |
\lesssim ( \eta_{ \theta^{ k } | x - y | }( p ) )^{ \frac{1}{2} },
\end{align}
For $ k = 1 $, \eqref{eqn:iteration-04} follows Proposition \ref{prop:iteration-step} on scale $ R = \theta | x - y | $ that we can apply due to \eqref{eqn:iteration-03}. Furthermore, the proposition shows that
\begin{align*}
| \nu_{ \theta^2 | x - y | }(p) - \nu_{ \theta | x - y | }(p) |
\lesssim \frac{ 1 }{ ( \theta | x - y | )^2 } \int_{ B_{ \theta | x - y | } ( p ) } | m - m_{ \theta | x - y | , p } |,
\end{align*}
which by \eqref{eqn:iteration-03} shows that \eqref{eqn:iteration-04} holds for $ k = 1 $.

To pass from scale $ \theta^k | x - y | $ to $ \theta^{ k + 1 } | x - y | $ we proceed similarly, again Proposition \ref{prop:iteration-step} can be applied on scale $ R = \theta^k | x - y | $ due to the induction hypothesis \eqref{eqn:iteration-04}. This shows that \eqref{eqn:iteration-04} is true for the next scale $ \theta^{ k + 1 } | x - y | $. Furthermore, we obtain together with \eqref{eqn:iteration-04} that
\begin{align*}
| \nu_{ \theta^{ k + 1 } | x - y | }( p ) - \nu_{ \theta^k | x - y | }( p ) |
& \lesssim \frac{ 1 }{ ( \theta^k | x - y | )^2 } \int_{ B_{  \theta^k | x - y | } ( p ) } | m - m_{ \theta^k | x - y | , p } | \\
& \lesssim ( \eta_{ \theta^{ k - 1 } | x - y | } ( p ) )^{ \frac{1}{2} } \phantom{ \int_{B_R} }
\end{align*}
so that also \eqref{eqn:iteration-02} is true on the next level.

We now use \eqref{eqn:iteration-02} up to $ k \leq N $, where $ \theta^{N+1} | x - y | <  1 \leq \theta^N | x - y | $, to estimate
\begin{align}\label{eqn:iteration-07}
| \nu_{ \theta^N | x - y | }(p) - \nu_{ | x - y | }(p)  |
\leq \sum_{ k = 1 }^N | \nu_{ \theta^k | x - y | } - \nu_{ \theta^{k-1} | x - y | } (p)  |
\lesssim ( \eta_{ | x - y | }( z ) )^{ \frac{1}{2} } + \sum_{ k = 2 }^{ N } ( \eta_{ \theta^{ k - 2 } | x - y | } ( p ) )^{ \frac{1}{2} }.
\end{align}
Since $ N \sim \log | x - y | $, we have
\begin{align*}
\sum_{ k = 2 }^{ N } ( \log \theta^{k-2} | x - y | )^{ \frac{3}{8} }
\leq \sum_{ j = 3 }^{ N + 1  } ( \log \theta^{ -j } )^{ \frac{3}{8} } 
\sim \sum_{ k = 3 }^{ N + 1  } k^{ \frac{3}{8} }
\lesssim N^{ 1 + \frac{3}{8} } |
\sim ( \log | x - y | )^{ 1 + \frac{3}{8} }.
\end{align*}
Hence, also in view of \eqref{eqn:def-eta-r-p}, \eqref{eqn:iteration-07} turns into
\begin{equation}\label{eqn:iteration-06}
\begin{aligned}
| \nu_{ \theta^N | x - y | }(p) - \nu_{ | x - y | }(p)  |
& \lesssim \left( \varepsilon ( \log | z |_{+} )^{ \frac{1}{2} } ( \log | x - y | )^{ \frac{3}{4} } \right)^{ \frac{1}{2} }
+ \left( \varepsilon ( \log | p |_{+} )^{ \frac{1}{2} } ( \log | x - y | )^{ 2 + \frac{3}{4} } \right)^{ \frac{1}{2} } \\
& \lesssim \left( \varepsilon ( \log ( | x |_{+} + | y |_{ + } ) )^{ \frac{1}{2} } ( \log | x - y | )^{ \frac{11}{4} } \right)^{ \frac{1}{2} } ,
\end{aligned}
\end{equation}
where the implicit constant depends on $ C_{ \eta } $ from \eqref{eqn:def-eta-r-p} and hence also on $ p_0 $.

Finally, it is left to observe that since $ \theta^{N+1} | x - y | \leq 1 \leq \theta^N | x - y | $ by Remark \ref{rmk:strong-exc-bound} and \eqref{eqn:strong-exc-bound} therein, we have
\begin{align*}
\int_{ B_1(p) } | \nu_{ \theta^N | x - y | }(p) - \nu |^2 | \nabla m | \lesssim \eta_{ \theta^{ N - 2 } | x - y | } ( x ) \lesssim \eta_{ 1 } ( x ),
\end{align*}
so that in view of the density estimates \eqref{eqn:surface-density-estimate} and the definition of $ \bar \nu_1 ( p ) $, we have
\begin{align*}
| \nu_{  \theta^N | x - y | } (p)  - \bar \nu_1 ( p ) |^2
&\lesssim \int_{ B_1(p) } | \nu_{ \theta^N | x - y | }(p) - \nu |^2 | \nabla m | +  \int_{ B_1(p) } | \bar \nu_1 (p) - \nu |^2 | \nabla m | \\
&\lesssim \int_{ B_1(p) } | \nu_{ \theta^N | x - y | }(p) - \nu |^2 | \nabla m | \lesssim \eta_1( x ). \phantom{ \int_{ B_1 } }
\end{align*}
Inserting the last equation in \eqref{eqn:iteration-06}, we obtain
\begin{align}\label{eqn:iteration-05}
\begin{aligned}
| \bar \nu_{ 1 }(p) - \nu_{  | x - y | }(p)  |
&\leq | \bar \nu_{ 1 }(p) - \nu_{ \theta^N | x - y | }(p)  |
+ | \nu_{ \theta^N | x - y | }(p) -  \nu_{ | x - y | }(p)  | \\
&\lesssim \left( \varepsilon ( \log ( | x |_{+} + | y |_{ + } ) )^{ \frac{1}{2} } ( \log | x - y | )^{ \frac{11}{4} } \right)^{ \frac{1}{2} }.
\end{aligned}
\end{align}
Note that by construction $  \nu_{ | x - y | }(x) =  \nu_{ | x - y | }(y) $, so that \eqref{eqn:iteration-05} implies the claim.

\medskip

\textit{Step 3.2 (The case $ | x - y | \lesssim 1 $)}. We now handle the degenerate case that $ x $, $ y $ are close in the sense that $ | x - y | \lesssim 1 $ uniformly and Step 3.1 does not apply.

Again, we may pick a point $ z $ in $ J_m $ such that
\begin{align*}
{ \rm dist }_{ J_m }( x, z ) = { \rm dist }_{ J_m }( y, z ) = \frac{1}{2} { \rm dist }_{ J_m }(x,y)  .
\end{align*}
Since by Step 2, $ { \rm dist }_{ J_m }(x,y) $ is comparable to $ | x - y | $, which in this step of the proof is assumed to be of order one, we have $ B_{ 1 }(x), B_{ 1 }(y) \subseteq B_{ M } ( z ) $ for some universal constant $ M > 0 $. Again, we can appeal to Proposition \ref{prop:approx-halfspace-l1}, this time on $ B_{ 16 M } ( z ) \subseteq B_{ R } $ (which is again guaranteed by choosing $ \delta $ sufficiently small), followed by Proposition \ref{prop:iteration-step} to conclude
\begin{align*}
\frac{ 1 }{ M^2 } \int_{ B_{ M }(z) } | m - m_{ \rm line } | \lesssim ( \eta_{ M } ( z ) )^{ \frac{1}{2} }
\end{align*}
for some configuration $ m_{ \rm line } $ with jump set being a line. In addtion, Remark \ref{rmk:strong-exc-bound} yields the estimate
\begin{align*}
\frac{ 1 }{ M } \int_{ B_{ M }(z) } | \nu - \nu_{ M } ( z)  |^2 | \nabla m | \lesssim \eta_{ M  } ( z )
\end{align*}
for the normal $ \nu_{ M } ( z) $ to the jump set of $ m_{ \rm line } $. Using $ B_1(x), B_1(y) \subseteq B_{ M } (z) $, and the minimality of $ \bar \nu_1(x) $, $ \bar \nu_1 (y ) $, we obtain the estimate
\begin{align*}
\int_{ B_1(x) } | \nu - \bar \nu_1 (x) |^2 | \nabla m | + \int_{ B_1(y) } | \nu - \bar \nu_1 (y) |^2 | \nabla m | 
\leq \int_{ B_{ M } (z) } | \nu - \nu_{ M } (z) |^2 | \nabla m |
\lesssim M \, \eta_{ M } ( z ).
\end{align*}
Therefore, another application of the density bounds, \eqref{eqn:def-eta-r-p}, $ M \lesssim 1 $ and the fact that $ | x - y | \lesssim 1 $ yield
\begin{align*}
| \bar \nu_1 ( x ) - \bar \nu_1 ( y ) |^2
&\lesssim \int_{ B_1(x) } | \nu - \bar \nu_1 (x) |^2 | \nabla m | + \int_{ B_1(x) } | \nu - \bar \nu_1 (x) |^2 | \nabla m |  \\
&\lesssim \eta_{ M } ( z )
\lesssim \varepsilon ( \log | z |_{+} )^{ \frac{1}{2} } ( \log | x-y |_{+} )^{ \frac{3}{4} },
\end{align*}
as desired.
\end{proof}

Finally, we address how Theorem~\ref{thm:mr} depends on the probability $ p_0 $. Our method of proof is not optimized in that regard, but we may still get some information on this dependence:

\begin{remark}\label{rmk:dependence-on-p0}
	Reviewing the above proof, the only probabilistic input \eqref{eqn:def-eta-r-p-repeat} comes from Lemma \ref{proposition:eta-minimality}, more specifically estimate \eqref{eqn:def-eta-r-p} therein. In view of \eqref{eqn:concentration-in-space-tail} in Proposition \ref{prop:concentration-in-space}, we know that the event $\{$ \eqref{eqn:def-eta-r-p-repeat} does not hold $\}$ has a Gaussian tail (w.r.t.~$ C_\eta $). This constant $ C_{ \eta } $  enters the deterministic argument, which requires the r.h.s.~of \eqref{eqn:def-eta-r-p-repeat} to be small, thus posing a (random) constraint $\varepsilon \leq \mathcal{E}$ that also enters the estimate \eqref{eqn:iteration-06}. While our work manages to give a qualitative description of the random variable $\mathcal{E}$, our argument is not optimized to obtain quantitative information on $\mathcal{E}$ through Proposition~\ref{prop:concentration-in-space}. 
	\end{remark}

\section*{Acknowledgments}
We thank Felix Otto and Jonas Hirsch for many helpful discussions. We would also like to thank the referee for many helpful suggestions on improving the presentation of the article.

A large part of the work was done while the authors were affiliated with the Max Planck Institute for Mathematics in the Sciences; we thank the MPI for the support and warm hospitality.

TR gratefully acknowledges partial support by the U.S.\ National Science Foundation under award DMS-2453121.

\appendix

\section{Stability of minimizers}\label{app:minimality}

Here we carry out the adaptation of the argument from \cite[Theorem 21.14 \& Remark 21.17]{Maggi} to the setting of Lemma \ref{lem:implicit-argument}, to show that
\begin{align*}
	m = \lim_{ k \to \infty } m_k
	\quad \text{is a minimizer of the perimeter in } B_1 .
\end{align*}

To this end, we pick some $ m' : B_1 \rightarrow \{ -1, +1 \} $ of finite variation such that $ \{ m \neq m' \} $ is contained in some compact subset $ K \subset B_1 $. By standard measure theoretic arguments, see \cite[Remark 3.8 \& Example 1.63]{AmbrosioFuscoPallara}, we can find some open ball $ K \subset U \subsetneq B_1 $ such that $ \int_{ \partial U } | \nabla m ' | = 0 $.

Next, we consider the sequence of competitors $ m_k' = m ' \boldsymbol{1}_{ K } + m_k \boldsymbol{1}_{ \R^2 \setminus K } $. By construction we have $ m_k ' \to m $ in $ L^1 ( B_1 ) $. Additionally, we may assume by another compactness argument that $ \nabla ( m ' \boldsymbol{1}_{ K } + m_k \boldsymbol{1}_{ \R^2 \setminus K } ) \to \nabla m' $ weakly as measures. Using that the limiting measure satisfies $  \int_{ \partial U } | \nabla m ' | = 0 $, we thus learn from Portmanteau's theorem that
\begin{align*}
	\lim_{ k \to \infty } \int_{ U } | \nabla ( m ' \boldsymbol{1}_{ K } + m_k \boldsymbol{1}_{ \R^2 \setminus K } ) |
	= \int_{ U } | \nabla m' | .
\end{align*}
On the other hand, from the $ \eta_k $-minimality of $ m_k $ and the lower semicontinuity of the perimeter functional under $ L^1 $-convergence we obtain

\begin{align*}
	\int_{ U } | \nabla m | 
	\leq \liminf_{ k \to \infty } \int_{ U } | \nabla m_k | 
	&\leq \liminf_{ k \to \infty } \Big( ( 1 + \eta_k ) \int_{ \overline B_1 } | \nabla m_k ' | - \int_{ \overline B_1 \setminus U } | \nabla m_k |  \Big)  \\
	&= \liminf_{ k \to \infty } \Big( \int_{ U } | \nabla m_k ' | + \eta_k  \int_{ \overline B_1 } | \nabla m_k ' | \Big),
\end{align*}
where in the last step we used that $m_k = m_k'$ in $U^c$. 
Since $ \eta_k \to 0 $, the last two identities imply
\begin{align*}
	\int_{ U } | \nabla m | 
	\leq \int_{ U } | \nabla m ' | .
\end{align*}
Since $ \nabla m = \nabla m' $ outside $ U $, this also holds for $ U $ replaced by $ B_1 $. This establishes that $ m $ is a minimizer of the perimeter, as claimed in the proof of Lemma \ref{lem:implicit-argument}.

\section*{Data availability}
No datasets were generated or analyzed during the current study.

\section*{Competing interests}
The authors have no competing interests to declare that are relevant to the content of this article.

\bigskip

\end{document}